\def\draft{0} 
\def\CR{0} 

\documentclass[11pt]{article}
\usepackage{amsfonts,amsmath,amssymb,amsthm,boxedminipage,color,url,fullpage}
\usepackage{enumerate}
\usepackage{enumitem}
\usepackage[numbers]{natbib} 
\usepackage[pdfstartview=FitH,colorlinks,linkcolor=blue,filecolor=blue,citecolor=blue,urlcolor=blue]{hyperref} 
\usepackage[labelfont=bf]{caption}
\usepackage{subcaption}
\usepackage{aliascnt}
\usepackage{cleveref}
\usepackage{xspace}
\usepackage{graphicx}

\newcommand{\remove}[1]{}
\newcommand{\Draft}[1]{\ifnum\draft=1\texttt{ #1} \fi}
\newcommand{\Full}[1]{\ifnum\CR=0 #1 \fi}
\newcommand{\Ffootnote}[1]{\Full{\footnote{#1}}}




\ifnum\draft=1
    \newcommand{\authnote}[2]{{\bf [{\color{red} #1's Note:} {\color{blue} #2}]}}
\else
    \newcommand{\authnote}[2]{}
\fi


\newcommand{\sdotfill}{\textcolor[rgb]{0.8,0.8,0.8}{\dotfill}} 

\newenvironment{algorithm}{\begin{algo}}{\vspace{-\topsep}\end{algo}}




\newcommand{\aka} {also known as\ }
\newcommand{\resp}{resp.,\ }
\newcommand{\ie}  {i.e.,\ }
\newcommand{\eg}  {e.g.,\ }

\newcommand{\wrt} {with respect to\ }

\newcommand{\cf}{{cf.,\ }}

\newcommand{\set}[1]{\ens{#1}}
\newcommand{\paren}[1]{\left(#1\right)}

\newcommand{\floor}[1]{\left \lfloor#1 \right \rfloor}

\newcommand{\eqdef}{:=}

\newcommand{\N}{{\mathbb{N}}}

\newcommand{\F}{{\cal F}}

\newcommand{\zo}{\{0,1\}}
\newcommand{\zn}{{\zo^n}}

\newcommand{\xor}{\oplus}

\newcommand{\eps}{\varepsilon}

\newcommand{\la}{\gets}

\newcommand{\poly}{\operatorname{poly}}

\newcommand{\MathAlg}[1]{\mathsf{#1}}






\renewcommand{\cref}{\Cref}

\ifnum\CR=0
\theoremstyle{nicetheorem}

\newtheorem{theorem}{Theorem}[section]

\newaliascnt{lemma}{theorem}
\newtheorem{lemma}[lemma]{Lemma}
\aliascntresetthe{lemma}
\crefname{lemma}{Lemma}{Lemmas}

\newaliascnt{claim}{theorem}
\newtheorem{claim}[claim]{Claim}
\aliascntresetthe{claim}
\crefname{claim}{Claim}{Claims}

\newaliascnt{corollary}{theorem}
\newtheorem{corollary}[corollary]{Corollary}
\aliascntresetthe{corollary}
\crefname{corollary}{Corollary}{Corollaries}

\newaliascnt{construction}{theorem}

\aliascntresetthe{construction}
\crefname{construction}{Construction}{Constructions}

\newaliascnt{fact}{theorem}
\newtheorem{fact}[fact]{Fact}
\aliascntresetthe{fact}
\crefname{fact}{Fact}{Facts}

\newaliascnt{proposition}{theorem}
\newtheorem{proposition}[proposition]{Proposition}
\aliascntresetthe{proposition}
\crefname{proposition}{Proposition}{Propositions}

\newaliascnt{conjecture}{theorem}

\aliascntresetthe{conjecture}
\crefname{conjecture}{Conjecture}{Conjectures}

\newaliascnt{definition}{theorem}
\newtheorem{definition}[definition]{Definition}
\aliascntresetthe{definition}
\crefname{definition}{Definition}{Definitions}

\newaliascnt{remark}{theorem}
\newtheorem{remark}[remark]{Remark}
\aliascntresetthe{remark}
\crefname{remark}{Remark}{Remarks}
\fi

\newaliascnt{proto}{theorem}

\newtheorem{proto}[proto]{Protocol}

\aliascntresetthe{proto}
\crefname{proto}{protocol}{protocols}

\newaliascnt{algo}{theorem}
\newtheorem{algo}[algo]{Algorithm}
\aliascntresetthe{algo}
\crefname{algo}{algorithm}{algorithms}

\newaliascnt{expr}{theorem}
\newtheorem{expr}[expr]{Experiment}
\aliascntresetthe{expr}
\crefname{experiment}{experiment}{experiments}

\newaliascnt{notation}{theorem}

\aliascntresetthe{notation}
\crefname{notation}{Notation}{Notations}

%


\def\FullBox{$\Box$}
\def\qed{\ifmmode\qquad\FullBox\else{\unskip\nobreak\hfil
\penalty50\hskip1em\null\nobreak\hfil\FullBox
\parfillskip=0pt\finalhyphendemerits=0\endgraf}\fi}

\def\qedsketch{\ifmmode\Box\else{\unskip\nobreak\hfil
\penalty50\hskip1em\null\nobreak\hfil$\Box$
\parfillskip=0pt\finalhyphendemerits=0\endgraf}\fi}






\renewcommand{\Pr}{{\mathrm {Pr}}}


\newcommand{\concat}{\circ}

\newcommand{\ens}[1]{\{#1\}}
\newcommand{\size}[1]{\left|#1\right|}

\newcommand{\bsize}[1]{\bigr|#1\bigr|}

\newcommand{\Uni}{{\mathord{\mathcal{U}}}}


\newcommand{\prob}[1]{\mathsf{\textsc{#1}}}

\newcommand{\SD}{\prob{SD}}

\newcommand{\cM}{{\cal{M}}}
\newcommand{\cL}{{\cal{L}}}



\newcommand{\ppt}{{\sc pptm}\xspace}

\newcommand{\cH}{{\cal{H}}}

\newcommand{\cS}{\mathcal{S}}
\newcommand{\cU}{\mathcal{U}}
\newcommand{\cV}{\mathcal{V}}
\newcommand{\cD}{\mathcal{D}}
\newcommand{\cR}{\mathcal{R}}
\newcommand{\cT}{\mathcal{T}}

\newcommand{\cG}{\mathcal{G}}
\newcommand{\cF}{\mathcal{F}}

\newcommand{\cY}{\mathcal{Y}}


\newcommand{\hlen}{\ell}

\newcommand{\cu}{{\cal{U}}}
\newcommand{\cs}{{\cal{S}}}

\newcommand{\Dc}{{\MathAlg{D}}}
\newcommand{\Bad}{\ensuremath{\operatorname{BAD}}\xspace}

\newcommand{\PP}{\mathcal{PP}}
\newcommand{\ADW}{\mathcal{ADW}}
\newcommand{\adw}{\MathAlg{adw}}

\newcommand{\GGM}{\ensuremath{\mathcal{GGM}}\xspace}
\newcommand{\JPTCon}{\ensuremath{\mathcal{JPT}}\xspace}

\newcommand{\ab}{{\overline{a}}}
\newcommand{\qb}{{\overline{q}}}
\newcommand{\cd}{{\cal{D}}}

\newcommand{\Ab}{{\overline{A}}}
\newcommand{\Qb}{{\overline{Q}}}
\newcommand{\Atb}{{\overline{A'}}}
\newcommand{\Qtb}{{\overline{Q'}}}
\newcommand{\mD}{\MathAlg{D}}

\newcommand{\hcD}{{\widehat{\Dc}}}
\newcommand{\hcT}{{\widetilde{\Dc}}}
\newcommand{\vb}{{\overline{v}}}
\newcommand{\wb}{{\overline{w}}}
\newcommand{\RandFunc}{{\F(\cU,\cV)}}
\newcommand{\unbounded}{\infty}

\newcommand{\cc}{z}
\newcommand{\adwparam}{z}
\newcommand{\m}{m}
\newcommand{\mv}{\overline{\m}}
\newcommand{\yv}{\overline{y}}
\newcommand{\fv}{\overline{f}}
\newcommand{\piv}{\overline{\pi}}
\newcommand{\adww}[1]{a_{#1}}

\newcommand{\g}{g}
\newcommand{\gv}{\overline{\g}}

\newcommand{\rand}{\mathsf{r}}
\newcommand{\eval}{\mathsf{e}}

\newcommand{\evalManyWise}[3]{\eval^{(#1)}}

\title{Hardness-Preserving Reductions via Cuckoo Hashing\footnote{This is the final draft of this paper. The full
		version was published in the Journal of Cryptology 
		\cite{BermanHKN19}. An extended abstract of this work appeared in the Theory of Cryptography Conference (TCC) 2013 \cite{BermanHKN13}.}\Draft{\\{\small \sc Working Draft: Please Do Not Distribute}}}

\author{ Itay Berman\thanks{MIT Computer Science and Artificial Intelligence
    Laboratory (CSAIL). Email: \texttt{itayberm@mit.edu}. Most of this work was
    done while the author was in the School of Computer Science, Tel Aviv
    University. Supported in part by NSF Grants CNS-1413920 and CNS-1350619, and
    by the Defense Advanced Research Projects Agency (DARPA) and the U.S. Army
    Research Office under contracts W911NF-15-C-0226 and
    W911NF-15-C-0236. }~\footnotemark[4] \and Iftach Haitner\thanks{School of
    Computer Science, Tel Aviv University. Email:
    \texttt{iftachh@tauex.tau.ac.il}.}~\thanks{Research supported in part by Check
    Point Institute for Information Security and the (I-CORE) program (Center
    No.\ 4/11) of the Planning and Budgeting Committee. Preparation of the journal version was supported by  by ERC starting grant 638121.}  \and Ilan
  Komargodski\thanks{Cornell Tech, New York, USA. Email:
    \texttt{komargodski@cornell.edu}. Supported in part by a Packard Foundation
    Fellowship. Most of this work done while being a Ph.D student at the
    Weizmann Institute of Science, supported in part by a grant from the I-CORE
    Program (Center No.\ 4/11) of the Planning and Budgeting Committee and the
    Israel Science Foundation.} \and Moni Naor\thanks{Department of Computer
    Science and Applied Mathematics, Weizmann Institute of Science, Rehovot
    76100, Israel. Email: \texttt{moni.naor@weizmann.ac.il}. Research supported
    in part by a grant from the I-CORE Program (Center No.\ 4/11) of the
    Planning and Budgeting Committee and the Israel Science
    Foundation. Incumbent of the Judith Kleeman Professorial Chair.}  }

\begin{document}
\sloppy
\maketitle
\begin{abstract}
The focus of this work is \emph{hardness-preserving} transformations of  somewhat limited pseudorandom functions families (PRFs)  into ones with more versatile  characteristics. Consider the problem of \emph{domain extension} of pseudorandom functions: given a PRF that takes as input elements of some domain $\cU$, we would like to come up with a PRF over a larger domain. Can we do it with little work and without significantly impacting the security of the system? One approach is to first hash the larger domain into the smaller
one and then apply the original PRF. Such a reduction, however, is vulnerable to a ``birthday attack": after $\sqrt{\size{\cU}}$ queries to the resulting PRF,  a collision (\ie two distinct inputs having
the same hash value) is very likely to occur. As a consequence, the resulting PRF is \emph{insecure} against an attacker making this number of queries.

In this work we show how to go beyond the aforementioned birthday attack barrier by replacing the above simple hashing approach with a variant of  \textit{cuckoo hashing}, a hashing paradigm that resolves collisions in a table by using two hash functions and two tables, cleverly assigning each element to one
of the two tables. We use this approach to obtain: (i) a domain extension method that requires {\em just two calls} to the original PRF, can withstand as many
queries as the original domain size, and has a distinguishing probability that is exponentially small in the amount of non-cryptographic work; and (ii) a {\em security-preserving}
reduction from non-adaptive to adaptive PRFs.
\end{abstract}


\section{Introduction}\label{sec:intro}
The focus of this work is \emph{hardness-preserving} transformations of somewhat
limited pseudorandom functions families (PRFs) into ones with more versatile
characteristics.  Examples of such somewhat limited families include those with
\emph{small} domain or those that can withstand only \emph{non-adaptive} (\aka
static) attacks, in which the attacker chooses its queries ahead of time, before
seeing any of the answers. In contrast, less limited families might have
\emph{large} domain or be secure against \emph{adaptive} (dynamic) attacks, in
which the attacker's queries might be chosen as a function of all previous
answers.


A common paradigm, first suggested by Levin \cite[\S 5.4]{Levin87}, for increasing the usability and security of a PRF, is to ``hash'' the inputs into a smaller domain \emph{before} applying
the PRF. This approach was originally suggested in order to achieve ``PRF domain extension" (using a short, \eg fixed,
input length PRF to get a variable-length PRF); more recently, it was used to transform non-adaptive PRFs into adaptive ones \cite{BermanH15}. Such reductions, however, are
vulnerable to the following ``birthday attack": after $\sqrt{\size{\cU}}$ queries to the resulting PRF, where $\cU$ is the hash function range, a collision
(\ie two distinct inputs having the same hash value) is very likely to occur. Such collisions are an obstacle to the indistinguishability of the
PRF, since in a random function we
either do not expect to see a collision at all (if the range is large enough) or expect to see fewer collisions. Hence, the resulting PRF is \emph{insecure}
against an attacker making this number of queries.

In this work we study variants of the above hashing approach to go beyond the
birthday attack barrier. At a high-level, our approach, which can be traced back
to \citet{Siegel04}, is based on
applying a dictionary data structure\footnote{In this context a dictionary is a data structure used
for maintaining a set of elements while supporting membership queries.}, in which the locations accessed in the search of an element are determined by its value and some fixed random string (\ie the same string is used for all elements), and not on values seen during the search. 
Now to do the conversion to domain extension we assign random values to all locations (by the underlying PRF). We think of the large domain as the universe from which the elements of the dictionary are taken.  The resulting value of the extended function at point $x$  will be some (simple) function of the values assigned to the
locations accessed during the search for $x$. For instance, one can view Levin's construction above as an instance of this framework, where the fixed random string describes a hash function from large domain to a smaller-size set $\cU$, and the PRF, whose domain is $\cU$, assigns random values for $\size{\cU}$
locations. The distinguishing probability of the resulting scheme is the distinguishing probability of the underlying PRF {\em plus} the probability of failure of the dictionary (which in  Levin's construction is determined by the ``birthday paradox").
The cost of the extension is related to the worst-case search time of the
dictionary (which in Levin's construction is a single invocation of the hash
function).

We focus on constructions based on \textit{cuckoo hashing}:  a hashing paradigm typically used for resolving hash collisions in a table by using two hash functions and two tables, assigning each element to one of the
two tables, and enabling lookup using only two queries (see \cref{sec:intro:Cuckoo}).
We use this paradigm to present a new PRF domain extension
method that requires \emph{just two calls} to the original PRF, can withstand as many queries as the original domain size, and has a distinguishing probability that is exponentially small in the amount of non-cryptographic work.
We also obtain a \emph{security-preserving reduction} from non-adaptive to adaptive PRFs, an
improvement upon the recent result of Berman and Haitner~\cite{BermanH15}.

Before stating our results, we discuss in detail pseudorandom functions and cuckoo hashing.

\subsection{Pseudorandom Functions}\label{sec:intro:PRF}
Pseudorandom function families (PRFs), introduced by \citet*{GoldreichGoMi86}, are function families that cannot be distinguished from a family of \emph{truly}
random functions by an efficient distinguisher given an oracle access to a random member of the family. PRFs have an extremely important role in
cryptography, allowing parties who share a common secret key to send secure messages, identify themselves, and authenticate messages
\cite{GoldreichGM84,Luby96}. They have many other applications as well, and can be used in just about any setting that requires a random function provided as a black-box
\cite{BellareG89, BlumEGKN94, ChorFNP00, Goldreich86a, LubyR88, Ostrovsky89}. Different PRF constructions, whose security is based on
different hardness assumptions, are known in the literature. The construction most relevant to this work is the one of \cite{GoldreichGoMi86}, hereafter the \GGM construction, which uses
a length-doubling pseudorandom generator (and thus can be based on the existence of one-way functions \cite{HastadImLeLu99}).

We use the following definitions: an efficiently computable  function family ensemble $\F= \set{\F_n}_{n\in \N}$ is a $(q,t,\eps)$-PRF, if (for large enough $n$) a $q(n)$-query oracle-aided algorithm (distinguisher) of running time $t(n)$, getting access to a random function from the family, distinguishes between
$\F_n$ and the family of all functions (with the same input/output domains), with probability at most $\eps(n)$. $\F$ is a  \emph{non-adaptive} $(q,t,\eps)$-PRF if it is only required to be secure against non-adaptive distinguishers (\ie ones that prepare all their queries in advance). Finally, $\F$ is a $t$-PRF if $q$ is only limited by $t$ and $\eps =1/t$.

We also make use of the information-theoretic analog of a $t$-PRF, known as a
$t$-wise independent family (see
\cref{def:pairwiseHash}).

\subsection{Cuckoo Hashing and Many-wise Independent Hash Function}\label{sec:intro:Cuckoo}
Cuckoo hashing, introduced by \citet{PaghR04}, is an efficient technique for
constructing dynamic dictionaries. Such data structures are used to maintain a
set of elements, while supporting membership queries as well as insertions and
deletions of elements. Cuckoo hashing maintains such a dynamic dictionary by
keeping two tables of size only slightly larger than the number of elements to
be inserted, and two hash functions mapping the elements into cells of those
tables. It then applies a clever algorithm for placing at most a single element
in each cell. Each membership query requires just two memory accesses (in the
worst case) and they are determined by the hash functions. Many variants of
cuckoo hashing have been proposed since its introduction, and extensive
literature has been devoted to its analysis (\cf
\cite{FotakisPSS05,DietzfelbingerW07,KirschMW09,FriezeMM11,ArbitmanNS10}).

\citet{PaghP08} used ideas in the spirit of cuckoo hashing to construct efficient many-wise independent hash functions. Let $\cH$, $\cG$ and $\F$ be function families from $\cD$ to $\cS$, from $\cD$ to $\cR$ and from $\cS$ to $\cR$ respectively, where $\cR$ is a group with operation $\oplus$. Define the function family $\PP(\cH,\cG,\cF)$ from $\cD$ to $\cR$ as
\begin{align*}
\PP(\cH,\cG,\cF) = (\cF\circ\cH) \oplus (\cF\circ\cH) \oplus \cG,
\end{align*}
where $\F_1 \circ \F_2$, for function families $\F_1$ and $\F_2$, is the function family whose members are the elements of $\F_1 \times \F_2$ and $(f_1,f_2)(x)$ is defined by $f_1(f_2(x))$ ($\F_1 \xor \F_2$ is analogously defined). In other words, given $f_1, f_2 \in \cF, h_1, h_2 \in \cH$ and $g \in \cG$, design a function $\PP_{f_1,f_2,h_1,h_2,g}(x) = f_1(h_1(x)) \oplus f_2(h_2(x)) \oplus g(x)$. \citet{PaghP08} showed that when the families $\cH$ and $\cG$ are of ``high enough" independence, that is, roughly $(c\cdot\log \size{\cS})$-wise independent, then the family $\PP(\cH,\cG,\Pi)$ is $O(\size{\cS}^{-c})$-indistinguishable from random by a $\size{\cS}$-query, \emph{non-adaptive} distinguisher, where $\Pi$ is the set of all functions from $\cS$ to $\cR$. Note that the security of the resulting family goes well beyond the birthday attack barrier: it is indistinguishable from random by an attacker making $\size{\cS}\gg \sqrt{\size{\cS}}$ queries.

\citet{AumullerDW14} (building on the work of \citet{DietzfelbingerW03}) strengthen the result of \cite{PaghP08} by using more sophisticated
hash functions $\cH$ and $\cG$ (rather than the $O(\log \size{\cS})$-wise
independent that \cite{PaghP08} used). Specifically, for a given $z\geq 0$,
\citet{AumullerDW14} constructed a function family $\ADW_z(\cH, \cG, \Pi)$
that is $O(\size{\cS}^{-(z+1)})$-indistinguishable
from random by a $\size{\cS}$-query, \emph{non-adaptive} distinguisher, where
$\Pi$ is the set of all functions from $\cS$ to $\cR$.\footnote{The $\ADW$'s function family is in fact more complicated than the above simplified description. See \cref{sec:InstantADW} for the formal definition.}
The idea to use more sophisticated hash
functions, in the sense that they require less combinatorial work, already
appeared in previous works, \eg the work of \citet[\S 5.4]{ArbitmanNS10}.

In \cref{sec:GeneralFrameworkMain} we take the above results a step further,
showing that they hold also for \emph{adaptive}
distinguishers.\Ffootnote{\label{fn:adaptiveVSnonadaptive}Note that in some
  cases an adaptive adversary is a more powerful distinguisher than a
  non-adaptive one. For example, when trying to distinguish between a truly
  random function and a random involution (permutations where the cycle length
  is at most 2), there exists an adaptive distinguisher that will succeed with
  very high probability by asking two queries while any non-adaptive
  distinguisher will fail with very high probability (see
  \cite{KaplanNR09,NaorR02}).} Our approach for this transformation has many
predecessors. For instance, the work of \citet{NaorR99}, and of
\citet{JetchevOS12}.  Furthermore, it turns out that by using the above function
family with a pseudorandom function $\F$ instead of a truly random function from
$\Pi$, namely the family $\PP(\cH,\cG,\F)$
(or $\ADW_z(\cH,\cG,\F)$), we get a pseudorandom function that is superior to
$\F$ (the actual properties of $\PP(\cH,\cG,\F)$ are determined by the
properties of $\F$ and the choice of $\cH$ and $\cG$). This understanding is the
main conceptual contribution of this paper, and the basis for the results
presented below.

We note that the works of \citet{PaghP08},  \citet{DietzfelbingerW03}, and \citet{AumullerDW14} have gone almost
unnoticed in the cryptography literature so far.\footnote{\citet{PaghP08} did notice this connection, and in particular mentioned the connection of their work to that of   \citet{BellareGK99}.} In this work we apply, in a black-box manner, the analysis of
\cite{PaghP08} and of  \cite{AumullerDW14} in cryptographic settings.

\subsection{Our Results}\label{sec:intro:OurResult}
We use a construction inspired by cuckoo hashing to improve upon two PRF reductions: PRF domain extension and non-adaptive to adaptive PRF.

\remove{For
example, a CBC-MAC using a function $f\colon\zn\mapsto\zn$ for an input $x=(x_1, x_2, \ldots, x_m) \in \zo^{mn}$ is the following tag:
\begin{align*}
f'(x) = f(f( \ldots f(f(x_1)\oplus x_2)\oplus\ldots\oplus x_{m-1}) \oplus x_m).
\end{align*}}

\subsubsection{PRF Domain Extension}\label{sec:Intro:DomainExtension}
PRF domain extensions use PRFs with ``small" domain size to construct PRFs with larger (or even unlimited) domain size. These extensions reduce the
cost of a single invocation of the PRF and increase its usability. Domain extension methods are typically measured by the security of the resulting PRFs, and by the
number of calls the resulting PRF makes to the underlying PRF.

Among the known domain extension techniques are
CBC-MAC and PMAC (a survey on their security can be found in
\cite{Nandi10}).  The number of calls made by these constructions to the
underlying (small domain) PRF can be as small as two. Assuming that the
underlying PRF is a random function over $\zn$, 
then  the resulting family is $(q,\unbounded,O(q^2/2^n))$-PRF (\ie the $\unbounded$ in the
second parameter means that the distinguisher's running time is unlimited).  A second technique is  the Feistel or Bene{\u{s}} transformations (\eg \cite{AiolloV96,Patarin04,Patarin08}, a survey of which can be found in \cite{Patarin10b}). The Bene\u{s} based construction makes $8$ calls to the underlying PRF and is $(q,\unbounded,O(q/2^n))$-PRF, whereas a $5$-round Feistel based construction (which makes $5$ calls to the underlying PRF) is
$(q,\unbounded,O(q/2^n))$-PRF.\footnote{This is by no means an exhaustive list of all
domain extension constructions that achieve beyond-birthday security.}

Our cuckoo hashing based function family (see below) is $(q,\infty,O(q/2^n))$-PRF and makes only two calls to the underlying PRF. Moreover, our construction can extend the domain size to any fixed $\poly(n)$ length, unlike the aforementioned constructions, which only double the domain size.\footnote{Of course, one can use these constructions to extend the domain to any $\poly(n)$ length by a recursive construction. This, however, will increase the number of calls to the underlying PRF by a polynomial factor.}

\begin{theorem}[informal]\label{thm:IntroPPDomainExtension}
Let $k\leq n$, let $\cH$ and $\cG$ be efficient $k$-wise independent function families mapping strings of length $\ell(n)$ to strings of length $n$, and let $\Pi$
be the family of all functions from $\zn$ to $\zn$. Then, the family $\PP(\cH,\cG,\Pi)$, mapping strings of length $\ell(n)$ to strings of length $n$, is a
$(q,\infty,q/2^{\Omega(k)})$-PRF, for $q\leq 2^{n-2}$.
\end{theorem}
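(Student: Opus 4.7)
The plan is to adapt the analysis of \citet{PaghP08} (which handles the non-adaptive case) and lift it to the adaptive setting by exploiting the fact that the underlying function family is $\Pi$ (truly random functions). The starting point is a structural analysis of the ``cuckoo graph'' induced by the queries: for a set $S = \{x_1, \ldots, x_q\}$, this is the bipartite multigraph with two vertex sets indexed by $[2^n]$ and edges $\{(h_1(x_i), h_2(x_i))\}_{i \in [q]}$.

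\textbf{Step 1 (conditional uniformity via acyclicity).} First I would show that for any fixed $h_1, h_2$ such that the cuckoo graph on $S$ is acyclic (a forest), the outputs $\{\PP_{f_1,f_2,h_1,h_2,g}(x_i)\}_{i \in [q]}$ are jointly uniform and independent over the random choice of $f_1, f_2 \getsr \Pi$. The key algebraic fact is that when the graph is a forest, the $\F_2^n$-linear forms $\{f_1(h_1(x_i)) \oplus f_2(h_2(x_i))\}_{i \in [q]}$ in the random vectors $\{f_1(u)\}_u \cup \{f_2(v)\}_v$ are linearly independent, hence jointly uniform; XOR-ing the $g(x_i)$ values only shifts each coordinate, preserving joint uniformity.

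\textbf{Step 2 (probability of a bad cuckoo graph).} Next I would establish the combinatorial bound: for any fixed set $S$ of $q \leq 2^{n-2}$ distinct inputs, $\Pr_{h_1,h_2 \getsr \cH}[\text{cuckoo graph of } S \text{ contains a cycle}] \leq q/2^{\Omega(k)}$. The argument enumerates potential simple cycles by length: a cycle of length $2\ell$ involves $2\ell$ inputs and imposes $2\ell$ collision constraints of the form $h_b(x_i) = h_b(x_j)$. For $2\ell \leq k$, the $k$-wise independence of $\cH$ lets us treat the involved hash values as uniform and independent, and a counting argument (using $q \leq 2^{n-2}$) bounds the contribution of each length geometrically, yielding the stated bound.

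\textbf{Step 3 (non-adaptive to adaptive lift).} The main obstacle is that the adversary is adaptive, so the query set $S$ is itself a random variable depending on the answers (which depend on $h_1, h_2$). The trick is to do the analysis in the ideal world: consider the experiment where the adversary interacts with a truly random function. In this experiment, the answers are uniform and independent of $(h_1, h_2, g)$, so the query sequence is a function only of the adversary's randomness and the (independent) answers; in particular, its distribution is independent of $(h_1, h_2)$. Applying Step 2 (taking the expectation over this query distribution) bounds the probability that the cuckoo graph induced by the actual queries is cyclic by $q/2^{\Omega(k)}$. By Step 1, conditioned on acyclicity, the outputs of $\PP(\cH,\cG,\Pi)$ on these adaptively chosen queries are distributed exactly as those of a truly random function; a standard identical-until-bad coupling then yields the distinguishing advantage $q/2^{\Omega(k)}$.

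\textbf{Main obstacle.} The delicate point is Step 3: one must not analyze the bad event directly in the real world (where the queries and $h_1, h_2$ are correlated through the answers), but rather in the ideal world where the answers decouple the queries from $(h_1, h_2)$. The acyclicity condition then plays the dual role of both ensuring the real world behaves ideally (Step 1) and being measurable under a query distribution independent of the hash functions (Step 2).
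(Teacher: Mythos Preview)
Your high-level structure matches the paper's: isolate a bad event on $(h_1, h_2)$ and the query set, show uniformity conditioned on its complement, bound the bad event for every fixed query set, then lift to adaptive adversaries by an identical-until-bad coupling. The paper packages this last step as \cref{lemma:GeneralRandPiClose} (proved in \cref{sec:GeneralFramework} by precisely such a coupling); your Step~3, analysing the bad event in the ideal world where the answers decouple the queries from $(h_1,h_2)$, is the same idea.

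The substantive difference is the bad event. You take ``the cuckoo graph has a cycle'' and obtain uniformity in Step~1 from $f_1,f_2$ alone, with $g$ playing no role. The paper instead invokes \cref{thm:PaghPagh} of \citet{PaghP08}, whose bad event is more permissive: even when short cycles are present, conditional uniformity still holds, but now by using the $k$-wise independence of $\cG$ on the small unpeelable core. Your route is more elementary and, for this specific statement ($k\le n$, hash range $\{0,1\}^n$, $q\le 2^{n-2}$), delivers the same $q/2^{\Omega(k)}$ bound without ever using $\cG$'s independence. The paper's route, imported as a black box, is what is actually needed for its other applications (\eg \cref{sec:NonadaptivetoAdaptive}) where the hash range is only $[4q]$ and the plain cycle probability is $\Theta(1)$.

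There is one real gap in your Step~2: you only treat cycles of length $2\ell\le k$, and with merely $k$-wise independent $\cH$ you cannot directly control longer cycles. The fix is to observe that any cycle of length exceeding $k$ contains a simple path on $k$ items; such a path imposes $k-1$ collision constraints that are computable under $k$-wise independence, and a union bound over the at most $q^k$ ordered $k$-tuples gives $q\cdot(q/2^n)^{k-1}\le q/2^{\Omega(k)}$. Without this (or an equivalent argument) your bound on $\Pr[\text{cycle}]$ is incomplete.
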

For $k=\Theta(n)$, \cref{thm:IntroPPDomainExtension} yields a domain extension that is $(q,\unbounded,q/2^n)$-PRF, and makes only two
calls to the underlying PRF. Replacing in the above construction the function family $\PP(\cH,\cG,\Pi)$ with the family $\ADW_z(\cH,\cG,\Pi)$ yields a more versatile domain extension that offers a tradeoff between the number of calls to the PRF and the independence required for the hash functions. For details, see \cref{sec:dom_ext_adw}.

\paragraph{PRG to PRF reductions.}
\Cref{thm:IntroPPDomainExtension} can also be used to get a hardness-preserving
construction of PRFs from pseudorandom generators (PRG) in settings where there
is a non-trivial bound on the number of queries to the PRF. \citet{JainPT12}, who were the
first to propose this goal, noted that one can realize it using a domain
extension constructions. Thus, we apply \Cref{thm:IntroPPDomainExtension} to get
constructions of PRFs from PRGs which improves some of
the parameters of \citet{JainPT12}, but require longer keys. See \cref{sec:PRGImprove} for details.


\subsubsection{From Non-Adaptive to Adaptive PRF}\label{sec:Intro:NonAdaptive}
Adaptive PRFs can be constructed from non-adaptive ones using general techniques
such as using the PRG-based construction of \citet{GoldreichGoMi86} or the
\emph{synthesizer} based construction of \citet{NaorR99a}. These constructions,
however, make (roughly) $n$ calls to the underlying non-adaptive PRF (where $n$
is the input length).  Recently, \citet{BermanH15} showed how to perform this
security uplift at a much lower cost: the adaptive PRF makes only a
\emph{single} call to the non-adaptive PRF. Their construction, however, incurs
a significant degradation in the security: assuming the underlying function is a
non-adaptive $t$-PRF, then the resulting function is an (adaptive)
$O(t^{1/3})$-PRF. The reason for this degradation is the birthday attack we
mentioned earlier.

We present a reduction from non-adaptive to adaptive PRFs that \emph{preserves} the security of the non-adaptive PRF. The resulting adaptive PRF  makes only two calls to the underlying non-adaptive one.
\begin{theorem}[informal]\label{thm:IntroNonAdaptiveMain}
  Let $t$ be a polynomial-time computable integer function, let
  $\cH= \set{\cH_n \colon \zn \mapsto [4t(n)]_\zn}_{n\in\N}$ (where
  $[4t(n)]_\zn$ are the first $4t(n)$ elements of $\zn$) and
  $\cG=\set{\cG_n\colon\zn\mapsto\zn}_{n\in\N}$ be efficient $O(\log t(n))$-wise
  independent function families, and let $\F$ be a length-preserving
  non-adaptive $t$-PRF. Then, $\PP(\cH,\cG,\F)$ is a length-preserving
  $\left(t/4\right)$-PRF.
\end{theorem}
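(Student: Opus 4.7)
The plan is to establish the theorem via a hybrid argument that decomposes the security of $\PP(\cH,\cG,\F)$ into two pieces: the already-established \emph{adaptive} security of the purely information-theoretic object $\PP(\cH,\cG,\Pi)$ from the earlier ``General Framework'' section, and the \emph{non-adaptive} PRF security of $\F$. Concretely, for any adaptive distinguisher $D$ making at most $t(n)/4$ queries, I would set up three experiments: $H_0$, in which $D$ accesses $\PP(\cH,\cG,\F) = (f_1\circ h_1)\oplus(f_2\circ h_2)\oplus g$ with $f_1,f_2\from\F$, $h_1,h_2\from\cH$, $g\from\cG$; $H_1$, in which the pair $(f_1,f_2)$ is replaced by two independent truly random functions $(\Pi_1,\Pi_2)$, giving $\PP(\cH,\cG,\Pi)$; and $H_2$, in which $D$ interacts with a truly random function from $\zn$ to $\zn$. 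The triangle inequality then bounds $D$'s advantage by $|H_0-H_1|+|H_1-H_2|$.

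For $|H_1-H_2|$ I would directly invoke the adaptive version of the Pagh--Pagh analysis proved earlier in the paper: since $\cH$ and $\cG$ are $O(\log t(n))$-wise independent and the range of $\cH$ has size $4t(n)$, that theorem gives $|H_1-H_2|=O(1/t(n))$ for any adaptive $(t(n)/4)$-query distinguisher --- comfortably within the beyond-birthday regime of $|[4t(n)]_\zn|=4t(n)$. This is the component of the argument that yields the beyond-birthday strength.

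For $|H_0-H_1|$ I would insert a two-term sub-hybrid that swaps $f_1$ and then $f_2$ in turn, reducing each swap to the non-adaptive security of $\F$. Given an adaptive $D$ that separates the two sides of such a sub-hybrid with advantage $\e$, the non-adaptive distinguisher $D'$ against $\F$ would proceed as follows: $D'$ samples $h_1,h_2,g$ and the unreplaced $f_j\from\F$ internally; it then commits to non-adaptively querying its $\F$-oracle on every point of $[4t(n)]_\zn$ (the range of $\cH$), obtaining a table $T$; finally, $D'$ runs $D$ adaptively, answering each of $D$'s queries $x$ by $T[h_1(x)]\oplus f_j(h_2(x))\oplus g(x)$ (and symmetrically in the other sub-step). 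Because $D'$ commits to all of its queries before seeing any oracle answer, only the non-adaptive security of $\F$ is invoked; the advantage of $D'$ matches that of $D$ on the sub-hybrid, and summing the two sub-hybrids bounds $|H_0-H_1|$ by $O(1/t(n))$ as well.

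The main obstacle --- and the reason for the factor-of-$4$ slack between the hypothesis and the conclusion --- will be the query-count accounting in this non-adaptive reduction: $D'$ must pre-commit to up to $4t(n)$ queries, since it cannot defer a query to $\F$ until $D$ adaptively produces a new hash value $h_1(x)$. This is precisely why $\cH$ is chosen with range $[4t(n)]_\zn$: the non-adaptive $t$-PRF assumption on $\F$ is calibrated so that this $4t(n)$-query blowup is absorbed by the assumption and still leaves a $(t/4)$-query guarantee on $\PP(\cH,\cG,\F)$. Aside from this accounting, the argument is routine: the triangle inequality combines the two hybrid bounds into the claimed $(t/4)$-PRF guarantee.
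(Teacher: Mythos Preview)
Your proposal is correct and matches the paper's proof essentially step for step: the paper too splits the analysis into the hybrid $\PP(\cH,\cG,\F)$ vs.\ $\PP(\cH,\cG,\Pi)$ (handled by pre-querying the non-adaptive oracle on all of $[4q]_{\zn}$ and then simulating the adaptive $D$) and the hybrid $\PP(\cH,\cG,\Pi)$ vs.\ truly random (handled by the adaptive Pagh--Pagh lemma). The only cosmetic difference is that the paper packages your two-step sub-hybrid as a single two-oracle distinguisher and then invokes a generic ``$s$-oracle to single-oracle'' hybrid lemma (their \cref{lemma:TwoOracle}) to obtain the factor-$2$ loss, whereas you unroll that hybrid explicitly; note that in your second sub-step the internally sampled ``unreplaced'' function is the already-swapped random $\pi_1$, not an element of $\F$.
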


Our construction, as well as the one of \citet{BermanH15}, depends on the number
of queries made by the distinguisher. Thus, this does not give a single
transformation for all poly-time adversaries. The PRG-based and
synthesizer-based constructions are better in this sense as they are independent
of the distinguisher.\footnote{\citet{BermanH15} do give a single transformation
  for all poly-time adversaries with the additional cost of slightly increasing
  the number of calls to the underlying non-adaptive PRF. Their transformation
  can be carried to our setting and be applied with the $\PP$ function
  family. However, our construction is superior to \cite{BermanH15}'s only by
  reducing the security loss in a polynomial factor, a factor that is
  meaningless when considering all poly-time adversaries. Thus, we avoid stating
  our result in this all poly-time adversaries setting and refer to
  \cite{BermanH15} for such a result.}


Finally, as it was the case in the domain extension, replacing the function
family $\PP(\cH,\cG,\F)$ with the family $\ADW_z(\cH,\cG,\cF)$, yields a more
versatile non-adaptive to adaptive transformation that offers a trade-off
between the number of calls to the PRF and the requisite independence. See
\cref{sec:nonadp_to_adp_adw} for details.

\subsection{More Related Work}\label{sec:intro:RelatedWork}
\citet{BellareGK99} introduced a paradigm for using PRFs in the symmetric-key settings that, in retrospect, is similar to cuckoo hashing.  Assume that two parties, who share a secret function $f$ would like to use it for (shared-key) encryption. The `textbook' (stateless) solution calls for the sender to choose $r$ at random and send $(r,f(r)\oplus M)$ to the receiver, where $M$ is the message to be encrypted. This proposal breaks down if the sender chooses the same $r$ twice (in two different sessions with different messages). Thus, the scheme is
subject to the birthday attack and the length parameters should be chosen accordingly. This requires the underlying function to have a large domain. Instead, \cite{BellareGK99} suggested choosing $t>1$ values at random, and sending $(r_1, \ldots, r_t, f(r_1)\oplus \cdots \oplus
f(r_t) \oplus M)$. They were able to show much better security than the single
$r$ case. They also showed a similar result for message authentication. Our
domain extension results (see \cref{sec:DomainExtension}) improve upon the results of \cite{BellareGK99}.

The problem of transforming a scheme that is only resilient to non-adaptive attack into one that is resilient to adaptive attacks has received quite a lot of attention in the context of pseudorandom permutations (or block ciphers). \citet{MaurerP04} showed that, given a family of permutations that are information-theoretic secure against non-adaptive attacks, if two members of this family are independently composed, then the resulting permutation is also secure against adaptive attacks (see \cite{MaurerP04} for the exact formulation).
\citet{Pietrzak05} showed, however, that this is
not necessarily the case for permutations that are random-looking under a
computational assumption (see also \cite{Myers04,Pietrzak06}), reminding us that translating information-theoretic results to the computational realm is a tricky business.

\subsection*{Paper Organization}
Basic notations and formal definitions are given in
\cref{sec:prelim}. In \cref{sec:GeneralFrameworkMain} we formally define
the hashing paradigm of \citet{PaghP08}
and show how to extend their result to hold against adaptive adversaries. Our domain extension reduction based on \cite{PaghP08} is described in \cref{sec:DomainExtension}, and the improved
non-adaptive to adaptive reduction, also based on \cite{PaghP08}, is described in
\cref{sec:NonadaptivetoAdaptive}. In  \cref{sec:InstantADW} we present the more
advanced (and more complex) hashing paradigm of \citet{AumullerDW14}, and use it
to obtain a more versatile version of the above reductions. Some possible
directions for future research are discussed in \cref{sec:OpenQuestions}.


\section{Preliminaries}\label{sec:prelim}

\subsection{Notations}
All logarithms considered here are in base two. We use calligraphic letters to denote sets, uppercase for random variables, and lowercase for values. Let `$||$' denote string concatenation.  For an integer $t\in \N$, let $[t] = \set{1,\dots,t}$. For a set $\cS$, let $\cS^\ast$ be the power set of $\cS$ (\ie the set of all subsets of $\cS$). For a set $\cS$ and an integer $t\in \N$, let $\cS^{\leq t} = \set{\overline{s}\in\cS^\ast \colon \size{\overline{s}} \leq t \ \ \land \ \ \overline{s}[i] \neq \overline{s}[j] \ \ \forall i \neq j \in [\size{\overline{s}}]}$, where $\overline{s}[i]$ is the $i$th element of $\overline{s}$, and let $[t]_\cs$ be the first $t$ elements, in increasing lexicographic order, of $\cs$ (equal to $\cs$ in case $\size{\cs} \le t$). For sets $\cU$ and $\cV$, let $\Pi_{\cU\mapsto\cV}$ stands for the set of all functions from $\cu$ to $\cV$, and for integers $n$ and $\ell$, let $\Pi_{n,\ell} = \Pi_{\zn\mapsto\zo^\ell}$.

Let $\poly$ denote the set all polynomials, and let \ppt be abbreviation for probabilistic (strictly) polynomial-time Turing machine. For $s\in\N$ and $t,q\colon \N \mapsto \N$, we say that $\mD$ is a $t$-time $q$-query s-oracle-aided algorithm if, when invoked on input of length $n$, $\mD$ runs in time $t(n)$ and makes at most $q(n)$ queries to each of its $s$ oracles.
Given a random variable $X$, we write $X(x)$ to denote $\Pr[X=x]$, and write $x\gets X$ to indicate that $x$ is selected according to $X$. Similarly, given a finite
set $\cs$, we let $s\la \cs$ denote that $s$ is selected according to the uniform distribution on $\cs$. The \emph{statistical distance} of two distributions $P$
and $Q$ over a finite set $\Uni$, denoted as $\SD(P,Q)$, is defined as
$\max_{\cs\subseteq \Uni} \size{P(\cs)-Q(\cs)} = \frac{1}{2} \sum_{u\in \Uni}\size{P(u)- Q(u)}$.

\subsection{Pseudorandom Generators}
\begin{definition}[Pseudorandom Generators]
A polynomial-time function $G\colon \zn \mapsto \zo^{\ell(n)}$ is {\sf $(t,\eps)$-PRG}, if $\ell(n)>n$ for every $n\in\N$ ($G$ stretches the input), and
\begin{align*}
\size{\Pr_{x\la \zn}[\Dc(G(x))=1] - \Pr_{y\la \zo^{\ell(n)}}[\Dc(y)=1]} \leq \eps(n)
\end{align*}
for every algorithm (distinguisher) $\Dc$ of running time $t(n)$ and large
enough $n$. A $(t,1/t))$-PRG is called a $t$-PRG. If $\ell(n)=2n$, we
say that $G$ is \emph{length-doubling}.
\end{definition}

\subsection{Function Families}

\subsubsection{Operating on Function Families}
We consider two natural operations on function families.
\begin{definition}[composition of function families]\label{def:FunctionComp}
Let $\F^1\colon \cd^1 \mapsto \cR^1$ and $\F^2\colon \cd^2 \mapsto \cR^2$ be two function families with $\cR^1 \subseteq \cd^2$. The {\sf composition} of $\F^1$
with $\F^2$, denoted $\F^2 \circ \F^1$, is the function family $ \set{(f_2,f_1) \in \F^2 \times \F^1}$, where $(f_2,f_1)(x) \eqdef f_2(f_1(x))$.
\end{definition}

\begin{definition}[group operation of function families]\label{def:FunctionXOR}
Let $\F^1\colon \cd \mapsto \cR^1$ and $\F^2\colon \cd \mapsto \cR^2$ be two function families with $\cR^1,\cR^2 \subseteq \cR$, where $\cR$ is a group with operation $\oplus$. The {\sf group operation} of $\F^1$ with $\F^2$, denoted $\F^2 \bigoplus \F^1$, is the function family $\set{(f_2,f_1) \in \F^2 \times \F^1}$, where $(f_2,f_1)(x) \eqdef f_2(x) \oplus f_1(x)$.
\end{definition}

In all of our applications, the group $\cR$ from the above definition will simply be $\zn$ for some $n\in\N$, with XOR as the group operation.

\subsubsection{Function Family Ensembles}\label{sec:FunctionFamilies}
A function family ensemble is an infinite set of function families, whose elements (families) are typically indexed by the set of integers.  Let $\F= \set{\F_n\colon \cd_n \mapsto \cR_n}_{n \in \N}$ stands for an ensemble of function families, where each $f\in \F_n$ has domain $\cd_n$ and its range
is a subset of $\cR_n$. Such ensemble is \emph{length preserving}, if $\cd_n = \cR_n = \zn$ for every $n$. We naturally extend \cref{def:FunctionComp,def:FunctionXOR} to function family ensembles.

For a function family ensemble to be useful it should have an efficient sampling and evaluation algorithms.
\begin{definition}[efficient function family ensembles]\label{def:effFam}
A function family ensemble $\F=\set{\F_n\colon \cd_n \mapsto \cR_n}_{n\in\N}$ is {\sf efficient}, if the following hold:
\begin{description}
\item [Efficient sampling.] $\F$ is samplable in polynomial-time: there exists a \ppt that given $1^n$, outputs (the description of) a uniform element in $\F_n$.
\item [Efficient evaluation.] There exists a deterministic algorithm that given $x \in \cd_n$ and (a description of) $f \in \F_n$, runs in time $\poly(n,\size{x})$ and outputs $f(x)$.
\end{description}
\end{definition}

\subsubsection{Many-Wise Independent Hashing}\label{sec:pairwiseHash}
\begin{definition}[$k$-wise independent families]\label{def:pairwiseHash}
A function family $\cH= \set{h\colon \cd \mapsto \cR}$ is {\sf $k$-wise independent} (\wrt $\cd$ and $\cR$), if
\begin{align*}
\Pr_{h \la \cH}[h(x_1)= y_1 \land h(x_2)=y_2 \land \ldots \land h(x_k) = y_k] = \frac1{\size{\cR}^k},
\end{align*}
for every distinct $x_1,x_2,\ldots,x_k \in \cd$ and every $y_1,y_2,\ldots,y_k \in \cR$.
\end{definition}
For every $\hlen, k \in \poly$, the existence of efficient $k(n)$-wise independent family ensembles mapping strings of length $\hlen(n)$ to strings of length $n$ is
well known (\cite{CarterWe79,WegmanC81}). A simple and well known example of $k$-wise independent functions is the collection of all polynomials of degree $(k-1)$
over a finite field. This construction has small size, and each evaluation of a function at a given point requires $k$ operations in the field.
\begin{fact}\label{fact:k-indep}
For $\ell,n,k\in\N$, there exists a $k$-wise independent function family $\cH=\set{h\colon\zo^{\ell}\mapsto \zn}$, such that sampling a random element in $\cH$ requires $k\cdot\max\set{\ell,n}$ random bits, and evaluating a function from $\cH$ is done in time  $\poly(\ell,n,k)$.
\end{fact}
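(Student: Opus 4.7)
The plan is to adapt the classical Carter--Wegman polynomial construction. First I would set $m \eqdef \max\{\ell,n\}$ and work in the field $\mathbb{F} = \GF(2^m)$, fixed via an irreducible polynomial of degree $m$ over $\mathbb{F}_2$ so that elements of $\mathbb{F}$ are identified with $\zo^m$ as an $\mathbb{F}_2$-vector space. A member of $\cH$ is sampled by drawing coefficients $a_0,\dots,a_{k-1}$ independently and uniformly from $\mathbb{F}$, yielding the polynomial $p(y) = \sum_{i=0}^{k-1} a_i y^i$. To evaluate the induced $h$ at $x \in \zo^\ell$, I would pad $x$ with zeros to length $m$, view the result as an element of $\mathbb{F}$, compute $p(x) \in \mathbb{F}$ by Horner's rule, and output the first $n$ bits of the resulting $m$-bit string.

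The main step is verifying $k$-wise independence. For any distinct $x_1,\dots,x_k \in \zo^\ell$, their zero-padded embeddings in $\mathbb{F}$ are also distinct, so the linear map $(a_0,\dots,a_{k-1}) \mapsto (p(x_1),\dots,p(x_k))$ is given by the Vandermonde matrix on these $k$ points, which is invertible over $\mathbb{F}$. Hence when the $a_i$ are drawn uniformly from $\mathbb{F}$, the evaluations $(p(x_1),\dots,p(x_k))$ are uniform and independent over $\mathbb{F}^k$. The final truncation is an $\mathbb{F}_2$-linear surjection $\zo^m \to \zn$, and composing a uniform distribution with such a surjection preserves uniformity (and independence across the $k$ coordinates), giving exactly the condition of \cref{def:pairwiseHash}.

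The resource counts then fall out immediately. Sampling uses $k$ field elements, each specified by $m = \max\{\ell,n\}$ bits, for a total of $k \cdot \max\{\ell,n\}$ random bits. For evaluation, arithmetic in $\GF(2^m)$ is implementable in $\poly(m)$ time using the fixed irreducible polynomial, and Horner's rule evaluates $p$ with $O(k)$ additions and multiplications in $\mathbb{F}$, giving a total running time of $\poly(\ell,n,k)$.

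I do not anticipate a substantive obstacle --- this is essentially folklore --- but the one point to be attentive to is that the encoding maps between $\zo^\ell$, $\mathbb{F}$, and $\zn$ should all be $\mathbb{F}_2$-linear (injection by zero-padding on the input side, projection onto the first $n$ coordinates on the output side), so that the uniform-and-independent distribution over $\mathbb{F}^k$ established by the Vandermonde argument transfers intact to $(\zn)^k$ without introducing bias when $\ell \ne n$.
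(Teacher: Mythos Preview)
Your proposal is correct and matches the paper's approach exactly: the paper states this as a folklore fact immediately after remarking that degree-$(k-1)$ polynomials over a finite field furnish the construction, and gives no further proof. Your write-up simply fills in the standard details (Vandermonde invertibility, zero-padding the input, truncating the output) that the paper leaves implicit.
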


We mention that a $k$-wise independent families (as defined in
\cref{def:pairwiseHash}) look random for $k$-query distinguishers, \emph{both}
non-adaptive and adaptive ones. On the other hand, \textit{almost} $k$-wise
independent families\footnote{Formally, a function family
  $\cH=\set{h:\cD\mapsto\cR}$ is $(\eps,k)$-wise independent if for any
  $x_1,\dots,x_k\in\cD$ and for any $y_1,\dots,y_k\in\cR$ it holds that
  $\size{\Pr_{h\la\cH}[h(x_1)=y_1\wedge\dots\wedge h(x_k)=y_k]-\size{\cR}^{-k}}\leq
  \eps$. We call a family of functions an almost $k$-wise independent family,
  if it is $(\eps,k)$-wise independent for some small $\eps>0$.} are only
granted to be resistant against \emph{non-adaptive} distinguishers.\Ffootnote{See
  \Cref{fn:adaptiveVSnonadaptive} and references therein.} Yet, the
result presented in \cref{sec:GeneralFrameworkMain} yields that, in some cases,
the adaptive security of the latter families follows from their non-adaptive
security.

\subsection{Pseudorandom Functions}
\begin{definition}[Pseudorandom Functions]\label{def:PRF}
An efficient function family ensemble $\F=\set{\F_n\colon \zo^{m(n)} \mapsto \zo^{\ell(n)}}_{n \in \N}$ is an (adaptive)  {\sf $(q,t,\eps)$-PRF}, if for every $t$-time $q$-query oracle-aided algorithm (distinguisher) $\Dc$, it holds that
\begin{align*}
\size{\Pr_{f \leftarrow \F_n}[\Dc^f(1^n) = 1] - \Pr_{\pi \leftarrow \Pi_{m(n),\ell(n)}}[\Dc^{\pi}(1^n) = 1]} \leq \eps(n),
\end{align*}
for large enough $n$. If $q(n)$ is only bounded by $t(n)$ for every $n\in\N$, then $\F$ is called $(t,\eps)$-PRF. A $(t, 1/t)$-PRF is called a $t$-PRF.

If $\Dc$ is limited to be {\sf non-adaptive} (\ie it has to write all his oracle calls before making the first call), then $\F$ is called {\sf non-adaptive $(q,t,\eps)$-PRF}  (and we apply the above notational conventions also for this case).
\end{definition}

Some applications require the pseudorandom functions to be secure against distinguishers with access to many oracles (and not just a single oracle as
in \cref{def:PRF}).
\begin{definition}[pseudorandom functions secure against many-oracle distinguishers]
An efficient function family ensemble $\F=\set{\F_n\colon \zo^{m(n)} \mapsto
  \zo^{\ell(n)}}_{n \in \N}$ is an {\sf $s$-oracle $(q,t,\eps)$-PRF}, if for every $t$-time $q$-query $s$-oracle-aided algorithm (distinguisher) $\Dc$, it holds that
\begin{align*}
\size{\Pr_{\fv \leftarrow \F_n^s}[\Dc^{\fv}(1^n) = 1] - \Pr_{\piv \leftarrow \Pi_{m(n),\ell(n)}^s}[\Dc^{\piv}(1^n)
= 1]} \leq \eps(n),
\end{align*}
for large enough $n$.
\end{definition}

The following lemma shows that a standard (single oracle) PRF is also a
many-oracle one, with a loss that depends multiplicatively on the number of
oracles. This lemma has been used several times in the pase, for example, in
\cite[Lemma 3.3]{BellareCK96} and \cite[Theorem 1]{BilletEG10}. The lemma is
proven using a standard hybrid argument, see
\cite{BellareCK96,BilletEG10}.\footnote{\cite{BilletEG10} only states and proves
  the adaptive case, but the very same argument also yields the non-adaptive
  case.}

\begin{lemma}\label{lemma:TwoOracle}
Let  $\F = \set{\F_n \colon \zo^{m(n)} \mapsto \zo^{\ell(n)}}_{n\in\N}$ be a function family ensemble. Then for every $t$-time $q$-query {\sf $s$-oracle} adaptive [\resp non-adaptive] distinguisher $\Dc$, there exists a $\paren{t + s\cdot q \cdot e_\F }$-time $q$-query {\sf single-oracle} adaptive [\resp non-adaptive] distinguisher $\hcD$, where $e_\F$ stands for the evaluation time of $\F$,\footnote{That is, $\Dc(1^n)$ runs in time $t(n) + s\cdot q(n) \cdot e_\F(n)$. Moreover, we implicitly assume that the evaluation time of $\F$ is greater than the time needed to sample $\ell(n)$ random bits.} with
\begin{align*}
\lefteqn{\size{\Pr_{f \la \F_n}[\hcD^{f}(1^n) = 1] - \Pr_{\pi \la \Pi_{m(n),\ell(n)}}[\hcD^{\pi}(1^n) = 1]}}\\
&\geq \frac1s\cdot\size{\Pr_{\fv \leftarrow \F_n^s}[\Dc^{\fv}(1^n) = 1] - \Pr_{\piv \leftarrow \Pi_{m(n),\ell(n)}^s}[\Dc^{\piv}(1^n)
= 1]},
\end{align*}
for every $n\in \N$.
\end{lemma}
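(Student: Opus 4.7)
The plan is to use a standard $s$-hybrid argument. For $i\in\set{0,1,\ldots,s}$, define a hybrid experiment $H_i$ in which the distinguisher $\Dc$ is given access to $s$ oracles, where the first $i$ are sampled independently from $\F_n$ and the remaining $s-i$ are sampled independently from $\Pi_{m(n),\ell(n)}$. Let $p_i \eqdef \Pr[\Dc^{H_i}(1^n)=1]$. By construction, $p_s$ equals the acceptance probability in the all-PRF experiment and $p_0$ equals the acceptance probability in the all-random experiment, so $\size{p_s - p_0}$ is precisely the quantity on the right hand side of the inequality in the lemma.

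I would then construct $\hcD$ as follows: on input $1^n$ and oracle $\co$, sample $i^*\la[s]$ uniformly, sample $i^*-1$ independent functions $f_1,\dots,f_{i^*-1}\la\F_n$ (using the efficient sampler for $\F$), and prepare to answer oracles $i^*+1,\dots,s$ by lazy sampling from $\Pi_{m(n),\ell(n)}$. Now simulate $\Dc$: whenever $\Dc$ queries its $j$-th oracle on some $x$, answer with $f_j(x)$ if $j<i^*$, with $\co(x)$ if $j=i^*$, and with a freshly (lazily) sampled uniform string in $\zo^{\ell(n)}$ if $j>i^*$; finally, output whatever $\Dc$ outputs. In the non-adaptive case, $\Dc$ writes all its queries before seeing any answer, and $\hcD$ forwards only those tagged for the $i^*$-th oracle to $\co$, so $\hcD$ is itself non-adaptive and makes at most $q(n)$ queries. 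The running time is $t(n)$ for simulating $\Dc$ plus at most $(s-1)\cdot q(n)$ simulated oracle answers, each costing $O(e_\F(n))$ time (which dominates the cost of sampling $\ell(n)$ random bits).

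The analysis is the standard one: when $\co$ is drawn from $\F_n$ and $\hcD$'s internal choice is $i^*=i$, the view of $\Dc$ is identical to $H_i$; when $\co$ is drawn from $\Pi_{m(n),\ell(n)}$, the view is identical to $H_{i-1}$. Averaging over the uniform choice of $i^*\in[s]$ gives
\begin{align*}
\Pr_{f\la\F_n}[\hcD^{f}(1^n)=1] - \Pr_{\pi\la\Pi_{m(n),\ell(n)}}[\hcD^{\pi}(1^n)=1]
= \frac{1}{s}\sum_{i=1}^{s}(p_i - p_{i-1}) = \frac{p_s - p_0}{s},
\end{align*}
and taking absolute values yields the claimed bound. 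The same argument works verbatim in the non-adaptive setting, since $\hcD$ preserves the non-adaptive structure of $\Dc$'s queries.

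The only mild subtleties (not genuine obstacles) are: (i) verifying that the simulation of the other $s-1$ oracles is performed efficiently, which is exactly why the running time picks up the additive $s\cdot q\cdot e_\F$ term; (ii) ensuring that the lazy sampling of the random oracles does not increase the query complexity to $\co$ beyond $q(n)$, which is immediate since $\hcD$ only queries $\co$ on the subset of $\Dc$'s queries indexed by $i^*$; and (iii) handling the sign of the advantage, which is handled automatically by the absolute value appearing in both sides of the stated inequality.
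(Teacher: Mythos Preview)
Your proposal is correct and is precisely the standard hybrid argument the paper has in mind: the paper does not give its own proof but simply states that the lemma ``is proven using a standard hybrid argument'' and cites \cite{BellareCK96,BilletEG10}, which is exactly the construction (random index $i^*$, plug the external oracle at position $i^*$, simulate the remaining $s-1$ oracles internally) that you spell out.
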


\section{From Non-Adaptive to Adaptive Hashing}\label{sec:GeneralFrameworkMain}
In this section we show that non-adaptively secure function families with a certain combinatorial property are also adaptively secure, yielding that the function families of \citet{PaghP08} and \citet{AumullerDW14} are adaptively secure. In the next sections we take advantage of the latter implication to derive our hardness-preserving PRF reductions. We note that the above approach is not useful  for (arbitrary) non-adaptive PRFs, since a PRF might no posses the required  combinatorial property (indeed, not every non-adaptive PRF is an adaptive one).

To define the aforementioned combinatorial property, we use the notion of left-monotone sets.

\def\LeftMonotoneSets {Let $\cS$ and $\cT$ be sets. A set $\cM\subseteq\cS^\ast\times\cT$ is {\sf left-monotone}, if for every   $(\overline{s_1},t)\in \cM $ and  every $\overline{s_2} \in\cS^\ast$ that has
  $\overline{s_1}$ as a prefix, it holds that $(\overline{s_2},t) \in \cM$.}
\begin{definition}[left-monotone sets]\label{def:lhsmonotonesets}
  \LeftMonotoneSets
\end{definition}
Namely, a product set is left monotone, if it is monotone \wrt its
left-hand-side part, where all sequences having a prefix in a monotone set, are
also in the set. 

The main result of this section states roughly that for a function family $\F =
\set{f_{u,v}\colon \cD \mapsto \cR}_{(u,v)\in \cU \times\cV}$ and a
left-monotone set $\Bad\subseteq\cD^\ast \times \cU$, if for every set of
queries $\qb = (q_1, \ldots,q_{\size{\qb}})$ it holds that (1) for every $u\in
\cU$ such that $(\qb, u)\notin \Bad$, the outputs of $f$ on $q_1,
\ldots,q_{\size{\qb}}$ is uniform over $\cR^{\size{\qb}}$, and (2) the
probability over the choice of $u\la\cU$ that $(\qb, u)\in \Bad$ is small, then
$\F$ is adaptively secure.

\def\GeneralRandPiClose{Let $\cU$ and $\cV$ be
  non-empty sets,  let $\F = \F(\cU,\cV) = \set{f_{u,v}\colon \cD \mapsto
    \cR}_{(u,v)\in \cU \times\cV}$ be a function family and  let
  $\Bad\subseteq\cD^\ast \times \cU$ be left-monotone.  Let $t\in \N$, and
  assume that for every $\qb = (q_1, \ldots,q_{\size{\qb}})\in \cD^{\leq t}$ it
  holds that\footnote{Recall that for a set $\cS$  and an integer $t$, $\cS^{\leq t}$ denotes the set $\set{\overline{s}\in\cS^\ast
\colon \size{\overline{s}} \leq t \ \ \land \ \ \overline{s}[i] \neq \overline{s}[j] \ \ \forall  i \neq j \in [\size{\overline{s}}]}$.}
  \begin{enumerate}
  \item $\left(f(q_1),\dots, f(q_{\size{\qb}})\right)_{f\la \set{f_{u,v} \colon v\in\cV}}$ is uniform over  $\cR^{\size{\qb}}$, for every $u \in \cU$ with $(\qb,u)\not\in \Bad$, and
  \item $\Pr_{u\la\cU}[(\qb,u)\in\Bad] \leq \eps$.
  \end{enumerate}
Then,
  \begin{align*}
    \size{\Pr_{\substack{u\la\cU \\ v\la\cV}}[\mD^{f_{u,v}} = 1] - \Pr_{\pi \la
        \Pi}[\mD^{\pi}=1] } \leq \eps
  \end{align*}
for every $t$-query oracle-aided {\em adaptive} algorithm $\mD$, letting  $\Pi$ be the set of all functions from $\cD$ to $\cR$.}
\begin{lemma}\label{lemma:GeneralRandPiClose}
 \GeneralRandPiClose
\end{lemma}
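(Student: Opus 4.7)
The plan is to bound the statistical distance between the transcript $\mD$ collects from $f_{u,v}$ and the one it collects from a uniform $\pi$; this immediately controls the distinguishing advantage. Assume WLOG that $\mD$ is deterministic (fix its worst-case random tape) and makes exactly $t$ pairwise distinct queries. The key move is to \emph{parameterize a run of $\mD$ by its answer string}: for each $\hat{a}=(\hat{a}_1,\ldots,\hat{a}_t)\in\cR^t$, the $i$-th query $\hat{q}_i$ is a deterministic function of $(\hat{a}_1,\ldots,\hat{a}_{i-1})$, so $\hat{a}$ induces a well-defined sequence $\hat{\qb}(\hat{a})=(\hat{q}_1,\ldots,\hat{q}_t)\in\cD^{\leq t}$.

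Next, sample $u\la\cU$ in both experiments (in the random one this is an auxiliary variable that plays no role in computing answers) and partition $\cR^t=G_u\sqcup B_u$, where $\hat{a}\in G_u$ iff $(\hat{\qb}(\hat{a}),u)\notin\Bad$; left-monotonicity of $\Bad$ makes this full-length classification coincide with ``no prefix of the adaptive run is bad.'' In the random world each $\hat{a}$ has probability exactly $|\cR|^{-t}$, since the $\hat{q}_i$ are distinct and $\pi$ is uniform. In the real world, conditionally on $u$ and averaging over $v\la\cV$, the event ``the collected answers equal $\hat{a}$'' is identical to $\bigcap_{i=1}^t\{f_{u,v}(\hat{q}_i)=\hat{a}_i\}$, and for $\hat{a}\in G_u$ the lemma's hypothesis~(1), applied with the \emph{now fixed} $\qb=\hat{\qb}(\hat{a})$, gives probability exactly $|\cR|^{-t}$. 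Hence the joint distributions of $(u,\hat{a})$ in the two experiments agree pointwise on $\bigcup_u\{u\}\times G_u$, and the two ``bad'' sets carry equal mass. To bound this common bad mass, observe that in the random world $\hat{a}$ is independent of $u$, so for each fixed $\hat{a}$ the vector $\hat{\qb}(\hat{a})\in\cD^{\leq t}$ is fixed and hypothesis~(2) gives $\Pr_{u\la\cU}[(\hat{\qb}(\hat{a}),u)\in\Bad]\leq\eps$; averaging over $\hat{a}$ yields bad-mass at most $\eps$, and the same bound transfers to the real world. Two distributions agreeing on an event of mass $\geq 1-\eps$ are within statistical distance $\eps$, and projecting onto $\mD$'s one-bit output cannot increase this, giving the claim.

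The main conceptual obstacle is that hypothesis~(1) is purely \emph{non-adaptive}---the vector $\qb$ is fixed in advance---whereas $\mD$ chooses its queries adaptively based on answers. Parameterizing the run by the answer string $\hat{a}$ dissolves the mismatch: once $\hat{a}$ is fixed, the induced $\hat{\qb}(\hat{a})$ is itself a fixed vector to which hypothesis~(1) applies verbatim, and hypothesis~(2) then controls the rare ``bad'' answer strings on which the two distributions may disagree.
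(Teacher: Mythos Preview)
Your proof is correct and takes a genuinely different route from the paper's. The paper constructs an explicit coupling: it first runs $\mD$ with uniformly random answers $\ab$, then samples $u\la\cU$, and re-runs $\mD$ reusing the same answers until the first index $i$ where $(\qb'_{1,\ldots,i},u)\in\Bad$, at which point it samples $v$ uniformly among those consistent with the answers so far and switches to answering with $f_{u,v}$. The technical core of their argument is showing that the resulting pair $(U,V)$ is uniform on $\cU\times\cV$, which they prove via a reverse induction on the length of the answer prefix (their Claim~A.4). Left-monotonicity of $\Bad$ is used essentially in this coupling: once the run becomes bad it stays bad, so the two emulations agree up to a well-defined stopping time.

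Your approach is more direct and elementary: rather than coupling, you compute the point-mass of each full answer string $\hat{a}$ in both experiments and observe they coincide on $G_u$, then bound the residual bad mass via hypothesis~(2) in the random world (where $u$ is independent of $\hat{a}$). This sidesteps the reverse induction entirely. It is also worth noting that your argument does not actually use left-monotonicity anywhere---the remark you make about it is true but inessential, since you only ever invoke hypothesis~(1) on the full-length sequence $\hat{\qb}(\hat{a})$. So your proof in fact establishes the lemma under a weaker assumption than stated. The paper's coupling, by contrast, genuinely needs left-monotonicity to ensure the two emulations stay synchronized up to the first bad prefix.
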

Note that the above properties of the family $\F$ mean that  $\F$ is non-adaptively secure. The proof that $\F$ is also \emph{adaptively} secure, which critically uses the above structure of the set  $\Bad$, can be found in \cref{sec:GeneralFramework}.\footnote{\cref{lemma:GeneralRandPiClose} can be derived as  a special case of a result given in
\cite[Theorem~12]{JetchevOS12} (closing a gap in the proof appearing in \cite{Maurer02}). Yet, for the sake of completeness, we include an independent proof of this lemma here.}

\subsection{The \citet{PaghP08} Function Family}\label{sec:InstantPP}
We show that \cref{lemma:GeneralRandPiClose} can be applied to the function family of \citet{PaghP08}.
\begin{definition}[The \citet{PaghP08} function family]\label{def:PaghPagh}
  Let $\cH$ be a function family from $\cD$ to $\cU$, let $\cG$ be a function
  family from $\cD$ to $\cR$ and let $\cF$ be a function family from $\cS$ to
  $\cR$, with $\cU\subseteq\cS$ and $\cR$ being a group \wrt the operation $\oplus$.  The function family $\PP(\cH,\cG,\cF)$ from $\cD$ to $\cR$, is defined by
\begin{align*}
\PP(\cH,\cG,\cF) \eqdef (\cF\circ\cH) \oplus (\cF\circ\cH) \oplus \cG.
\end{align*}
For $h_1,h_2\in\cH$, let $\PP_{h_1,h_2}(\cG,\cF) \eqdef (\cF\circ h_1) \oplus (\cF\circ
h_2) \oplus \cG$.
\end{definition}
Graphically, this function family is given in \cref{fig:PP}. 
\begin{figure}[ht]
	\begin{center}\includegraphics[scale=0.4]{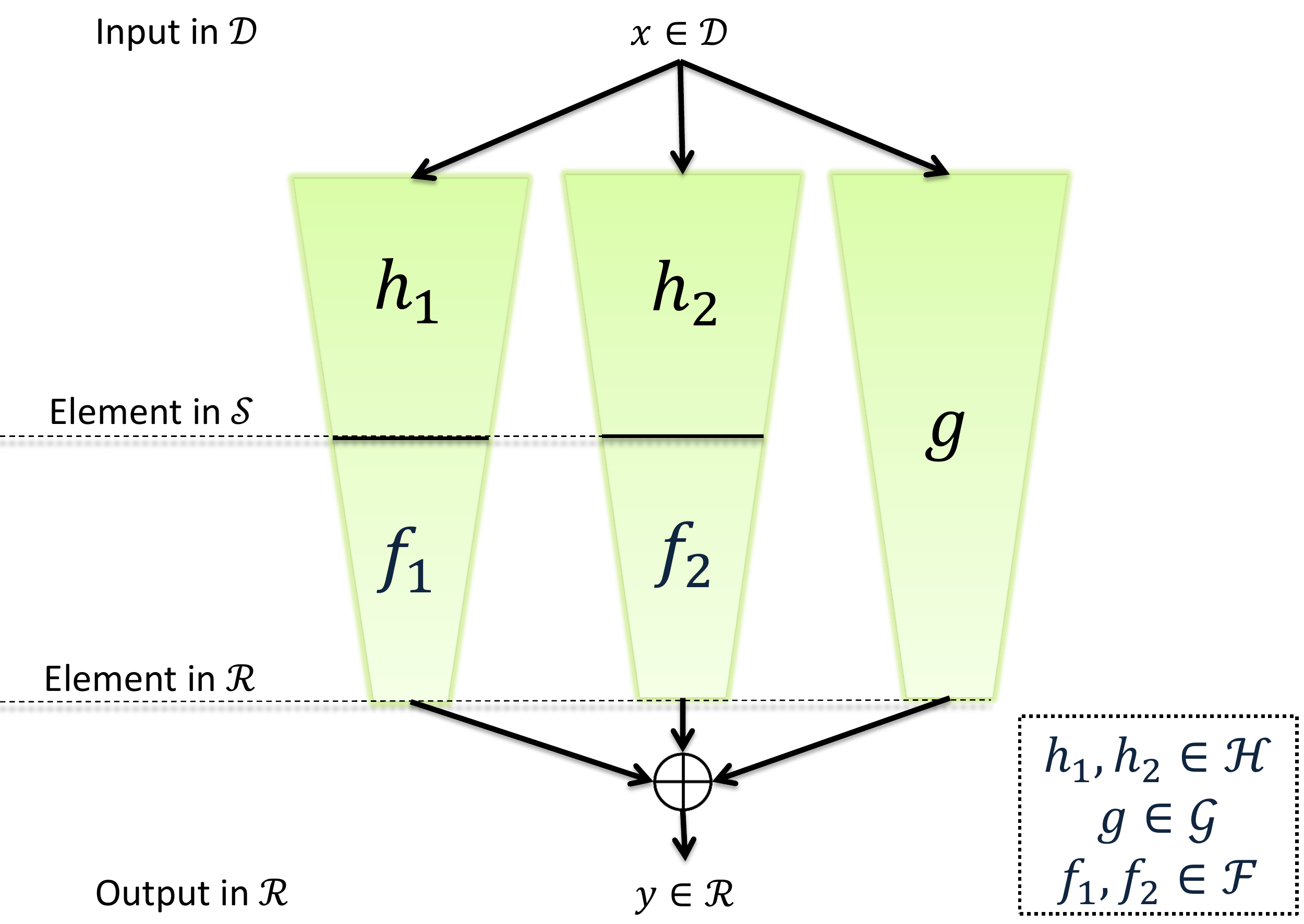}
          \caption{The function family $\PP(\cH,\cG,\cF)$. $\cH$ hashes down a
            domain $\cD$ to a domain $\cS$. Then $\cF$ maps $\cS$ to $\cR$. We
            do this twice and xor it with $\cG$, that hashes the domain $\cD$
            directly to $\cR$.}
	\label{fig:PP}\end{center}
\end{figure}
\citet{PaghP08} showed that when instantiated with the proper function families,
the above function family has the following properties:
\begin{theorem}[\cite{PaghP08}]\label{thm:PaghPagh}
    Let $t\in\N$, let $\cH = \set{h\colon \cD \mapsto \cU}$ and
    $\cG = \set{g\colon\cD\mapsto\cR}$ be function families with $\cR$ being a group \wrt the operation $\oplus$, and let $\Pi=\Pi_{\cS\mapsto\cR}$.

    If $\cU\subseteq\cS$ and $\size{\cU}\geq 4t$, then for every $k\in\N$ there exists a left-monotone set
    $\Bad \subseteq \cD^{\leq t}\times \cH^2$
    such that the following holds for every $\qb = (q_1,\ldots,q_{\size{\qb}})\in \cD^{\leq t}$:
\begin{enumerate}
\item\label{step:PPNotBad} Assuming that  $\cG$ is $k$-wise independent over the elements of $\qb$, then $\left(f(q_1),\dots, f(q_{\size{\qb}})\right)_{f\la \PP_{h_1,h_2}(\cG,\Pi)}$ is uniform over  $\cR^{\size{\qb}}$ for every $u \in \cU$ such that $(\qb,
    u)\not\in \Bad$.

  \item\label{step:PPProbBad} Assuming that  $\cH$ is $k$-wise independent over the elements of $\qb$, then
    $\Pr_{u \la \cH^2}[(\qb,u)\in\Bad] \leq t/2^{\Omega(k)}$.\footnote{The function family we  consider above (\ie $\PP$) is slightly different than the one given in
    \cite{PaghP08}. Their construction maps element $x\in\cD$ to $F_1[h_1(x)]\oplus F_2[h_2(x)] \oplus g(x)$, where $F_1$ and $F_2$ are
    uniformly chosen vectors from $\cR^t$, $h_1,h_2 \colon \cD \mapsto [t]$ are uniformly chosen from a function family $\cH$ and $g\colon \cD \mapsto \cR$ is chosen uniformly from a function family $\cG$. Yet, the correctness of \cref{thm:PaghPagh} follows in a straightforward manner from \cite{PaghP08} original proof (specifically from Lemma 3.3 and Lemma
    3.4).}
\end{enumerate}
\end{theorem}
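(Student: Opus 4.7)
The plan is to define $\Bad$ via a combinatorial property of the cuckoo multigraph associated with $q$ and $(h_1, h_2)$, and then verify the required properties so as to ultimately invoke \cref{lemma:GeneralRandPiClose} (for the adaptive versions built on top of this statement). For $q = (q_1, \ldots, q_m) \in \cD^{\le t}$ and $(h_1, h_2) \in \cH^2$, let $\Gamma(q, h_1, h_2)$ denote the multigraph on vertex set $\cU$ with edge $e_i = \{h_1(q_i), h_2(q_i)\}$ for each $i \in [m]$. Define
\begin{align*}
\Bad = \bigl\{(q, (h_1, h_2)) : \Gamma(q, h_1, h_2) \text{ contains a non-empty Eulerian sub-multigraph with more than } k \text{ edges}\bigr\},
\end{align*}
where ``Eulerian'' is taken in the bipartite sense that distinguishes $h_1$-endpoints from $h_2$-endpoints (so that the $f_1$- and $f_2$-contributions are zeroed out separately). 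Left-monotonicity is immediate: appending queries to $q$ only inserts edges into $\Gamma$, and therefore preserves every pre-existing Eulerian substructure.

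For property (\ref{step:PPNotBad}), fix $(q, (h_1, h_2)) \notin \Bad$ and sample $f_1, f_2 \la \Pi$ independently and $g \la \cG$. The PP output on $q_i$ is $Y_i = f_1(h_1(q_i)) + f_2(h_2(q_i)) + g(q_i)$. To prove $(Y_1, \ldots, Y_m)$ is uniform in $\cR^m$, it suffices to show that every non-trivial linear combination $Z_v = \sum_i v_i Y_i$ is uniform in $\cR$. Split $Z_v = F_v + G_v$ with $G_v = \sum_i v_i g(q_i)$ and $F_v$ the remaining $f$-contributions. If $|\operatorname{supp}(v)| \le k$, then $G_v$ is uniform by the $k$-wise independence of $\cG$, so $Z_v$ is uniform. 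If $|\operatorname{supp}(v)| > k$, then by the definition of $\Bad$ the edge set $\{e_i : i \in \operatorname{supp}(v)\}$ is not Eulerian, so some vertex $u_0$ has a non-zero coefficient on a term $f_j(u_0)$ inside $F_v$; since $f_j(u_0)$ is uniform in $\cR$ and independent of everything else, $F_v$ and hence $Z_v$ is uniform.

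For property (\ref{step:PPProbBad}), I bound $\Pr_{(h_1,h_2)\la\cH^2}[(q, (h_1, h_2)) \in \Bad]$ by a union bound over all Eulerian substructures of size $s > k$. A specific substructure on $s$ edges imposes roughly $s/2$ hash-value coincidences among the $h_1(q_i), h_2(q_i)$; using the $k$-wise independence of $\cH$ and the slack $|\cU| \ge 4t$, each configuration occurs with probability at most $(O(1/t))^{s/2}$. A careful enumeration of the combinatorial shapes yields a geometric sum in $s$ bounded by $t / 2^{\Omega(k)}$. This is the content of Pagh and Pagh's Lemmas 3.3 and 3.4, which we would import essentially verbatim.

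The main obstacle is the probability bound in property (\ref{step:PPProbBad}): enumerating Eulerian substructures (simple cycles, disjoint cycle unions, overlapping cycles, $\ldots$) and bounding their joint probability under $k$-wise-independent hashing requires a delicate case analysis, and the factor $|\cU|\ge 4t$ is critical for the geometric series to converge. Since \cite{PaghP08} carry out this combinatorial analysis in detail, our task reduces to verifying that their bound can be expressed in the left-monotone $\Bad$ language established above, so that \cref{lemma:GeneralRandPiClose} is directly applicable and yields the adaptive security statement the rest of the paper relies on.
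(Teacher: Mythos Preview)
The paper does not prove this theorem: it is stated as a citation of \cite{PaghP08}, with a footnote pointing to their Lemmas~3.3 and~3.4 and noting that the reformulation here follows ``in a straightforward manner'' from those lemmas. So there is no in-paper proof to compare against; what you wrote is essentially a reconstruction of the cited argument, and it is structurally sound: the $\Bad$ set you define is left-monotone for the reason you give, the case split on $|\operatorname{supp}(v)|\le k$ versus $>k$ correctly routes small supports to the $k$-wise independence of $\cG$ and large supports to a fresh $f_j$-value, and you correctly flag that the real work is the probability bound in item~(\ref{step:PPProbBad}), which is precisely what Pagh and Pagh's Lemmas~3.3--3.4 supply.

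One genuine gap to be aware of: your ``every non-trivial linear combination $\sum_i v_i Y_i$ is uniform'' criterion, with $v_i\in\{0,1\}$, characterizes uniformity only when $\cR$ is an $\mathbb F_2$-vector space (so that the parity of $|\{i\in\operatorname{supp}(v):h_j(q_i)=u_0\}|$ really is the coefficient of $f_j(u_0)$). For a general abelian group $\cR$ as in the theorem statement, the right objects are characters of $\cR^{|\qb|}$, and ``not Eulerian'' no longer automatically forces a non-zero coefficient: three distinct queries landing at the same vertex $u_0$ can have their $w^{(i)}$'s cancel in $\cR$ even though the degree is odd. The clean fix is either to restrict to $\cR=\zo^r$ with XOR (which is all the paper ever uses) and reduce coordinate-wise to $\cR=\mathbb F_2$, or to replace the Eulerian criterion by Pagh and Pagh's original peeling criterion (a subset of size $>k$ in which every vertex has degree $\ge 2$), which yields uniformity for arbitrary $\cR$ via the usual ``process queries in reverse peeling order, each exposes a fresh $f_j$-value'' argument. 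Either route closes the gap, and the second one is what \cite{PaghP08} actually do.
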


\citet{PaghP08} concluded that for (the many) applications where the analysis is
applied \wrt a static set it is safe to use this family instead. However, as we can
see,  the function family  $\PP(\cH,\cG,\Pi)$ is not only close to being uniform in the eyes of a non-adaptive distinguisher, but also allows us to apply \cref{lemma:GeneralRandPiClose} to deduce its security in the  eyes of adaptive distinguishers. By plugging in \cref{thm:PaghPagh} into the general framework lemma
(\cref{lemma:GeneralRandPiClose}), we get the following result:
\begin{lemma}\label{lemma:PPUniStatClose}
Let $t\in\N$, let $\cH$, $\cG$ and $\Pi$ be as in \cref{thm:PaghPagh}, and let $\Dc$ be an {\sf adaptive}, $t$-query oracle-aided algorithm. Assuming that $\cH$ and $\cG$ are $k$-wise independent, then
\begin{align*}
\size{\Pr_{f\la\PP(\cH,\cG,\Pi)}[\Dc^f = 1] - \Pr_{\pi\la\Pi}[\Dc^\pi = 1]} \leq t/2^{\Omega(k)}.
\end{align*}
\end{lemma}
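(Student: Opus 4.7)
The plan is to derive Lemma~\ref{lemma:PPUniStatClose} as a direct instantiation of the general framework lemma (\cref{lemma:GeneralRandPiClose}) using the structural properties of $\PP$ supplied by \cref{thm:PaghPagh}. First I would recast the randomness of $\PP(\cH,\cG,\Pi)$ in the shape required by \cref{lemma:GeneralRandPiClose}: designate the hash-function pair $u=(h_1,h_2)\in \cH^2$ as the ``public'' randomness (playing the role of $\cU$ in the framework) and the remaining randomness, the pair of random tables together with $g$, $v=(\pi_1,\pi_2,g)\in \Pi\times\Pi\times \cG$, as the ``private'' randomness (playing the role of $\cV$). The sampled function is then $f_{u,v}(x)=\pi_1(h_1(x))\oplus \pi_2(h_2(x))\oplus g(x)$, so a uniform element of $\PP(\cH,\cG,\Pi)$ is exactly $f_{u,v}$ with $(u,v)$ drawn uniformly from $\cU\times\cV$.

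Next I would invoke \cref{thm:PaghPagh} with the same parameter $k$ to obtain a left-monotone set $\Bad\subseteq \cD^{\le t}\times \cH^2$ and verify that its two guarantees translate term-by-term into the two hypotheses of \cref{lemma:GeneralRandPiClose}. For any fixed $\qb\in\cD^{\le t}$: item~\ref{step:PPNotBad} of \cref{thm:PaghPagh}, together with the $k$-wise independence of $\cG$ on the coordinates in $\qb$, yields that for every $u\notin \Bad$ at $\qb$ the tuple $(f_{u,v}(q_1),\dots,f_{u,v}(q_{\size{\qb}}))$ is uniform on $\cR^{\size{\qb}}$ when $v$ is drawn uniformly; and item~\ref{step:PPProbBad}, together with the $k$-wise independence of $\cH$, gives $\Pr_{u\la \cH^2}[(\qb,u)\in\Bad]\le t/2^{\Omega(k)}$. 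These are precisely conditions (1) and (2) of \cref{lemma:GeneralRandPiClose} with $\eps=t/2^{\Omega(k)}$.

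Finally, since $\Bad$ is left-monotone (this is asserted in \cref{thm:PaghPagh} and is what makes the whole approach work), the hypotheses of \cref{lemma:GeneralRandPiClose} are all met, and applying that lemma to any adaptive $t$-query distinguisher $\Dc$ yields
\[
\left|\Pr_{f\la \PP(\cH,\cG,\Pi)}[\Dc^{f}=1]-\Pr_{\pi\la\Pi}[\Dc^{\pi}=1]\right|\le t/2^{\Omega(k)},
\]
which is the desired bound. The only subtle point — and what I expect to be the main conceptual hurdle rather than a technical obstacle — is checking that the randomness split is clean: the ``bad'' event is a property of $(\qb,u)$ alone and does not depend on $v$, while conditioned on $(\qb,u)\notin\Bad$ the randomness of $v$ alone suffices to make the outputs uniform. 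Both of these are exactly what \cref{thm:PaghPagh} is designed to provide, so once the identification of $\cU$ and $\cV$ is made, the proof reduces to a clean plug-in.
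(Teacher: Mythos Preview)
Your proposal is correct and follows essentially the same approach as the paper: set $\cU=\cH\times\cH$, $\cV=\Pi\times\Pi\times\cG$, take the left-monotone $\Bad$ set from \cref{thm:PaghPagh}, verify that items~\ref{step:PPNotBad} and~\ref{step:PPProbBad} give exactly the two hypotheses of \cref{lemma:GeneralRandPiClose} with $\eps=t/2^{\Omega(k)}$, and apply that lemma. The identification of the randomness split and the plug-in argument are exactly what the paper does.
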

\begin{proof}
  Let $\cU=\cH\times\cH$ and $\cV=\Pi\times\Pi\times\cG$. For $(h_1,h_2)\in\cU$ and
  $(\pi_1,\pi_2,g)\in\cV$, let $F_{(h_1,h_2),(\pi_1,\pi_2,g)} = \pi_1\circ h_1
  \oplus \pi_2 \circ h_2 \oplus g$, and let $\F=\set{F_{u,v}\colon \cD \mapsto
    \cR}_{(u,v)\in \cU \times\cV}$. Finally, let \Bad be the set \Bad of
  \cref{thm:PaghPagh}. We prove the lemma showing that the above sets meet the
  requirements stated in \cref{lemma:GeneralRandPiClose}.

  \cref{step:PPNotBad} of \cref{thm:PaghPagh} and the assumed independence of $\cG$ and $\cH$, yield that the first requirement of
  \cref{lemma:GeneralRandPiClose} is satisfied.  \cref{step:PPProbBad} of \cref{thm:PaghPagh} yields that the second
  requirement of \cref{lemma:GeneralRandPiClose} is satisfied for $\eps= t/2^{\Omega(k)}$. Hence,  the proof of the lemma follows by \cref{lemma:GeneralRandPiClose}.
\end{proof}

\begin{remark}
For some of our applications, see \cref{sec:DomainExtension,sec:NonadaptivetoAdaptive,sec:InstantADW}, we need to apply \cref{lemma:PPUniStatClose} with  efficient $k$-wise independent function family ensembles mapping strings of length $n$ to the set $[t(n)]_\zn$, where $t$ is an efficiently computable function. It is easy to see (\cf \cite{BermanH15}) that such ensembles exist for any efficiently computable  $t$ that is a power of two. By considering $t'(n) = 2^{\floor{\log (t(n))}}$, we use these ensembles for our applications, while only causing factor of two loss in the resulting security.
\end{remark}

We use the above function family of \citet{PaghP08}  to extend the domain of
pseudorandom functions (see \cref{sec:DomainExtension}), and to transform a
non-adaptive pseudorandom function into an adaptive one (see \cref{sec:NonadaptivetoAdaptive}). In \cref{sec:InstantADW} we instantiate the above framework with the more advanced function family of \citet{AumullerDW14}, to get more versatile variants of the above applications.

\section{PRF Domain Extension}\label{sec:DomainExtension}
In this section we use the function family $\PP$ of \citet{PaghP08} (see
\cref{sec:GeneralFrameworkMain}) to extend a domain of a given PRF.

\begin{theorem}[Restating \cref{thm:IntroPPDomainExtension}]\label{thm:PPDomainExtension}
Let $\cH= \set{\cH_n \colon \zo^{d(n)} \mapsto \zo^{s(n)}}_{n\in\N}$ and $\cG = \set{\cG_n \colon \zo^{d(n)} \mapsto \zo^{r(n)}}_{n\in\N}$ be efficient $k(n)$-wise independent function family ensembles, and let $\F=\set{\F_n\colon\zo^{s(n)}\mapsto\zo^{r(n)}}_{n\in\N}$ be a $(q,t,\eps)$-PRF. Then,
$$\PP(\cH,\cG,\F) = \set{\PP(\cH_n,\cG_n,\F_n) \colon \zo^{d(n)}\mapsto\zo^{r(n)}}_{n\in\N}$$ is a
$(q,t-p\cdot q,2\eps + q/2^{\Omega(k)})$-PRF, where $p$ is a polynomial determined by the evaluation and sampling time of $\cH$, $\cG$ and $\F$ and $q(n)\leq 2^{s(n)-2}$ for every $n\in\N$.
\end{theorem}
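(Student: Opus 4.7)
The plan is a two-step hybrid argument that interpolates the family $\PP(\cH,\cG,\F)$ with the truly random intermediate $\PP(\cH,\cG,\Pi_{s(n),r(n)})$, bounding each gap by a different technique. Fix an adaptive $(t-p\cdot q)$-time, $q$-query distinguisher $\Dc$; I want to show
\[
\size{\Pr_{f\la \PP(\cH,\cG,\F)_n}[\Dc^{f}(1^n)=1] - \Pr_{\rho\la \Pi_{d(n),r(n)}}[\Dc^{\rho}(1^n)=1]} \le 2\eps(n) + q(n)/2^{\Omega(k(n))}.
\]
Add and subtract the intermediate probability $\Pr_{f\la \PP(\cH,\cG,\Pi_{s,r})_n}[\Dc^{f}=1]$ and apply the triangle inequality.

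For the information-theoretic gap (intermediate vs.\ truly random), the hypothesis $q(n)\le 2^{s(n)-2}$ matches exactly the condition $\size{\cU}\ge 4t$ in \cref{thm:PaghPagh} (with $t=q(n)$, $\cU=\zo^{s(n)}$, $\cR=\zo^{r(n)}$). Hence \cref{lemma:PPUniStatClose} applies directly with parameter $k(n)$, yielding advantage at most $q(n)/2^{\Omega(k(n))}$ for any adaptive $q$-query distinguisher, and in particular for $\Dc$.

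For the computational gap ($\F$ vs.\ $\Pi_{s,r}$ inside $\PP$), I build a black-box reduction $\Dc'$ that takes two oracles $O_1,O_2$: sample $h_1,h_2\la\cH_n$ and $g\la\cG_n$ once at the outset, and answer each query $x$ of $\Dc$ with $O_1(h_1(x))\oplus O_2(h_2(x))\oplus g(x)$. When $(O_1,O_2)$ are independent copies of $\F_n$, $\Dc'$ perfectly simulates $\PP(\cH,\cG,\F)_n$ for $\Dc$; when they are independent uniform functions, it perfectly simulates $\PP(\cH,\cG,\Pi_{s,r})_n$. Thus $\Dc'$ is a $2$-oracle, adaptive, $q$-query distinguisher for $\F$ with advantage equal to the gap we wish to bound, and its running time is $(t-p\cdot q) + q\cdot p_0 \le t$ for a polynomial $p_0$ accounting for one evaluation each of $h_1,h_2,g$ per query (together with the one-time sampling cost, absorbed into $p$). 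Invoking \cref{lemma:TwoOracle} with $s=2$ produces a single-oracle adaptive distinguisher against $\F$ of (essentially) the same running time whose advantage is at least half the $2$-oracle advantage; since $\F$ is a $(q,t,\eps)$-PRF, this single-oracle advantage is at most $\eps(n)$, so the $2$-oracle advantage is at most $2\eps(n)$.

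Adding the two bounds gives $2\eps(n)+q(n)/2^{\Omega(k(n))}$, as claimed. The only delicate point is accounting for running time: the polynomial $p$ must simultaneously absorb the per-query cost of evaluating $h_1,h_2,g$ (bounded by \cref{fact:k-indep}), the constant per-query overhead introduced by \cref{lemma:TwoOracle}, and the amortized cost of the initial sampling of $h_1,h_2,g$. None of this is deep, but it is worth noting that choosing $p$ large enough up front is what lets both steps of the hybrid go through with the stated time budget. Beyond that, the argument is a routine composition of \cref{lemma:PPUniStatClose} with the standard many-oracle-to-single-oracle reduction (\cref{lemma:TwoOracle}).
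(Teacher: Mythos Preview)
Your proposal is correct and follows essentially the same route as the paper: a two-step hybrid via $\PP(\cH,\cG,\Pi_{s,r})$, bounding the computational gap by building a two-oracle distinguisher and applying \cref{lemma:TwoOracle} (yielding the factor $2\eps$), and bounding the information-theoretic gap directly via \cref{lemma:PPUniStatClose}. The paper's Claim~4.2 is exactly your two-oracle reduction, and your time accounting with the polynomial $p$ mirrors the paper's choice $p=2(\eval_\cH+\eval_\cG+\eval_\cF)$.
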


To prove \cref{thm:PPDomainExtension} we first show that $\PP(\cH,\cG,\cF)$ is
computationally close to $\PP(\cH,\cG,\Pi)$, where $\cF$ is a PRF family and
$\Pi$ is the set of all functions. Then, we invoke \cref{lemma:PPUniStatClose}
that shows that $\PP(\cH,\cG,\Pi)$ is statistically close to the set of all
functions with appropriate domain and range.

\begin{proof}[Proof of \Cref{thm:PPDomainExtension}]
  Let $\cH, \cG$ and $\F$ be as in the statement.
\begin{claim}\label{lemma:DomainExtensionPP}
  For every $\paren{t - 2q\cdot (\eval_\cH + \eval_\cG + \eval_\cF)}$-time
  $q$-query distinguisher $\Dc$, where $\eval_\cH,\eval_\cG,\eval_\cF\colon
  \N\mapsto \N$ are the evaluation and sampling times of $\cH$, $\cG$ and $\cF$,
  respectively, and for all large enough $n$:
\begin{align*}
  \bsize{\Pr_{f \la \PP(\cH_n,\cG_n,\cF_n)}[\Dc^{f}(1^n) =
    1]-\Pr_{f \la \PP(\cH_n,\cG_n,\Pi_{s(n),r(n)})}[\Dc^f(1^n) = 1]} \leq
  2\eps(n).
\end{align*}
\end{claim}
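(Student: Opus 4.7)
The plan is a standard two-step hybrid argument that swaps out the two independent copies of $\cF_n$ used inside $\PP(\cH_n,\cG_n,\cF_n)$ one at a time, replacing each by an independent truly random function drawn from $\Pi_{s(n),r(n)}$. For $i\in\{0,1,2\}$, I would define an intermediate distribution $\F_n^{(i)}$ in which the first $i$ of the two $\cF_n$-slots are replaced by a fresh uniformly random function, so that $\F_n^{(0)}=\PP(\cH_n,\cG_n,\cF_n)$ and $\F_n^{(2)}=\PP(\cH_n,\cG_n,\Pi_{s(n),r(n)})$. By the triangle inequality it then suffices to bound the distinguishing advantage of $\Dc$ between $\F_n^{(i)}$ and $\F_n^{(i+1)}$ by $\eps(n)$ for each $i\in\{0,1\}$ and then sum.

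For each hybrid step I would construct a single-oracle PRF distinguisher $\Dc_i$ against $\cF_n$ whose advantage equals $\Dc$'s advantage across that step. Concretely, $\Dc_i$ samples $h_1,h_2\la\cH_n$ and $g\la\cG_n$, together with a representation of the non-challenge slot (a freshly sampled $f\la\cF_n$ when that slot is still PRF, or an on-the-fly lazily sampled uniformly random function when it has already been swapped to $\Pi_{s(n),r(n)}$). It then runs $\Dc$, answering each query $x$ by querying its own oracle on the hash value corresponding to the challenge slot, computing the other slot internally on the other hash value, and XORing with $g(x)$. When $\Dc_i$'s oracle is drawn from $\cF_n$ the view of $\Dc$ is distributed exactly as $\F_n^{(i)}$, and when the oracle is drawn from $\Pi_{s(n),r(n)}$ it is distributed exactly as $\F_n^{(i+1)}$. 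The ordering and adaptivity of $\Dc$'s queries are preserved by the simulation, so $\Dc_i$ is a legitimate adaptive PRF distinguisher against $\cF_n$. An essentially equivalent alternative would be to invoke \cref{lemma:TwoOracle} directly, viewing $\PP$ as being built on top of a two-oracle distinguisher against $\cF$ and paying the $s=2$ factor up front.

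The only real piece of bookkeeping is the running time of $\Dc_i$. Each of the at most $q(n)$ queries costs two $\cH_n$-evaluations, one $\cG_n$-evaluation, one internal function evaluation (either a call to the sampled $\cF_n$-instance of cost $\eval_\cF(n)$, or lazy sampling, whose amortized cost per query is dominated by the same), one oracle call, and a constant-time XOR. Together with the one-time cost of sampling keys for $\cH_n$, $\cG_n$ and $\cF_n$, this adds at most $2q(n)\cdot(\eval_\cH(n)+\eval_\cG(n)+\eval_\cF(n))$ to the running time of $\Dc$, which is exactly the slack assumed in the statement of the claim. Hence $\Dc_i$ runs in total time at most $t(n)$, its advantage against $\cF_n$ is at most $\eps(n)$ by the assumed PRF security of $\cF$, and summing over $i\in\{0,1\}$ gives the claimed $2\eps(n)$ bound. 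I expect no genuine obstacle: the two copies of $\cF_n$ in $\PP$ are independent, so the plain hybrid suffices and no more delicate argument (e.g., correlated keys) is required.
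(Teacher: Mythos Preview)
Your proposal is correct and essentially the same as the paper's proof. The paper builds a two-oracle distinguisher $\hcT$ (sampling $h_1,h_2,g$ and answering $\Dc$'s queries via its two oracles) and then invokes \cref{lemma:TwoOracle} to pay the factor $2$; your direct two-step hybrid is exactly the unpacking of that lemma, and you even note this alternative yourself.
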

\begin{proof}
  Assume that there is a distinguisher $\Dc$ as in the statement for which
  \begin{align*}
    \bsize{\Pr_{f \la \PP(\cH_n,\cG_n,\cF_n)}[\Dc^{f}(1^n) =
      1]-\Pr_{f \la \PP(\cH_n,\cG_n,\Pi_{s(n),r(n)})}[\Dc^f(1^n) = 1]} >
    2\eps(n).
  \end{align*}
  We show how it can be used to break the PRF. We define a $(t-2q(n)\cdot
  \eval_\F(n))$-time $q$-query two-oracle distinguisher $\hcT$:
  \begin{algorithm}[$\hcT$]\label{alg:PRGTwoOracle}~
    \begin{description}
    \item[Input:] $1^n$.
    \item[Oracle:]  functions $\phi_1,\phi_2$ from $\cD_n$ to $\cR_n$.
    \end{description}
    \begin{enumerate}
    \item Set $h_1,h_2\la\cH_n$, $g\la\cG_n$.
    \item Set $f=(\phi_1\circ h_1)\xor (\phi_2\circ h_2) \xor g$.
    \item Emulate $\Dc^f(1^n)$.
    \end{enumerate}
  \end{algorithm}
  Note that $\hcT(1^n)$ makes $q(n)$ queries to it oracle where each query
  consists of a call to $\phi_1$ and $\phi_2$ and an evaluation of $h_1,h_2$ and
  $g$, so it can be implemented to run in time as $\Dc$ plus at most $2q(n)\cdot
  (\eval_\cH(n) + \eval_\cG(n))$ which is exactly $(t-2q(n)\cdot
  \eval_\F(n))$. Observe that in case $\phi_1$ and $\phi_2$ are uniformly drawn
  from $\F_n$, then the emulation of $\Dc^f(1^n)$ done in
  $\hcT^{\phi_1,\phi_2}(1^n)$ is identical to a random execution of $\Dc^f(1^n)$
  with $f\la \PP(\cH_n,\cG_n,\cF_n)$. Similarly, in case $\phi_1$ and $\phi_2$
  are uniformly drawn from $\Pi_{s(n),r(n)}$, then the emulation is identical to
  a random execution of $\Dc^f(1^n)$ with $f\la
  \PP(\cH_n,\cG_n,\Pi_{s(n),r(n)})$. Thus,
\begin{align*}
  & \bsize{\Pr_{(f_1,f_2) \la \F_n\times\F_n}[\hcT^{f_1,f_2}(1^n)
      = 1] - \Pr_{(\pi_1,\pi_2) \la
        \Pi_{s(n),r(n)}\times\Pi_{s(n),r(n)}}[\hcT^{\pi_1,\pi_2}(1^n) = 1]}\\
  &= \bsize{\Pr_{f \la \PP(\cH_n,\cG_n,\cF_n)}[\Dc^{f}(1^n) = 1]-\Pr_{f \la
      \PP(\cH_n,\cG_n,\Pi_{s(n),r(n)})}[\Dc^f(1^n) = 1]}.\nonumber
\end{align*}
Hence, \cref{lemma:TwoOracle} yields that there exists a single-oracle distinguisher $\hcD$ that when invoked on input of length $n$ makes $q(n)$ queries to its oracle and runs in time $t(n) +  2q(n)\cdot(\eval_\cH(n) + \eval_\cG(n)) + 2q(n)\cdot\eval_\cF(n)\leq t(n) + 2q(n)\cdot(\eval_\cH(n) + \eval_\cG(n) + \eval_\cF(n))$, such that
\begin{align*}
  \lefteqn{\bsize{\Pr_{f \la \cF_n}[\hcD^{f}(1^n) = 1] - \Pr_{\pi \la
        \Pi_{s(n),r(n)}}[\hcD^\pi(1^n) = 1]}} \\
  &\geq \frac12 \cdot\bsize{\Pr_{f \la \PP(\cH_n,\cG_n,\cF_n)}[\Dc^{f}(1^n) =
    1]-\Pr_{f \la \PP(\cH_n,\cG_n,\Pi_{s(n),r(n)})}[\Dc^f(1^n) = 1]} > \eps(n),
\end{align*}
which is a contradiction.
\end{proof}

Let $\Dc$ be a $q$-query $(t-p\cdot q)$-time distinguisher, for
$p\eqdef 2(\eval_\cH + \eval_\cG + \eval_\cF)$. Since $\cH$ and $\cG$ are
$k$-wise independent function families, we invoke \cref{lemma:PPUniStatClose}
(instantiated with the domain of bit strings by setting $t=q$, $\cD=\zo^{d}$,
$\cU=\cS=\zo^{s}$ and $\cR=\zo^{r}$; note that $\size{U} = 2^s \geq 4q =4t$) and
using \cref{lemma:DomainExtensionPP} and the triangle inequality to get that
\begin{align*}
  \size{\Pr_{f \la \PP(\cH_n,\cG_n,\cF_n)}[\Dc^{f}(1^n) = 1] - \Pr_{\pi\la\Pi_{d(n),r(n)}}[\Dc^\pi(1^n)=1]} \leq 2\eps(n) + q(n)/2^{\Omega(k(n))}
\end{align*}
for large enough $n$.
\end{proof}

Note that in order for \cref{thm:PPDomainExtension} to be useful, we have to set $k(n) = \Omega(\log q(n))$. In \cref{sec:InstantADW} we show how to achieve domain extension using functions with less independence, but with the cost of additional calls to the PRF.



\section{From Non-Adaptive to Adaptive PRF}\label{sec:NonadaptivetoAdaptive}
In this section we use the function family $\PP$ of \citet{PaghP08} (see
\cref{sec:GeneralFrameworkMain}), to transform an non-adaptive PRF into an
adaptive one in a security preserving manner. To ease notations, we assume that
the given non-adaptive PRF is length preserving.

\begin{theorem}\label{thm:PPNonAdaptiveMain}
  Let $q$ be a polynomial-time computable integer function with $q(n)\leq 2^{n-2}$ for every $n\in\N$, let $\cH= \set{\cH_n
    \colon \zn \mapsto [4q(n)]_\zn}_{n\in\N}$ and $\cG = \set{\cG_n \colon \zn
    \mapsto \zn}_{n\in\N}$ be efficient $(c\cdot \log q)$-wise independent
  function family ensembles, where $c>0$ is universal.

If  $\F = \set{\F_n
    \colon \zn \mapsto \zn}_{n\in\N}$ is a non-adaptive
  $(4q,p\cdot t,\eps)$-PRF for some $p\in\poly$ determined by the
  evaluation time of $q,\cH,\cG$ and $\F$, then $\PP(\cH,\cG,\cF)$ is an
  \emph{adaptive} $(q,t, 2\eps+1/q)$-PRF.
\end{theorem}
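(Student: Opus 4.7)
The plan is to mimic the proof of \cref{thm:PPDomainExtension}, with one crucial twist that exploits the small range of $\cH$. I would proceed in two steps: first show that $\PP(\cH,\cG,\cF)$ is computationally indistinguishable from $\PP(\cH,\cG,\Pi)$ (with $\Pi$ the set of all functions from $\zn$ to $\zn$) against adaptive $q$-query distinguishers, and then invoke \cref{lemma:PPUniStatClose} to conclude that $\PP(\cH,\cG,\Pi)$ is statistically close to a uniformly random function.

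For the first step, let $\Dc$ be an adaptive $q$-query $t$-time distinguisher for $\PP(\cH,\cG,\cF)$. I would construct a two-oracle \emph{non-adaptive} distinguisher $\hcT$ for $\cF$ as follows: $\hcT$ samples $h_1,h_2 \la \cH_n$ and $g \la \cG_n$, queries each of its two oracles on \emph{every} point of $[4q(n)]_\zn$, and then uses the resulting two lookup tables to emulate $\Dc$ adaptively by answering a query $x$ with $\phi_1(h_1(x)) \xor \phi_2(h_2(x)) \xor g(x)$. Because the ranges of $h_1$ and $h_2$ are contained in $[4q(n)]_\zn$, the $4q(n)$ non-adaptive queries to each oracle suffice to answer any query $\Dc$ might issue. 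When $(\phi_1,\phi_2) \la \cF_n \times \cF_n$ the emulation is distributed as $\PP(\cH_n,\cG_n,\cF_n)$, and when $(\phi_1,\phi_2) \la \Pi \times \Pi$ it is distributed as $\PP(\cH_n,\cG_n,\Pi)$. Applying \cref{lemma:TwoOracle} to convert $\hcT$ into a single-oracle non-adaptive $4q$-query distinguisher, and then appealing to the non-adaptive $(4q, p\cdot t, \eps)$-security of $\cF$, the advantage of $\Dc$ in distinguishing $\PP(\cH,\cG,\cF)$ from $\PP(\cH,\cG,\Pi)$ is at most $2\eps$, provided $p$ is chosen large enough to absorb the polynomial-time cost of sampling $h_1,h_2,g$, evaluating them, and performing the table lookups.

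For the second step, since $\cH$ and $\cG$ are $(c\log q)$-wise independent and $\size{[4q(n)]_\zn}=4q(n)$ satisfies the cardinality requirement of \cref{lemma:PPUniStatClose} with $t=q$, $\cD=\zn$, $\cU=[4q(n)]_\zn \subseteq \cS=\zn$, $\cR=\zn$, and $k=c\log q(n)$, that lemma yields that any adaptive $q$-query distinguisher can tell $\PP(\cH,\cG,\Pi)$ from a uniformly random function from $\zn$ to $\zn$ with advantage at most $q/2^{\Omega(c\log q)}$. Choosing the universal constant $c$ large enough makes this bound at most $1/q$, and combining with the first step via the triangle inequality yields the claimed $2\eps + 1/q$ bound.

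The main obstacle, resolved by the above trick, is the apparent mismatch between $\Dc$'s adaptive access to $\PP(\cH,\cG,\cF)$ and the non-adaptive access that the reduction has to $\cF$: a naive simulation would have to call $\cF$ adaptively on the points $h_i(x_j)$ determined by $\Dc$'s evolving view, which the non-adaptive security of $\cF$ does not cover. Exploiting the fact that the output range of $\cH$ has size only $4q(n)$ --- so that \emph{all} possible queries into $\cF$ that the reduction could ever need can be enumerated in advance --- is precisely what lets this construction beat the birthday-bound degradation of \cite{BermanH15} while still making only two calls to the underlying non-adaptive PRF.
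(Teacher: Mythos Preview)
Your proposal is correct and follows essentially the same route as the paper's proof: the paper's \cref{lemma:NAPP} is exactly your two-oracle distinguisher that pre-queries all of $[4q(n)]_\zn$ and then simulates $\Dc$ from the resulting tables, followed by \cref{lemma:TwoOracle} and \cref{lemma:PPUniStatClose} with $k=\Theta(\log q)$. The only cosmetic difference is that the paper queries the oracles before sampling $h_1,h_2,g$ (irrelevant, since the two steps are independent) and spells out the precise polynomial $p$ rather than leaving it implicit.
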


\cref{thm:IntroNonAdaptiveMain} is a special case of
\cref{thm:PPNonAdaptiveMain}, starting with a $(p\cdot q)$-PRF, where
$p\in\poly$ is determined by $q$ and the evaluation time of $\cH,\cG$ and $\F$
(assuming that $p(n) \ge 4$), and ending with $\PP(\cH,\cG,\cF)$ which is an
adaptive $q$-PRF.

To prove \cref{thm:PPNonAdaptiveMain} we begin by showing that if
$\PP(\cH,\cG,\cF)$ is distinguishable from $\PP(\cH,\cG,\Pi)$ by an adaptive
distinguisher, where $\cF$ is \emph{non-adaptive} pseudorandom function and
$\Pi$ is a truly random function with the same domain and range as $\cF$, then
there is a non-adaptive distingusiher that can succeed in the same task almost
as well. Then, we invoke \cref{lemma:PPUniStatClose} to get the theorem.
\begin{proof}[Proof of \cref{thm:PPNonAdaptiveMain}]
  Let $\cH, \cG$ and $\F$ be as in the statement and let
  $p=\eval_q +8q(\eval_\cH + \eval_\cG + \eval_\cF)$, where $\eval_q$ is the
  evaluation time of $q$, and $\eval_\cH$, $\eval_\cG$ and $\eval_\cF$ are the
  sampling and evaluation time of $\cH$, $\cG$ and $\cF$, respectively.
  \begin{claim}\label{lemma:NAPP}
    For every $t$-time $q$-query oracle-aided adaptive distinguisher $\Dc$,
    there exists a $\paren{p\cdot t}$-time $(4q)$-query, non-adaptive, oracle-aided distinguisher
    $\hcD$ such that for every $n\in \N$ and $q(n)\leq
    2^{n-2}$:
    \begin{align*}
      \lefteqn{\bsize{\Pr_{f \la \cF_n}[\hcD^{f}(1^n) = 1] - \Pr_{\pi \la
            \Pi_{n}}[\hcD^\pi(1^n) = 1]}} \\
      &\geq \frac12 \cdot\bsize{\Pr_{f \la \PP(\cH_n,\cG_n,\cF_n)}[\Dc^{f}(1^n) =
        1]-\Pr_{f \la \PP(\cH_n,\cG_n,\Pi_{n})}[\Dc^f(1^n) = 1]}.
    \end{align*}
  \end{claim}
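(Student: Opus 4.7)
The plan is to reduce the adaptive distinguishability of $\PP(\cH,\cG,\cF)$ versus $\PP(\cH,\cG,\Pi_n)$ to the \emph{non-adaptive} distinguishability of $\cF$ versus $\Pi_n$. The key observation is that, although $\Dc$ queries its oracle adaptively, evaluating any function $f \in \PP_{h_1,h_2}(\cG_n,\cF_n)$ on an input $x$ only requires accessing the underlying PRF at points in the ranges of $h_1$ and $h_2$, both of which are contained in the fixed set $[4q(n)]_\zn$ of size $4q(n)$---a set that depends only on $n$, not on the hash functions or on $\Dc$'s queries. A reduction can therefore pre-query the entire relevant ``slice'' of its oracle up front, converting the adaptive-versus-non-adaptive gap into the cost of a few extra queries.

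Following the pattern of \cref{lemma:DomainExtensionPP}, I would first construct a \emph{non-adaptive two-oracle} distinguisher $\hcT$: on input $1^n$, $\hcT$ enumerates $[4q(n)]_\zn$ and queries each of its oracles $\phi_1,\phi_2$ on all $4q(n)$ of those points non-adaptively. It then samples $h_1,h_2 \la \cH_n$ and $g \la \cG_n$, defines $f(x) \eqdef \phi_1(h_1(x)) \xor \phi_2(h_2(x)) \xor g(x)$ via table lookups into the precomputed responses, internally emulates $\Dc^f(1^n)$, and outputs whatever $\Dc$ outputs. Since $\hcT$'s query set depends only on $n$, $\hcT$ is non-adaptive. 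When $(\phi_1,\phi_2) \la \cF_n^2$ the emulated view of $\Dc$ is distributed identically to $\Dc^f(1^n)$ with $f \la \PP(\cH_n,\cG_n,\cF_n)$, and similarly for the random case; hence $\hcT$ inherits the full distinguishing advantage of $\Dc$.

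Finally, I would invoke \cref{lemma:TwoOracle} with $s=2$ to obtain a single-oracle non-adaptive distinguisher $\hcD$, paying exactly the $\tfrac12$ factor in advantage that appears in the claim, plus an additive $s\cdot 4q(n) \cdot \eval_\cF(n)$ term in the running time. Query-wise, $\hcD$ makes $4q(n)$ (non-adaptive) queries; time-wise, $\hcD$ runs in at most $t(n) + O(q(n))(\eval_\cH + \eval_\cG + \eval_\cF) + \eval_q(n)$, which is comfortably at most $p \cdot t(n)$ for the $p$ defined in the surrounding proof.

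The principal (and essentially only) obstacle is the adaptive-versus-non-adaptive mismatch; this is dispatched entirely by the smallness of $\cH$'s range. The hypothesis $q(n) \leq 2^{n-2}$ is used precisely to guarantee $[4q(n)]_\zn \subseteq \zn$, so that all $4q(n)$ pre-queries are well-defined inputs to $\cF_n$ and $\hcD$ respects its declared $4q$-query budget. No other structural property of $\PP$ beyond the fact that each output depends on at most two $\cF$-values at known, preimage-oblivious locations is needed for this step.
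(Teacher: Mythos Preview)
Your proposal is correct and follows essentially the same approach as the paper: construct a two-oracle non-adaptive distinguisher $\hcT$ that pre-queries both oracles on all of $[4q(n)]_\zn$, then simulates $\Dc$ against $\PP$ using table lookups, and finally apply \cref{lemma:TwoOracle} with $s=2$ to collapse to a single oracle at the cost of the factor $\tfrac12$. Your identification of the key point---that the range of $\cH$ is the fixed set $[4q(n)]_\zn$, so all required oracle values can be fetched in advance---matches the paper's reasoning exactly.
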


  \begin{proof}
    The proof follows along similar lines to the proof of \cite[Lemma 3.3]{BermanH15}.
    Let $\hcT$ be the following two-oracle distinguisher:
    \begin{algorithm}[$\hcT$]\label{alg:TwoOracleDis}~
      \begin{description}
      \item[Input:] $1^n$.
      \item[Oracles:] Functions $\phi_1$ and $\phi_2$ from $\zn$ to $\zn$.
      \end{description}
      \begin{enumerate}
      \item \label{Step:UniFirst} Compute $\phi_1(x)$ and $\phi_2(x)$ for every $x \in
        [4q(n)]_{\zn}$.

      \item Set $f=(\phi_1\circ h_1)\xor (\phi_2\circ h_2) \xor g$, where $h_1,h_2\la\cH_n$ and $g\la\cG_n$.

      \item Emulate $\Dc^f(1^n)$: answer a query $x$ to $\phi_1$ and $\phi_2$ made by
        $\Dc$ with $f(x)$, using the information obtained in Step
        \ref{Step:UniFirst}.
      \end{enumerate}
    \end{algorithm}
    Note that $\hcT(1^n)$ makes $4q(n)$ \emph{non-adaptive} queries to $\phi_1$ and $\phi_2$, and it can be implemented to run in time $\eval_q(n) + 8q(n) + t(n) + q(n)\cdot(\eval_\cH(n) + \eval_\cG(n))$. Observe that the definition of $\PP$ guarantees that $\phi_1$ and $\phi_2$ are only queried on the first $4q(n)$ elements of their domain. Hence, in case
    $\phi_1$ and $\phi_2$ are uniformly drawn from $\F_n$, then the emulation of $\Dc^f(1^n)$ done in $\hcT^{\phi_1,\phi_2}$ is identical to a random execution of $\Dc^f(1^n)$ with $f\la \PP(\cH_n,\cG_n,\cF_n)$. Similarly, in case $\phi_1$ and $\phi_2$ are uniformly drawn from $\Pi_n$, then the emulation is identical to a random execution of $\Dc^f(1^n)$ with $f\la \PP(\cH_n,\cG_n,\Pi_n)$. Thus,
    \begin{align*}
      \lefteqn{\size{\Pr_{(f_1,f_2) \leftarrow \F_n\times\F_n}[\hcT^{f_1,f_2}(1^n) =
          1] - \Pr_{(\pi_1,\pi_2) \leftarrow
            \Pi_n\times\Pi_n}[\hcT^{\pi_1,\pi_2}(1^n) = 1]}}\\
      &= \size{\Pr_{f \la \PP(\cH_n,\cG_n,\cF_n)}[\Dc^{f}(1^n) = 1]-\Pr_{f \la
          \PP(\cH_n,\cG_n,\Pi_n)}[\Dc^f(1^n) = 1]}.
    \end{align*}
    Hence, \cref{lemma:TwoOracle} yields that there exists a non-adaptive,
    single-oracle distinguisher $\hcD$ that when invoked on input of length $n$
    makes $4q(n)$ queries and runs in time
    $\eval_q(n) + 8q(n) + t(n) + q(n)\cdot(\eval_\cH(n) + \eval_\cG(n)) +
    2q(n)\cdot\eval_\F(n) \leq p(n)\cdot t(n)$, such that
    \begin{align*}
      \lefteqn{\size{\Pr_{f \la \F_n}[\hcD^{f}(1^n) = 1] - \Pr_{\pi \la
            \Pi_n}[\hcD^{\pi}(1^n) = 1]}}\\
      &\geq \frac12\cdot\size{\Pr_{f \la \PP(\cH_n,\cG_n,\cF_n)}[\Dc^{f}(1^n) = 1]-\Pr_{f \la
          \PP(\cH_n,\cG_n,\Pi_n)}[\Dc^f(1^n) = 1]},
    \end{align*}
    for every $n\in\N$.
  \end{proof}

  Since $\F$ is a non-adaptive $(4q,p\cdot t,\eps)$-PRF (for a large enough
  polynomial $p$ as defined above), \cref{lemma:NAPP} implies that for every
  adaptive $t$-time $q$-query
  oracle-aided algorithm $\Dc$, it holds that
\begin{align*}
\size{\Pr_{f \la \PP(\cH_n,\cG_n,\cF_n)}[\Dc^{f}(1^n) = 1]-\Pr_{f \la
      \PP(\cH_n,\cG_n,\Pi_n)}[\Dc^f(1^n) = 1]} \leq 2\eps(n),
\end{align*}
for large enough $n$.  Since $\cH$ and $\cG$ are $k(n)$-wise independent, we can
use \cref{lemma:PPUniStatClose} (instantiated for our purpose by setting $t=q$,
$\cD=\zo^{d}$, $\cS=\zn$, $\cU=[4q]_{\zn}$ and $\cR=\zn$; note that $\size{U}=4q=4t$) and the triangle
inequality to get that
\begin{align*}
  \size{\Pr_{f \la \PP(\cH_n,\cG_n,\cF_n)}[\Dc^{f}(1^n) = 1]-\Pr_{\pi \la
      \Pi_n}[\Dc^\pi(1^n) = 1]} \leq 2\eps(n) + q(n)/2^{\Omega(k(n))}
\end{align*}
for large enough $n\in\N$. Setting $k(n)=\Theta(\log q(n))$ finishes the proof.
\end{proof}


\newcommand{\const}{\mathsf{const}}
\section{Hardness-Preserving Reductions via Advanced Cuckoo Hashing}\label{sec:InstantADW}
In this section we apply the reductions given in
\cref{sec:DomainExtension,sec:NonadaptivetoAdaptive} using the
 function family of \citet{PaghP08},  with the function family  of
\citet{AumullerDW14}, to get a more versatile reduction (for  comparison see  \cref{sec:pp_adw_comp}). Roughly speaking, the function family of \citet{AumullerDW14} requires less
combinatorial work (\ie smaller independence) than the \citet{PaghP08}
family. On the other hand, the function family of \citet{AumullerDW14}  requires more ``randomness'' (\ie has a longer description) and is harder to describe. In \cref{subsec:InstantADW} we formally define the hash function family of
\citet{AumullerDW14}, state their (non-adaptive) result, and apply
\cref{lemma:GeneralRandPiClose} to get an adaptive variant of this result. In
\cref{sec:dom_ext_adw} we use the function family of \citet{AumullerDW14} to obtain a  PRF domain extension, and in
\cref{sec:nonadp_to_adp_adw} we use it to get a non-adaptive to adaptive
transformation of PRFs.

\subsection{The \citet{AumullerDW14} Function Family}\label{subsec:InstantADW}
The function family of \citet{AumullerDW14} (building upon
\citet{DietzfelbingerW03}) follows the same basic outline as the \citet{PaghP08}
function family, but uses more complex hash functions.  Recall that the members
of the \citet{PaghP08} function family $\PP(\cH,\cG,\cF)$ are of the form
$(f_1\circ h_1) \oplus (f_2\circ h_2) \oplus g$, for $f_1,f_2\in \cF$,
$h_1,h_1,\in \cH$ and $g\in \cG$.  In the function family
$\ADW(\cH, \cL, \cG, \cF, \cM, \cY)$ described below, the role of $h_1,h_2\in\cH$ is taken by
some variant of \textit{tabulation hashing} (and not simply from a relatively
high $k$-wise independent family as in \cite{PaghP08}). At the
heart of these functions lies a function $\adww{h,\gv,\mv}:\cD\mapsto\cS$ of the
form:
$$\adww{h, \gv, \mv}(x) \eqdef h(x) \oplus \bigoplus_{1\leq i \leq \cc}\m_i(\g_i(x)),$$
for $h \colon \cD \mapsto
\cS$, $\gv = (\g_1,\cdots, \g_\cc)$ and $\mv = (\m_1,\cdots,\m_\cc)$, where
 $\g_i\colon \cD \mapsto \cU$  and
$\m_i\colon \cU \mapsto \cS$. Jumping ahead, the $\m_i$'s will be chosen to be random functions (or
pseudorandom functions)
and the $\g_i$'s and $h$ will be chosen from a relatively low independence family. The \citet{AumullerDW14} construction uses several functions of the above form that, unlike  \cite{PaghP08}, are chosen in a \emph{correlated}
manner (in particular,  sharing the \emph{same} function vector
$\gv$). The precise definition is:

\begin{definition}[The \citet{AumullerDW14} function family]\label{def:ADWHashFunction}
For $z\in\N$ let
\begin{enumerate}
  \item $\cD,\cU$ be sets and $\cS,\cR$ be commutative groups defined
    \wrt an operation $\xor_{\cS}$ and $\oplus_{\cR}$, respectively (we will omit the subscript when it is clear);
  \item function families $\cH = \set{h\colon \cD\mapsto\cS}$, $\cL=\set{\ell\colon
      \cD\mapsto \cR}$, $\cF = \set{f\colon \cS \mapsto \cR}$, $\cG=\set{g\colon
      \cD\mapsto \cU}$, $\cM = \set{m\colon \cU \mapsto \cS }$ and $\cY =
    \set{y\colon \cU \mapsto \cR }$;
  \item functions $h_1, h_2\in \cH$, $\ell\in \cL$,
    $f_1, f_2\in \cF$;
  \item function vector $\gv = (\g_1,\cdots, \g_z)$, where $\g_i\in\cG$ for every $1\leq i \leq z$;
  \item function vectors $\mv^1 = (m^1_1,\ldots,m^1_z),\mv^2 = (m^2_1,\ldots,m^2_z)$, where $m^j_i\in\cM$ for each $j\in\set{1,2}$ and $1\leq i \leq z$; and
  \item function vector $\yv=(y_1,\ldots,y_z)$, where $y_i\in \cY$ for $1\leq i \leq z$.
\end{enumerate}
  Define $\adw_{\mv^1,\mv^2,\yv,h_1,h_2,\ell,\gv,f_1,f_2} \colon \cD \mapsto\cR$ by
 \begin{align*}
    \adw_{\mv^1, \mv^2,y,h_1,h_2,\ell,\gv,f_1,f_2} \eqdef (f_1\circ\adww{h_1, \gv,\mv^1}) \xor_\cR (f_2\circ\adww{h_2, \gv,\mv^2})\xor_\cR \adww{\ell, \gv,\yv},
  \end{align*}
  where $\adww{h, \gv,\mv}(x) \eqdef h(x) \oplus \bigoplus_{1\leq i \leq
    \cc}\m_i(g_i(x))$.\footnote{Note that $\adww{h_1, \gv,\mv^1},\adww{h_2,
    \gv,\mv^2}\colon\cD\mapsto\cS$ uses $\xor_\cS$ and $\adww{\ell,
    \gv,\yv}\colon\cD\mapsto\cR$ uses $\xor_\cR$.}

 For $\mv^1,\mv^2\in\cM^\cc$, $h_1,h_2\in\cH$ and $\gv\in\cG^\cc $, function
 family $\cL,\cF$ and $\cY$ as above, let
 \begin{align*}
   \ADW_{\adwparam,(\mv^1,\mv^2,h_1,h_2,\gv)}(\cL,\cF,\cY) \eqdef \set{\adw_{\mv^1,
         \mv^2,\yv,h_1,h_2,\ell,\gv,f_1,f_2} \colon \yv\in\cY^\cc,\ell\in
       \cL,f_1,f_2\in\cF}.
   \end{align*}
   Finally, let
 \begin{align*}
   \ADW_{\cc}(\cH, \cL, \cG, \cF, \cM, \cY) \eqdef &\set{\adw_{\mv^1,
       \mv^2,\yv,h_1,h_2,\ell,\gv,f_1,f_2} \colon \\&
     \mv^1,\mv^2\in\cM^\cc,\yv\in\cY^\cc,h_1,h_2\in\cH, \nonumber\\&\ell\in
     \cL,\gv\in\cG^\cc,f_1,f_2\in\cF}.
\end{align*}

\end{definition}
Graphically, this function family is described in \cref{fig:ADW}.
\begin{figure}[ht]
	\begin{center}\includegraphics[scale=0.55]{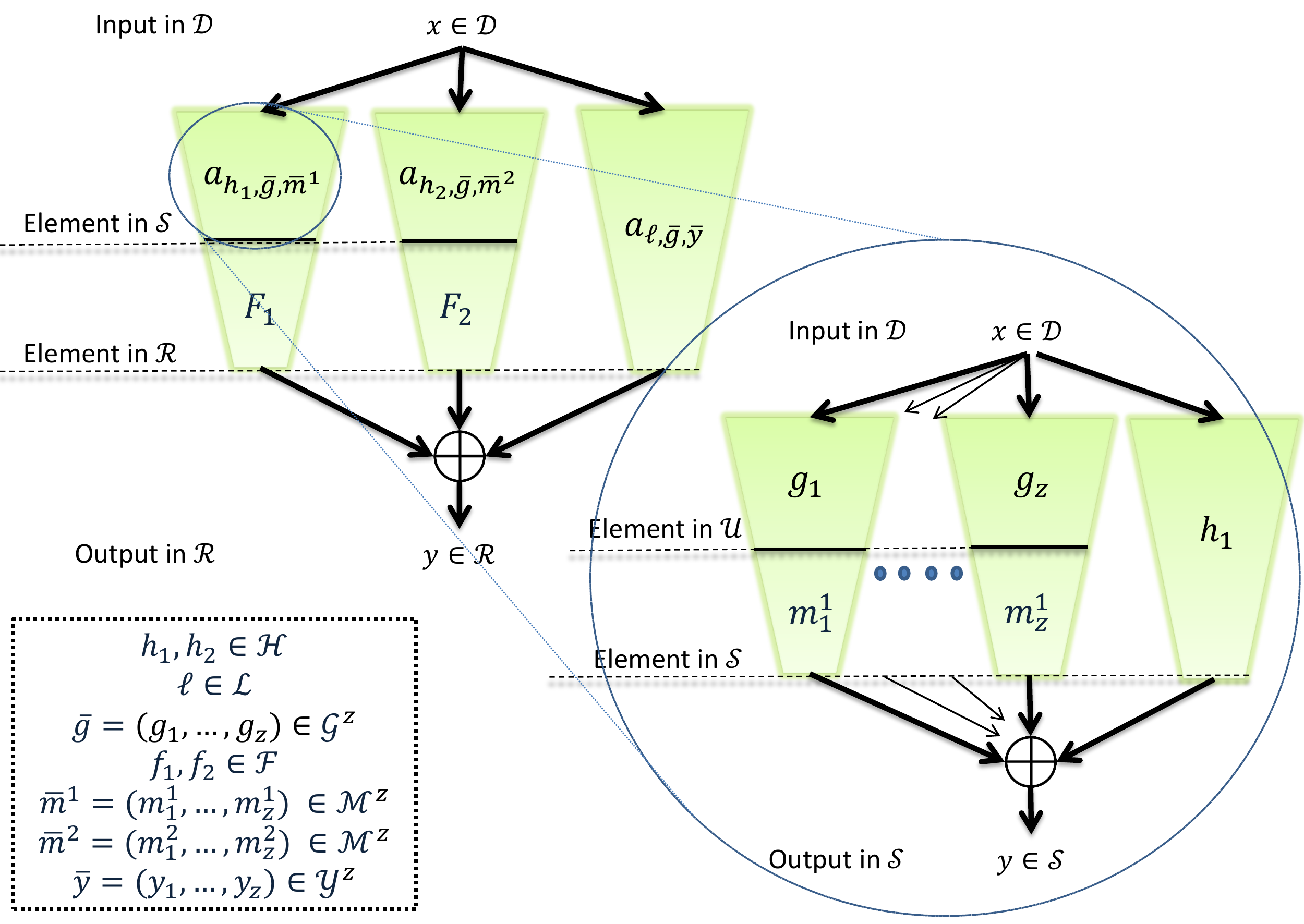}
          \caption{The function family $\ADW_{\cc}(\cH, \cL, \cG, \cF, \cM,
            \cY)$ is in the top left corner. On the bottom right corner the
            function $\adww{h_1, \gv, \mv^1}$ is depicted. The function
            $\adww{h_2, \gv, \mv^2}$ is similar. In $\adww{\ell, \gv, \yv}$ the
            range of the functions $\ell$ and $y_1\dots,y_z$ is $\cR$ (rather
            than $\cS$).}
	\label{fig:ADW}\end{center}
\end{figure}




\citet{AumullerDW14} proved the following result \wrt the above function family.
\begin{theorem}[\cite{AumullerDW14}]\label{thm:ADWOurSetting}
Let $t,k,\cc\in \N$ and let $\cD,\cR,\cS,\cU$ be
commutative groups defined \wrt an operation $\xor$ such that $\size{\cU}\in[t]$
and $\size{\cS}\geq
4t$.\footnote{The actual setting in \cite{AumullerDW14} is more
general. Specifically, an additional parameter $\eps>0$ is used to set the size of $\cS$ as $(1+\eps)t$. For simplicity of presentation, comparison
with the statement of \cref{thm:PaghPagh}, and since we use this theorem only
when $m$ is an integer, we set $\eps=3$.} Assume that $\cc \cdot k \in O(\log
t)$. Let $\cH = \set{h\colon \cD \mapsto
\cS}$, $\cL=\set{\ell\colon \cD\mapsto \cR}$ and $\cG=\set{g\colon
      \cD\mapsto \cU}$
be $2k$-wise independent hash families.

Then, there exist a universal constant  $\const>0$ and a left-monotone set $\Bad \subseteq \cD^{\leq t}\times \paren{\paren{\paren{\Pi_{\cU\mapsto\cS}}^\cc}^2\times\cH^2\times\cG^\cc}$,
such that the following holds for every $\qb=(q_1, \ldots, q_{\size{\qb}})\in \cD^{\leq t}$:

\begin{enumerate}

\item\label{step:ADWNotBad}
  $\left(f(q_1),\dots, f(q_{\size{\qb}})\right)_{f\la
    \ADW_{z,u}(\cL,\Pi_{\cS\mapsto\cR},\Pi_{\cU\mapsto\cR})}$ is uniform over
  $\cR^{\size{\qb}}$ for every
  $u =(\mv^1,\mv^2,h_1,h_2,\gv) \in \paren{\paren{\Pi_{\cU\mapsto\cS}}^\cc}^2\times\cH^2\times\cG^\cc$ such
  that $(\qb, u)\not\in \Bad$, and

\item\label{step:ADWProbBad} $\Pr_{u
    \la \paren{\paren{\Pi_{\cU\mapsto\cS}}^\cc}^2\times\cH^2\times\cG^\cc}[(\qb,u)\in\Bad]
  = \const\cdot t/\size{\cU}^{\cc \cdot k/2}$.
\end{enumerate}

\end{theorem}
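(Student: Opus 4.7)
The plan is to derive the theorem by repackaging the combinatorial analysis of \citet{AumullerDW14} into the left-monotone format required by \cref{lemma:GeneralRandPiClose}, paralleling the proof of \cref{thm:PaghPagh} for the Pagh--Pagh family. In this repackaging, the ``outer'' hash functions $\adww{h_j,\gv,\mv^j}$ for $j\in\set{1,2}$ play the role that $h_j$ plays in $\PP(\cH,\cG,\cF)$, while $\adww{\ell,\gv,\yv}$ plays the role of $g$.

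First, I would define $\Bad$ directly from the structure of the cuckoo graph induced by $\bar{q}$ under the effective hash pair. Writing $A_j(x) \eqdef \adww{h_j,\gv,\mv^j}(x)$, consider the bipartite multigraph on vertex set $\cS\sqcup\cS$ whose edges are $\set{(A_1(q_i),A_2(q_i))}_{i\leq\size{\bar{q}}}$, and put $(\bar{q},u)\in\Bad$ whenever this graph contains a component with two or more cycles, together with the auxiliary failure events that \citet{AumullerDW14} need (for example, collision patterns among the $g_i$-images that prevent the per-query masks from decorrelating). Since inserting further queries can only add edges, and each such structural failure is preserved under edge addition, $\Bad$ is left-monotone in $\bar{q}$ by construction.

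Second, for Item 1 I would fix a pair $(\bar{q},u)\notin\Bad$ and view the output vector as a linear system over $\cR$ in the truly random $f_1,f_2$ and $\yv$ together with the $2k$-wise independent $\ell$. When the cuckoo graph is a forest plus at most one cycle per component, the randomness of $f_1,f_2$ alone suffices to make the outputs uniform on $\cR^{\size{\bar{q}}}$ on the tree edges; any residual linear dependency coming from single cycles is killed by the mask $\adww{\ell,\gv,\yv}$, whose contribution acts like a near-uniform additive mask on the queries via the independent $y_i$'s, with $\ell$ supplying the final bits of independence well within the $2k$-wise budget. This is exactly the cuckoo-hashing argument appearing in \cite{PaghP08,AumullerDW14}.

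Third, Item 2 is the main technical content, and I would import the bound directly from the combinatorial heart of \cite{AumullerDW14}: for $\cH,\cG$ that are $2k$-wise independent and $\mv^1,\mv^2$ uniform over $\paren{\Pi_{\cU\mapsto\cS}}^\cc$, they show that the failure probability on a worst-case query set of size at most $t$ is at most $\const\cdot t\cdot\size{\cU}^{-\cc k/2}$. The main obstacle will be verifying that the collection of failure conditions extracted from their casework coincides \emph{exactly} with the complement of the uniformity guarantee required by Item 1, and that this collection is closed under edge addition; I expect this to go through by a careful audit of their case analysis, possibly enlarging $\Bad$ slightly to align the two sides without affecting the $\const\cdot t/\size{\cU}^{\cc k/2}$ probability bound.
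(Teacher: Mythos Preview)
The paper does not prove this theorem at all: it is stated as a cited result from \cite{AumullerDW14}, and the only supplementary content is the remark immediately following it, which points to \cite[Section~6]{AumullerDW12} for the original statement and to \cite[Theorem~2]{AumullerDW14} for the extension to $\cc\cdot k\in O(\log t)$. There is nothing in the present paper to compare your proposal against.

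Your sketch is a plausible high-level outline of what the argument in \cite{AumullerDW14} looks like (define $\Bad$ via the bad-component structure of the cuckoo graph on the effective hashes $\adww{h_j,\gv,\mv^j}$, verify left-monotonicity by edge-addition closure, argue uniformity outside $\Bad$ via the $f_1,f_2$ randomness plus the mask $\adww{\ell,\gv,\yv}$, and import their probability bound). But be aware that the actual analysis in \cite{AumullerDW14} is considerably more delicate than your paragraph suggests: the failure events are not just ``two cycles in a component'' of the outer cuckoo graph, since the effective hashes $\adww{h_j,\gv,\mv^j}$ are themselves built from limited-independence pieces $h_j,\gv$ combined with the random tables $\mv^j$, and the core of their argument is a layered decomposition that first controls the $\gv$-induced partition and then the behavior inside each part. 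Your ``auxiliary failure events'' placeholder hides essentially all of the work, and your uniformity argument in Item~1 glosses over why the $2k$-wise independence of $\cL$ suffices when the number of single-cycle components is not a priori bounded by $2k$. If you intend to actually reprove the theorem rather than cite it, those are the places that need real content.
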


\begin{remark}
  The construction from \cref{def:ADWHashFunction} and \cref{thm:ADWOurSetting}
  are taken from \cite[Section 6]{AumullerDW12} which is the conference version
  of \cite{AumullerDW14}. In \cite{AumullerDW12} the authors only considered the
  case where $\cc \cdot k$ is constant, independent of $t$. The proof for the
  case where $\cc \cdot k \in O(\log t)$ follows from the proof of
  \cite[Theorem 2]{AumullerDW14}.
\end{remark}

Applying \cref{lemma:GeneralRandPiClose},  the above yields that, for the right choice of parameters,  the function family
$\ADW_{\cc}(\cH, \cL, \cF, \cG, \cM, \cY)$ is not only close to being uniform in the eyes of a \emph{non-adaptive} distinguisher, but also in the  eyes of an \emph{adaptive} one. Specifically, combining \cref{lemma:GeneralRandPiClose,thm:ADWOurSetting} yields the following result.

\begin{lemma}\label{lemma:ADWUniStatCloseGen}
  Let $t,k,z,\const,\cD,\cR,\cS,\cU,\cH,\cL,\cG$ be as in \cref{thm:ADWOurSetting} and let $\Dc$ be an
  {\sf adaptive}, $t$-query oracle-aided algorithm. Then,
\begin{align*}
  \size{\Pr_{f\la\ADW_\adwparam(\cH,\cL,\cG,\Pi_{\cS\mapsto\cR},\Pi_{\cU\mapsto\cS},\Pi_{\cU\to\cR})}[\Dc^f
    = 1] - \Pr_{\pi\la\Pi}[\Dc^\pi = 1]} \le \const\cdot t/\size{\cU}^{\cc\cdot
    k/2}.
\end{align*}
\end{lemma}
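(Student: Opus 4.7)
The plan is to mirror the proof of \cref{lemma:PPUniStatClose}, substituting the combinatorial analysis of \citet{PaghP08} with that of \citet{AumullerDW14}. Concretely, I will cast the function family $\ADW_\adwparam(\cH,\cL,\cG,\Pi_{\cS\mapsto\cR},\Pi_{\cU\mapsto\cS},\Pi_{\cU\to\cR})$ in the product form $\set{F_{u,v}\colon \cD\mapsto \cR}_{(u,v)\in \cU' \times \cV'}$ required by \cref{lemma:GeneralRandPiClose}, and then verify that \cref{thm:ADWOurSetting} provides exactly the two conditions needed.

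The decomposition is dictated by \cref{thm:ADWOurSetting}: the ``dangerous'' randomness (over which \Bad is defined) is the tuple $u=(\mv^1,\mv^2,h_1,h_2,\gv)$, so I will set
\begin{align*}
\cU' \eqdef \paren{\paren{\Pi_{\cU\mapsto\cS}}^\cc}^2\times\cH^2\times\cG^\cc,
\qquad
\cV' \eqdef \cL \times \paren{\Pi_{\cS\mapsto\cR}}^2\times \paren{\Pi_{\cU\mapsto\cR}}^\cc,
\end{align*}
and, for $(u,v)=((\mv^1,\mv^2,h_1,h_2,\gv),(\ell,f_1,f_2,\yv))$, define $F_{u,v}\eqdef \adw_{\mv^1,\mv^2,\yv,h_1,h_2,\ell,\gv,f_1,f_2}$. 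Sampling a uniform function from $\ADW_\adwparam(\cH,\cL,\cG,\Pi_{\cS\mapsto\cR},\Pi_{\cU\mapsto\cS},\Pi_{\cU\to\cR})$ is then the same as sampling $(u,v)\la \cU'\times \cV'$ uniformly, so the family $\F=\set{F_{u,v}}$ coincides with the $\ADW$ family in distribution. Take \Bad to be the left-monotone set supplied by \cref{thm:ADWOurSetting}.

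Next I will verify the two hypotheses of \cref{lemma:GeneralRandPiClose} for every $\qb\in\cD^{\leq t}$. Hypothesis (1) is precisely \cref{step:ADWNotBad} of \cref{thm:ADWOurSetting}: for each $u\in\cU'$ with $(\qb,u)\notin \Bad$, the tuple $(F_{u,v}(q_1),\ldots,F_{u,v}(q_{\size{\qb}}))$, with $v\la \cV'$, is uniform over $\cR^{\size{\qb}}$. Hypothesis (2) is \cref{step:ADWProbBad} with $\eps=\const\cdot t/\size{\cU}^{\cc\cdot k/2}$.

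Applying \cref{lemma:GeneralRandPiClose} then gives, for every adaptive $t$-query oracle-aided $\Dc$,
\begin{align*}
\size{\Pr_{f\la \ADW_\adwparam(\cH,\cL,\cG,\Pi_{\cS\mapsto\cR},\Pi_{\cU\mapsto\cS},\Pi_{\cU\to\cR})}[\Dc^f=1] - \Pr_{\pi\la\Pi}[\Dc^\pi=1]} \le \const\cdot t/\size{\cU}^{\cc\cdot k/2},
\end{align*}
as required. I expect no real obstacle: all the combinatorial heavy lifting (in particular the construction and left-monotonicity of \Bad, and the uniformity of outputs off \Bad) has already been carried out inside \cref{thm:ADWOurSetting}, and \cref{lemma:GeneralRandPiClose} is the generic bridge from non-adaptive to adaptive security. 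The only care needed is the bookkeeping to match the product decomposition of the randomness to the form demanded by \cref{lemma:GeneralRandPiClose}, exactly as was done in the proof of \cref{lemma:PPUniStatClose}.
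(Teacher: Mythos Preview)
Your proposal is correct and follows essentially the same approach as the paper: you decompose the $\ADW$ randomness into the same $\cU'\times\cV'$ product (up to trivial reordering of the $\cV'$ components), take \Bad from \cref{thm:ADWOurSetting}, verify that its two items supply the two hypotheses of \cref{lemma:GeneralRandPiClose}, and apply that lemma. There is nothing to add.
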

\begin{proof}
  Let $\cU'=\paren{\paren{\Pi_{\cU\mapsto\cS}}^\cc}^2\times\cH^2\times\cG^\cc$, $\cV=\paren{\Pi_{\cU\to\cR}}^\cc\times(\Pi_{\cS\to\cR})^2\times\cL$. For
  $(\mv^1,\mv^2,h_1,h_2,\gv)\in\cU'$ and $(\yv,\pi_1,\pi_2,\ell)\in\cV$, let
  \begin{align*}
    F_{(\mv^1,\mv^2,h_1,h_2,\gv),(\yv,\pi_1,\pi_2,\ell)} = (\pi_1\circ
    \adww{h_1,\gv,\mv^1}) \oplus (\pi_2 \circ \adww{h_2,\gv,\mv^2}) \oplus
    \adww{\ell,\gv,\yv},
  \end{align*}
  and let
  \begin{align*}
    \F=\set{F_{u',v}\colon \cD \mapsto \cR}_{(u',v)\in \cU' \times\cV}.
  \end{align*}
  Finally, let \Bad be the set \Bad of~\cref{thm:ADWOurSetting}.

The above sets meet the requirements stated in \cref{lemma:GeneralRandPiClose}:  \cref{step:PPNotBad} of~\cref{thm:ADWOurSetting} assures that the first
property of~\cref{lemma:GeneralRandPiClose} is satisfied, and according to~\cref{step:PPProbBad} of~\cref{thm:ADWOurSetting} we set $\eps$ of~\cref{lemma:GeneralRandPiClose} to be $\const\cdot t/\size{\cU}^{k\cdot\cc}$, and
thus the second property is also satisfied. Hence, applying~\cref{lemma:GeneralRandPiClose} concludes the proof of the lemma.
\end{proof}

We note that for large enough $z$, and in contrast to the
\citet{PaghP08} family, using $\ADW$ we get meaningful results even when using an underlying
$k=o(\log t)$-wise independent family. (For a thorough comparison between $\PP$
and $\ADW$ see \cref{sec:pp_adw_comp}.) In particular, for specific
settings of parameters we get the following corollary.
\begin{corollary}\label{lemma:ADWUniStatClose}
  Let $t,\const,\cD,\cR,\cS,\cU, \cH,\cL,\cG$ be as in
  \cref{thm:ADWOurSetting}. Let $\Dc$ be an {\sf adaptive}, $t$-query
  oracle-aided algorithm and $c\in\N$. If either
  \begin{enumerate}
  \item $z=2(c+2)$, $\size{\cU}=t$ and $k=1$, or
  \item $z=2(c+2)\cdot\log t$, $\size{\cU} = 2$ and $k=1$,
  \end{enumerate}
  then
  \begin{align*}
    \size{\Pr_{f\la\ADW_\adwparam(\cH,\cL,\cG,\Pi_{\cS\mapsto\cR},\Pi_{\cU\mapsto\cS},\Pi_{\cU\mapsto\cR})}[\Dc^f =
      1] - \Pr_{\pi\la\Pi}[\Dc^\pi = 1]} \le \const/t^{c+1}.
  \end{align*}
\end{corollary}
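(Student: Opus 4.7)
The plan is to prove the corollary as a direct specialization of \cref{lemma:ADWUniStatCloseGen}. That lemma already gives the bound $\const\cdot t/\size{\cU}^{z\cdot k/2}$ for any admissible choice of parameters, so the only work is to verify that both sets of parameters proposed in the corollary fall into the regime where \cref{lemma:ADWUniStatCloseGen} (and therefore \cref{thm:ADWOurSetting}) applies, and then to compute the resulting bound.

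First I would check the standing hypotheses of \cref{thm:ADWOurSetting}, namely $\size{\cU}\in[t]$, $\size{\cS}\geq 4t$, and $z\cdot k \in O(\log t)$. The condition on $\cS$ is carried over directly from the corollary's hypotheses (inherited from \cref{thm:ADWOurSetting}), so nothing new is needed there. For case~(1), $\size{\cU}=t\in[t]$ and $z\cdot k = 2(c+2)$ is a constant, which is trivially $O(\log t)$. For case~(2), $\size{\cU}=2\in[t]$ (for $t\geq 2$, which we may assume since the bound is vacuous for small~$t$) and $z\cdot k = 2(c+2)\log t$, which is exactly $O(\log t)$. Thus \cref{lemma:ADWUniStatCloseGen} applies in both cases.

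Next I would plug in the parameters into the bound $\const\cdot t/\size{\cU}^{z\cdot k/2}$. In case~(1) the exponent is $z\cdot k/2 = (c+2)$, giving $\size{\cU}^{z\cdot k/2}=t^{c+2}$ and a distinguishing advantage of $\const\cdot t/t^{c+2}=\const/t^{c+1}$. In case~(2) the exponent is $z\cdot k/2 = (c+2)\log t$, giving $\size{\cU}^{z\cdot k/2}=2^{(c+2)\log t}=t^{c+2}$ and the same final bound $\const/t^{c+1}$.

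There is no real obstacle here; the content of the corollary is simply to highlight two qualitatively different but equally useful instantiations of the general bound: the first uses a single large ``table'' of size~$t$ with a constant number~$2(c+2)$ of hash functions, while the second uses a logarithmic number $2(c+2)\log t$ of ``binary'' hash functions. Both fall in the range $z\cdot k\in O(\log t)$ allowed by \cref{thm:ADWOurSetting}, and in both cases the exponent of $\size{\cU}$ conspires with its size to produce the same $t^{c+2}$ in the denominator. So the proof reduces to one line invoking \cref{lemma:ADWUniStatCloseGen} followed by the two arithmetic checks above.
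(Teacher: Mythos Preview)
Your proposal is correct and follows essentially the same approach as the paper: invoke \cref{lemma:ADWUniStatCloseGen} (noting that with $k=1$ the families $\cH$, $\cL$, $\cG$ are pairwise independent as required) and then substitute each parameter set into the bound $\const\cdot t/\size{\cU}^{z\cdot k/2}$ to obtain $\const/t^{c+1}$. Your write-up is in fact more explicit than the paper's one-line proof in checking the side conditions $\size{\cU}\in[t]$ and $z\cdot k\in O(\log t)$, which is fine.
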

\begin{proof}
  Since $\cH$ and $\cL$ are pairwise independent, we can apply
  \cref{lemma:ADWUniStatCloseGen}. Using either setting of
  parameters the advantage of $\Dc$ is
  $\const\cdot t/\size{\cU}^{\cc\cdot k/2} \leq \const/t^{c+1}$.
\end{proof}

\subsection{PRF Domain Extension  via the $\ADW$  Family}\label{sec:dom_ext_adw}
In this subsection we present a PRF a domain extension using the
\citet{AumullerDW14} family, $\ADW$. This allows us to avoid the large
independence required for using the \citet{PaghP08} family $\PP$.

In what follows we state two domain extension results (whose proofs are similar
to that of \cref{thm:PPDomainExtension}). In both of the results we
show how to extend the domain of a $(q,t,\eps)$-PRF
$\cF = \set{\cF_n\colon\zo^{s(n)}\mapsto\zo^{r(n)}}_{n\in\N}$. We let
$s,r,q,t,\eps,d,u$ be integer functions such that $q(n)\leq 2^{n-2}$, and let
$\cH= \set{\cH_n \colon \zo^{d(n)} \mapsto \zo^{s(n)}}_{n\in\N}$,
$\cL=\set{\cL_n\colon \zo^{d(n)}\mapsto \zo^{r(n)}}$ and
$\cG=\set{\cG_n\colon \zo^{d(n)}\mapsto \zo^{u(n)}}$ be efficient pairwise
independent function family ensembles. Let $p\in\poly$ be a large enough
polynomial determined by the evaluation and sampling time of $\cH$, $\cL$,
$\cG$, and $\cF$. Furthermore, both of the results will relay on
\cref{lemma:ADWUniStatClose} to argue that the function family
$\ADW_\adwparam(\cH,\cL,\cG,\Pi_{s,r},\Pi_{u,s},\Pi_{u,r})$
is indistinguishable from random (i.e., instantiate \cref{lemma:ADWUniStatClose}
with $\cS=\zo^s$, $\cR=\zo^r$ and $\cU=\zo^u$).

In the first result we rely on the first setting of paramters in
\cref{lemma:ADWUniStatClose} and implement the function families
$\Pi_{s,r},\Pi_{u,s}$ and $\Pi_{u,r}$ using a single pseudorandom function
family $\cF = \set{\cF_n\colon\zo^{s(n)}\mapsto\zo^{r(n)}}_{n\in\N}$. Assuming
$u(n)\leq s(n)$ and $s(n)\leq r(n)$, the implementation is done in the natural
way by padding with leading zeroes. Since
$\ADW_{\cc}(\cH, \cL, \cF, \cG, \cM, \cY)$ makes two calls to $\cF$, $2z$ calls
to $\cM$, and $z$ calls to $\cY$ (in total $3z+2$ calls), this results with a
PRF that makes $3z+2$ calls to the underlying PRF.

\begin{theorem}\label{thm:ADWDomainExtension}
  Let $\cH, \cL$ and $\cG$ be defined as above such that
  $u(n) \leq s(n) \leq r(n)$. Let $c>1$ be a constant and let $z= 2(c+2)$.

  Then, $\ADW_\adwparam(\cH,\cL,\cG,\cF,\cF,\cF) =
  \set{\ADW_\adwparam(\cH_n,\cL_n,\cG_n,\cF_n,\cF_n,\cF_n) \colon
    \zo^{d(n)}\mapsto\zo^{r(n)}}_{n\in\N}$ is a
  $(q,t-p\cdot q,(3z+2)\cdot\eps + 1/q^{c+1})$-PRF
\end{theorem}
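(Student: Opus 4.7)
The plan is to follow the two-step template of the proof of \cref{thm:PPDomainExtension}: first reduce $\ADW_z(\cH,\cL,\cG,\cF,\cF,\cF)$ to its information-theoretic counterpart $\ADW_z(\cH,\cL,\cG,\Pi_{s,r},\Pi_{u,s},\Pi_{u,r})$ by a multi-oracle hybrid argument, and then invoke \cref{lemma:ADWUniStatClose} to argue statistical closeness of the latter to a truly random function.

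For the first step, I would construct a $(3z+2)$-oracle distinguisher $\hcT$ mirroring \cref{alg:PRGTwoOracle}: on input $1^n$, $\hcT$ samples $h_1,h_2\la\cH_n$, $\ell\la\cL_n$ and $\gv\la\cG_n^z$ internally, and uses its $3z+2$ oracles (each a function $\zo^{s(n)}\to\zo^{r(n)}$) to play the roles of $f_1,f_2\in\cF$, the $2z$ functions in $\cM$ needed for $\mv^1,\mv^2$, and the $z$ functions in $\cY$ needed for $\yv$; it then feeds the resulting $f=\adw_{\mv^1,\mv^2,\yv,h_1,h_2,\ell,\gv,f_1,f_2}$ to an emulated $\Dc^f$. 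Since $u(n)\leq s(n)\leq r(n)$, each oracle is re-purposed as a random function $\zo^{u(n)}\to\zo^{s(n)}$ or $\zo^{u(n)}\to\zo^{r(n)}$ by left-padding inputs with zeros and truncating outputs, exactly as the construction $\ADW_z(\cdot,\cdot,\cdot,\cF,\cF,\cF)$ itself uses $\cF$ in place of $\cM$ and $\cY$. Consequently, when all $3z+2$ oracles are drawn from $\cF_n$, the emulation is distributed as $\Dc^f$ with $f\la\ADW_z(\cH_n,\cL_n,\cG_n,\cF_n,\cF_n,\cF_n)$, and when they are drawn from $\Pi_{s(n),r(n)}$ it is distributed as $\Dc^f$ with $f\la\ADW_z(\cH_n,\cL_n,\cG_n,\Pi_{s,r},\Pi_{u,s},\Pi_{u,r})$. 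Since each $\Dc$-query triggers exactly one query to each oracle, \cref{lemma:TwoOracle} with $s=3z+2$ converts $\hcT$ into a single-oracle $q$-query distinguisher $\hcD$ at a cost of a factor $(3z+2)$ in advantage, and the $(q,t,\eps)$-PRF security of $\cF$ caps $\hcD$'s advantage at $\eps$, so the distance between the two endpoints of the hybrid is at most $(3z+2)\eps$.

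For the second step, since $\cH,\cL,\cG$ are pairwise independent and parameters are set so that $\size{\cU}=q$, I apply the first setting of \cref{lemma:ADWUniStatClose} (with $k=1$ and $z=2(c+2)$) to the $q$-query adaptive $\Dc$ to conclude that $\ADW_z(\cH_n,\cL_n,\cG_n,\Pi_{s,r},\Pi_{u,s},\Pi_{u,r})$ is within statistical distance $\const/q^{c+1}$ of $\Pi_{d(n),r(n)}$; absorbing the universal constant into $1/q^{c+1}$ for large enough $n$ and invoking the triangle inequality yields the claimed $(3z+2)\eps+1/q^{c+1}$ bound. The main bookkeeping concerns, rather than real obstacles, are: (i) checking that padding-and-truncation preserves the correct distributions at both endpoints of the hybrid; and (ii) tallying the per-query overhead ($3z+2$ function evaluations plus evaluations of $h_1,h_2,\ell,\gv$) so that $\hcD$'s running time fits into $t-p\cdot q$ for a polynomial $p$ determined by the evaluation and sampling times of $\cH,\cL,\cG,\cF$.
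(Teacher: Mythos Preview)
Your proposal is correct and matches the paper's intended approach: the paper explicitly says the proofs of \cref{thm:ADWDomainExtension} and \cref{thm:ADWDomainExtension1} ``are similar to that of \cref{thm:PPDomainExtension}'', and you have carried out precisely that adaptation --- the $(3z+2)$-oracle hybrid (accounting for the $2$ calls to $\cF$, $2z$ calls to $\cM$, and $z$ calls to $\cY$) followed by \cref{lemma:ADWUniStatClose} with the first parameter setting. The only cosmetic slip is in the time accounting: it is $\Dc$ that runs in time $t-p\cdot q$ so that the derived single-oracle $\hcD$ runs in time at most $t$, not the other way around.
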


In the second result we rely on the first setting of paramters in
\cref{lemma:ADWUniStatClose} and take advantage of the fact that $u=2$. This
enables us to implement the function families $\cM$ and $\cY$ as small tables of
random values which are embedded into the PRF key. This results with a PRF that
makes just two calls to the underlying PRF (but has a longer key).

\begin{theorem}\label{thm:ADWDomainExtension1}
  Let $\cH, \cL$ and $\cG$ be defined as above. Let $c>1$ be a constant and let
  $z=z(n)= 2(c+2)\cdot\log q(n)$. Let $\cM=\set{\cM_n}_{n\in\N}$ (\resp
  $\cY=\set{\cY_n}_{n\in\N}$) be family of tables, such that $\cM_n$ (\resp
  $\cY_n$) is a table of two (\resp $\adwparam(n)$) random elements from
  $\zo^{s(n)}$ (\resp $\zo^{r(n)}$).

Then, $\ADW_{\adwparam}(\cH,\cL,\cG,\cF,\cM,\cY) =
  \set{\ADW_{\adwparam(n)}(\cH_n,\cL_n,\cG_n,\cF_n,\cM_n,\cY_n) \colon
    \zo^{d(n)}\mapsto\zo^{r(n)}}_{n\in\N}$ is a
  $(q,t-p\cdot q,2\eps + 1/q^{c+1})$-PRF.
\end{theorem}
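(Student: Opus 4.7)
The plan is to mirror the proof of \cref{thm:PPDomainExtension}, substituting the second item of \cref{lemma:ADWUniStatClose} for \cref{lemma:PPUniStatClose}. The crucial observation is that the tables $\cM_n$ and $\cY_n$ are already truly uniform, so each coordinate $m_i,y_i$ is distributed \emph{exactly} as a uniform function from $\cU = \zo$ to $\zo^{s(n)}$ and $\zo^{r(n)}$ respectively. Consequently the only cryptographic hybrid step involves the two $\cF$-slots ($f_1,f_2$), and the $(q,t,\eps)$-PRF assumption on $\cF$ is invoked just twice per query of $\Dc$.

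First, I would establish the computational step via a two-oracle reduction. Define a two-oracle, $q$-query distinguisher $\hcT^{\phi_1,\phi_2}$ that locally samples $h_1,h_2 \la \cH_n$, $\ell \la \cL_n$, $\gv \la \cG_n^z$ and independent random tables $\mv^1,\mv^2 \la \cM_n^z$, $\yv \la \cY_n^z$, then emulates $\Dc$ by answering each query $x$ with $\phi_1(\adww{h_1,\gv,\mv^1}(x)) \oplus \phi_2(\adww{h_2,\gv,\mv^2}(x)) \oplus \adww{\ell,\gv,\yv}(x)$. When $(\phi_1,\phi_2) \la \cF_n^2$ the view is distributed exactly as $\Dc^f$ with $f \la \ADW_z(\cH_n,\cL_n,\cG_n,\cF_n,\cM_n,\cY_n)$, and when $(\phi_1,\phi_2) \la \Pi_{s(n),r(n)}^2$ it matches $f \la \ADW_z(\cH_n,\cL_n,\cG_n,\Pi_{s(n),r(n)},\cM_n,\cY_n)$. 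Applying \cref{lemma:TwoOracle} converts $\hcT$ into a single-oracle distinguisher at a factor-of-two loss; any gap exceeding $2\eps$ would contradict the PRF security of $\cF$.

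Second, I would apply the second parameter setting of \cref{lemma:ADWUniStatClose} with $\cS = \zo^{s(n)}$, $\cR = \zo^{r(n)}$, $\cU = \zo$, $k = 1$, and $z = 2(c+2) \log q(n)$. Since uniform draws from the table families $\cM_n$ and $\cY_n$ are by construction uniform elements of $\Pi_{\cU \mapsto \zo^{s(n)}}$ and $\Pi_{\cU \mapsto \zo^{r(n)}}$, the lemma bounds the statistical distance between $\ADW_z(\cH,\cL,\cG,\Pi_{s(n),r(n)},\cM,\cY)$ and a truly random function by $\const/q^{c+1}$. The triangle inequality then yields the claimed distinguishing bound $2\eps + 1/q^{c+1}$ (absorbing $\const$ by very slightly enlarging $c$ if desired), and the $p \cdot q$ time overhead accounts for per-query sampling and evaluation of $\cH,\cL,\cG$, $O(z)$ table lookups into $\cM$ and $\cY$, and the overhead coming from \cref{lemma:TwoOracle}.

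The main care point — rather than a deep obstacle — is the bookkeeping that verifies only the two $\cF$-slots incur a cryptographic cost, so the security loss is $2\eps$ and does not scale with $z$. The constraint $\size{\cU} = 2$ imposed by the second item of \cref{lemma:ADWUniStatClose} is precisely what allows $\cM$ and $\cY$ to be represented as short random tables absorbed into the PRF key; this is the quantitative advantage of \cref{thm:ADWDomainExtension1} over \cref{thm:ADWDomainExtension}, where the same $\cF$ is reused in place of $\cM,\cY$ and therefore $3z+2$ PRF calls are needed.
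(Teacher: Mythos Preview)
Your proposal is correct and matches the paper's intended argument: the paper explicitly states that the proofs of \cref{thm:ADWDomainExtension} and \cref{thm:ADWDomainExtension1} are similar to that of \cref{thm:PPDomainExtension}, and your two-step outline (a two-oracle hybrid over the $f_1,f_2$ slots via \cref{lemma:TwoOracle}, followed by the second parameter setting of \cref{lemma:ADWUniStatClose} and the triangle inequality) is exactly that template. You also correctly isolate the point that distinguishes \cref{thm:ADWDomainExtension1} from \cref{thm:ADWDomainExtension}: because $\cM$ and $\cY$ are already truly random tables over a domain of size two, only the two $\cF$-instances require a cryptographic hybrid, yielding the $2\eps$ (rather than $(3z+2)\eps$) loss.
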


A notable difference between
\cref{thm:ADWDomainExtension,thm:ADWDomainExtension1} is that in
\cref{thm:ADWDomainExtension} the resulting PRF makes a large constant (\ie
$6c+12$) number of queries to the underlying PRF, whereas in
\cref{thm:ADWDomainExtension1} it only makes two calls to the underlying PRF but
the PRF key is longer (\ie it has $(6c+12)\cdot \log q$ random values embedded
into it).

\subsubsection{Comparing the $\PP$ and $\ADW$ Based Constructions}\label{sec:pp_adw_comp}
\cref{thm:PPDomainExtension,thm:ADWDomainExtension,thm:ADWDomainExtension1}
present different tradeoffs between two types of resources: ``cryptographic
work'' --- the total evaluation time of the calls to the underlying short-domain
PRF, and ``combinatorial work'' --- the independence needed from the function
families used.\footnote{The independence affects the amount of random bits and
  evaluation time needed for these families.} In the $\PP$-based construction
(\cref{thm:PPDomainExtension}), we minimize the number of calls to the PRF, thus
keeping the cryptographic work small. But on the other hand, we require
relatively high independence, which results in much combinatorial work. In
the first $\ADW$-based construction (\cref{thm:ADWDomainExtension}), the
situation is somewhat reversed: we minimize the independence needed, but make
more calls to the PRF. In the second $\ADW$-based construction
(\cref{thm:ADWDomainExtension1}), we minimize both the number of calls to the
PRF and the independence needed, but we require much more hash functions, thus
increasing again the combinatorial work.

In the following we instantiate
\cref{thm:PPDomainExtension,thm:ADWDomainExtension,thm:ADWDomainExtension1} in a
specific setting, and compare their performance in terms of required randomness
complexity, and evaluation time.\footnote{Another criterion of comparison is
  the sampling time. In our settings it is analogous to the evaluation time, so we omit it.} Our starting point is a length-preserving
$(q,t,\eps)$-PRF $\cF=\set{\cF_n\colon \zn\mapsto \zn}_{n\in\N}$. Our goal is to construct a
function family with a larger domain, $\zo^{d}$ for $d=d(n)$, whose security only
deteriorates, comparing to that of $\cF$, by an additive factor of $1/q^c$,
for $c\eqdef c(n)>1$. We consider the following function families:
\begin{enumerate}
\item\label{item:pp} \textbf{Family 1}: $\PP(\cH,\cF) = \set{\PP(\cH_n,\cF_n)\colon \zo^{d}
    \mapsto \zn}_{n\in \N}$, where  $\cH=\set{\cH_n \colon \zo^{d} \mapsto
    \zn}_{n\in\N}$ is $\Omega(c\cdot \log q)$-wise independent function family  ensemble.

  This instantiation, based on \cref{thm:PPDomainExtension}, makes only two calls to $\cF$, but requires hash functions
  of high independence.
\item\label{item:adw1} \textbf{Family 2}:
  $\ADW_\adwparam(\cH,\cH,\cH,\cF,\cF,\cF) =
  \set{\ADW_\adwparam(\cH_n,\cH_n,\cH_n,\cF_n,\cF_n,\cF_n) \colon
    \zo^{d}\mapsto\zo^{n}}_{n\in\N}$, where $z= 2(c+2)$ and
  $\cH= \set{\cH_n \colon \zo^{d} \mapsto \zo^{n}}_{n\in\N}$ is a
  pairwise independent function family ensemble.

  This instantiation, based on \cref{thm:ADWDomainExtension}, reduce the needed
  independence, in the price of increasing the number of calls to $\cF$.
\item\label{item:adw2} \textbf{Family 3}:
  $\ADW_\adwparam(\cH,\cH,\cH,\cF,\cM,\cY) =
  \set{ \ADW_{\adwparam(n)}(\cH_n,\cH_n,\cH_n,\cF_n,\cM_n,\cY_n) \colon
    \zo^{d}\mapsto\zo^{n}}_{n\in\N}$, where
  $z(n)=2(c+2)\cdot \log q$,
  $\cH= \set{\cH_n \colon \zo^{d} \mapsto \zo^{n}}_{n\in\N}$ is a
  pairwise independent function family ensemble, $\cM=\set{\cM_n}_{n\in\N}$
  (\resp $\cY=\set{\cY_n}_{n\in\N}$) is a family of tables, and  $\cM_n$ (\resp
  $\cY_n$) is a random table of two (\resp $\adwparam$) elements from $\zn$.

  This instantiation, based on \cref{thm:ADWDomainExtension1}, makes
  only two calls to $\cF$ and keeps the independence low, in the price of
  needing many more hash functions.
\end{enumerate}
To ease the comparison between the above families, we introduce relevant notation:
\begin{enumerate}
\item $\rand_\cF$ -- the amount of random bits required to sample a random
  element in $\cF_n$
\item $\eval_\cF$ -- the evaluation time of a single call to an element in
  $\cF_n$
\item $\evalManyWise{k}{d}{n}$ -- the evaluation time of a $k$-wise
  independent function family from $\zo^{d}$ to $\zn$. Recall that
  (\cref{fact:k-indep}), sampling a random element in the latter function family
  requires $k\cdot\max\set{d,n}$ random bits.
\end{enumerate}
\def\randPP{$O(c\cdot d\cdot \log q) + 2\cdot\rand_\cF$}
\def\randPPn{$O(c\cdot d(n)\cdot \log q(n)) + 2\cdot\rand_\cF(n)$}
\def\evalPP{$O\paren{\evalManyWise{c\cdot \log q}{d}{n}} + 2\cdot\eval_\cF$}
\def\evalPPn{$O\paren{\evalManyWise{c\cdot \log q(n)}{d(n)}{n}} + 2\cdot\eval_\cF(n)$}
\def\randADWone{$O(c\cdot d) + (6c+14)\cdot\rand_\cF$}
\def\randADWonen{$O(c\cdot d(n)) + (6c+14)\cdot\rand_\cF(n)$}
\def\evalADWone{$O\paren{c\cdot\evalManyWise{2}{d}{n}} + (6c+14)\cdot\eval_\cF$}
\def\evalADWonen{$O\paren{c\cdot\evalManyWise{2}{d(n)}{n}} + (6c+14)\cdot\eval_\cF(n)$}
\def\randADWtwo{$O(c\cdot d\cdot\log q) + 2\cdot\rand_\cF$}
\def\randADWtwon{$O(c\cdot d(n)\cdot\log q(n)) + 2\cdot\rand_\cF(n)$}
\def\evalADWtwo{$O\paren{c\cdot\log q\cdot \evalManyWise{2}{d}{n}} + 2\cdot\eval_\cF$}
\def\evalADWtwon{$O\paren{c\cdot\log q(n)\cdot \evalManyWise{2}{d(n)}{n}} + 2\cdot\eval_\cF(n)$}

The randomness complexity and evaluation time of each instantiation are
summarized in \cref{table1}.

\begin{table}
    \centering
    \begin{tabular}{ | l | l | l |}
      \hline
      Instantiation & Randomness Complexity (Key Size) & Evaluation Time \\ \hline
      $\PP$ (Family~\ref{item:pp})  & \randPP & \evalPP \\ \hline
      $\ADW$ (Family~\ref{item:adw1}) & \randADWone & \evalADWone \\ \hline
      $\ADW$ (Family~\ref{item:adw2})  & \randADWtwo & \evalADWtwo \\ \hline
    \end{tabular}
  \caption{Comparison between the domain extension results based on the function
    family $\PP$ from \cref{thm:PPDomainExtension} and  $\ADW$ from \cref{thm:ADWDomainExtension,thm:ADWDomainExtension1}  instantiated in \cref{item:pp,item:adw1,item:adw2} above
    to achieve security deterioration $1/q^c$ for $c\in \N$.}
    \label{table1}
\end{table}

\subsection{From Non-Adaptive to Adaptive PRF Via the $\ADW$ Family}\label{sec:nonadp_to_adp_adw}
In this subsection we state the result for the non-adaptive to adaptive
transformation using the \citet{AumullerDW14} family, $\ADW$. As in the previous
section, this allows us to avoid the large independence required for using the
function family $\PP$ of \citet{PaghP08}.  For simplicity we state only the
reduction the follows from the second settings of parameters of
\cref{lemma:ADWUniStatClose} (analogous to \cref{thm:PPNonAdaptiveMain}). The
first settings of parameters of \cref{lemma:ADWUniStatClose} also yields
non-adaptive to adaptive reduction, with different tradeoff between the
randomness complecity (key size) of the PRF to its evaluation time (see
\cref{sec:pp_adw_comp}).

Recall that our reduction from \cref{sec:NonadaptivetoAdaptive} requires that
all calls to the PRF $\cF$ are within the first $4q(n)$ first elements of
$\zn$. We make sure the above holds by the setting this to be the range of the
hash function we use.

\begin{theorem}\label{thm:ADWNonAdaptiveMain}
  Let $q$ be integer functions, let $c>1$ be a constant and let
  $z=z(n)= 2(c+2)\cdot\log q(n)$. Fix three efficient pairwise independent
  function family ensembles:
  $\cH= \set{\cH_n \colon \zo^{n} \mapsto [4q(n)]_\zn}_{n\in\N}$,
  $\cL=\set{\cL_n\colon \zn\mapsto \zn}_{n\in\N}$ and
  $\cG=\set{\cG_n\colon \zn\mapsto \zn}_{n\in\N}$.  Let
  $\cM=\set{\cM_n}_{n\in\N}$ (\resp $\cY=\set{\cY_n}_{n\in\N}$) be family of
  tables, such that $\cM_n$ (\resp $\cY_n$) is a random table of two (\resp
  $\adwparam(n)$) elements from $[4q(n)]_\zn$ (\resp $\zn$).

  If $\F = \set{\F_n\colon\zo^{n}\mapsto\zo^{n}}_{n\in\N}$ is a {\sf
    non-adaptive} $(4q,p\cdot t,\eps)$-PRF for some $p\in\poly$ determined by
  the evaluation time of $q,\cH,\cG$, $\cL$ and $\F$, then $\ADW_{\adwparam}(\cH,\cL,\cG,\cF,\cM,\cY) =
  \set{\ADW_{\adwparam(n)}(\cH_n,\cL_n,\cG_n,\cF_n,\cM_n,\cY_n) \colon
    \zn\mapsto\zn}_{n\in\N}$ is a
  $(q,t,2\eps + 1/q^{c+1})$-PRF.
\end{theorem}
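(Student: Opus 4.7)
The plan is to mirror the two-step proof of \cref{thm:PPNonAdaptiveMain}, replacing the appeal to \cref{lemma:PPUniStatClose} by an appeal to \cref{lemma:ADWUniStatClose}. First I would show that swapping the non-adaptive PRF $\cF$ for a truly random function inside the $\ADW$ construction costs at most $2\eps$ in advantage against any adaptive $q$-query distinguisher, using a two-oracle non-adaptive reduction. Then I would invoke \cref{lemma:ADWUniStatClose} in its second parameter regime (which matches the choice $z(n)=2(c+2)\log q(n)$) to bound the remaining statistical distance to a uniform function by $\const/q^{c+1}$, absorbing the universal constant $\const$ into the exponent for large enough $n$. A triangle inequality then yields the claimed $2\eps+1/q^{c+1}$ bound.

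The computational swap is the direct analogue of \cref{lemma:NAPP}. Given an adaptive $t$-time $q$-query distinguisher $\Dc$, I would define a non-adaptive two-oracle distinguisher $\hcT$ that, on input $1^n$, first issues the $4q(n)$ non-adaptive queries consisting of every element of $[4q(n)]_\zn$ to each of its oracles $\phi_1,\phi_2$, then samples $h_1,h_2\la\cH_n$, $\ell\la\cL_n$, $\gv\la\cG_n^z$, $\mv^1,\mv^2\la\cM_n^z$, $\yv\la\cY_n^z$, and finally emulates $\Dc$ by answering each adaptive query $x$ with $\phi_1(\adww{h_1,\gv,\mv^1}(x))\oplus\phi_2(\adww{h_2,\gv,\mv^2}(x))\oplus\adww{\ell,\gv,\yv}(x)$ using the prefetched answers. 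If $(\phi_1,\phi_2)\la\cF_n^2$ the emulation is distributed as $\Dc^f$ with $f\la\ADW_z(\cH,\cL,\cG,\cF,\cM,\cY)$, and if $(\phi_1,\phi_2)$ are truly random then as $\Dc^f$ with $f\la\ADW_z(\cH,\cL,\cG,\Pi_n,\cM,\cY)$. \cref{lemma:TwoOracle} then converts $\hcT$ into a non-adaptive single-oracle $(4q)$-query distinguisher of running time $p\cdot t$ for a suitable polynomial $p$ absorbing all evaluation overheads, and non-adaptive security of $\cF$ gives the $2\eps$ bound. For the statistical step, when $|\cU|=2$ the two-element random table $\cM_n$ exactly realizes a uniform element of $\Pi_{\cU\mapsto\cS}$ (and analogously for each component of $\cY_n$), so \cref{lemma:ADWUniStatClose} applies and contributes the $\const/q^{c+1}$ term.

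The main obstacle is the range bookkeeping needed for the non-adaptive reduction: I must check that every call the emulation makes to $\phi_1$ or $\phi_2$ actually lies in $[4q(n)]_\zn$, which amounts to requiring that $[4q(n)]_\zn$ be closed under $\oplus_\cS$ so that $\adww{h_j,\gv,\mv^j}$ has image there. This is handled by the same power-of-two rounding trick recorded in the remark following \cref{lemma:PPUniStatClose}, which loses at most a factor of two absorbed into the constants. A secondary subtlety is lining up the ``table'' implementations of $\cM$ and $\cY$ with the random-function families assumed by \cref{lemma:ADWUniStatClose}, but this is immediate in the $|\cU|=2$ regime. After these checks, the remaining work is routine accounting of evaluation times into the polynomial $p$.
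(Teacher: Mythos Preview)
Your proposal is correct and follows exactly the route the paper intends: the paper does not spell out a proof of \cref{thm:ADWNonAdaptiveMain} but states that it is analogous to \cref{thm:PPNonAdaptiveMain}, replacing the appeal to \cref{lemma:PPUniStatClose} by the second parameter setting of \cref{lemma:ADWUniStatClose}. Your two-step decomposition (non-adaptive prefetch of $[4q(n)]_\zn$ followed by \cref{lemma:TwoOracle}, then the statistical bound from \cref{lemma:ADWUniStatClose}) and your identification of the group-closure issue on $[4q(n)]_\zn$ match the paper's treatment precisely.
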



\section{Further Research}\label{sec:OpenQuestions}
The focus of this paper is on PRFs. Specifically, in \cref{sec:DomainExtension,sec:NonadaptivetoAdaptive,sec:InstantADW} we have shown domain extension techniques and
non-adaptive to adaptive transformations for PRFs that provide a nice tradeoff
between combinatorial work, cryptographic work and error. In general, hardness-preserving reductions between pseudorandom objects have led to fruitful research
with many results (some of which we review next). It is an interesting question
whether our technique has any bearing on other models.

Perhaps the most interesting model for this kind of reductions is pseudorandom
permutations (PRPs) (without going through a PRP-to-PRF reduction). Given a
family of $(q,t,\eps)$-PRPs from $n$-bits to $n$ bits, how can we construct a
family of PRPs with \textit{larger} domain while preserving its security?  How
about constructing a family of PRPs with \textit{smaller} domain while
preserving its security? Finally, it is also interesting how to transform a
family of $(q,t,\eps)$-PRPs that is secure against non-adaptive adversaries to a
family of PRPs that are also secure against adaptive adversaries (see discussion in \cref{sec:intro:RelatedWork}). One related
paper is that of \citet{HoangMR12}, that gives a method to convert a PRF into a
PRP with beyond-birthday security (see also
\cite{RistenpartY13,MorrisR14}). Other related works are of \citet{Hastad06} and \citet{MorrisRS09} showed how to extend the domain of a PRP.

A different model of interest is message authentication codes (MACs). In this model, we are interested in designing domain extension techniques that given an
$n$-bit to $n$-bit MAC with MAC security $\eps$ against $q$ queries provide variable-length MAC with some (good enough) promise on the MAC security in terms of $q$
and $\eps$. The best answer to-date for this question was given by \citet{DodisS11} that showed that given an $n$-bit to $n$-bit MAC with MAC security $\eps$
against $q$ queries, it is possible to get a variable-length MAC achieving MAC security $O(\eps \cdot q \cdot \poly (n))$ against queries of total length $qn$.

Another interesting model is public random functions. A public random
function $f:\zo^m\to\zo^n$ is a system with a public and private interface which
behaves as the same random function at both interfaces. In other words, a public
random function can be interpreted as a random oracle. In this model, again,
the domain extension problem is very interesting. To date, the best construction
is of \citet{MaurerT07} that presented a construction $\mathbf C_{\eps,m,\ell}$
that extends public random functions $f:\zo^n\to\zo^n$ to a function $\mathbf
C_{\eps,m,\ell}(f):\zo^{m(n)}\to\zo^{\ell(n)}$ with time complexity
$\poly(n,1/\eps)$ and which is secure against adversaries which make up to
$\Theta(2^{(1-\eps)n})$ queries.


On a different note, as we mentioned in the text following
\cref{thm:IntroNonAdaptiveMain}, our non-adaptive to adaptive transformation
depends on the number of queries made by the distinguisher which does not give a
single transformation for all poly-time adversaries. Whether a single reduction
with a constant number of calls to the non-adaptive PRF that works for all
poly-time adversaries exists is left as an open problem.

\citet{PatrascuT12} have shown that for many data structure problems it is
possible to use tabulation hashing even though it is `merely' 3-wise
independent. The question is whether this has any bearing on cryptographic
constructions.


\section{Acknowledgments}
We thank Eylon Yogev and the anonymous referees of TCC 2013 and Journal of Cryptology
for their helpful comments. The third author would like to thank his M.Sc
advisor Ran Raz for his support.

\bibliographystyle{myabbrvnat}
\bibliography{crypto}

\appendix
\newcommand{\emptystring}{\lambda}
\section{Proof of Lemma~\ref{lemma:GeneralRandPiClose}}\label{sec:GeneralFramework}
In this section we prove \cref{lemma:GeneralRandPiClose} from \cref{sec:GeneralFrameworkMain}. We mentioned again that \cref{lemma:GeneralRandPiClose} can be derived as  a special case of a result given in
\cite[Theorem~12]{JetchevOS12} (closing a gap in the proof appearing in \cite{Maurer02}). Yet, for the sake of completeness, we include an independent proof of this lemma below.

\remove{We provide a general framework for proving the \emph{adaptive} security of functions families, given that the analysis of the security of the \emph{non-adaptive} case behaves in a certain way: we need to be able to separate the `key' into two parts, drawn from $\cU$ and $\cV$ respectively. Then the analysis has to guarantee that  provided the \emph{non-adaptive} distinguisher did
not stumble upon a BAD set in $\cU$, the randomness of $\cV$ should make his view completely random. If the analysis is of this nature, then we claim that the resulting
construction is secure against adaptive attacks as well.}

\begin{definition}[Restating \cref{def:lhsmonotonesets}]
\LeftMonotoneSets
\end{definition}
\begin{lemma}[Restating \cref{lemma:GeneralRandPiClose}]\label{lemma:RandPiClose}
 \GeneralRandPiClose
\end{lemma}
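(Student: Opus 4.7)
The plan is to prove the lemma by a joint coupling of the real and ideal experiments with an auxiliary copy of $U$. Let $\mD$ be a $t$-query adaptive distinguisher; by a standard padding argument I may assume $\mD$ makes exactly $t$ distinct queries. I augment the real experiment (oracle $f_{u,v}$ with $u\la\cU$, $v\la\cV$) and the ideal experiment (oracle $\pi\la\Pi$) by additionally sampling $U\la\cU$ in both worlds; in the ideal world this $U$ is unused by the oracle. Both augmented experiments produce the enlarged transcript $\tau = (U, Q_1, A_1, \ldots, Q_t, A_t)$; call the resulting distributions $\mathsf{Real}$ and $\mathsf{Ideal}$, and let $\mathsf{Bad}$ be the event $((Q_1, \ldots, Q_t), U) \in \Bad$.

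The heart of the proof is the claim that $\mathsf{Real}$ and $\mathsf{Ideal}$ put identical mass on every transcript $\tau \notin \mathsf{Bad}$. Fixing such a $\tau = (u, q_1, a_1, \ldots, q_t, a_t)$, and exploiting that each query $Q_i$ is a deterministic function of $\mD$'s coins and of $(A_1, \ldots, A_{i-1})$, one factors
\[
\Pr_{\mathsf{Real}}[\tau] = c(\tau)\cdot\frac{1}{\size{\cU}}\cdot\Pr_{V\la\cV}\bigl[f_{u, V}(q_i) = a_i,\ \forall i\in[t]\bigr],\qquad
\Pr_{\mathsf{Ideal}}[\tau] = c(\tau)\cdot\frac{1}{\size{\cU}}\cdot\frac{1}{\size{\cR}^t},
\]
where $c(\tau)=\Pr_R[\mD \text{ with coins } R \text{ issues } q_1,\ldots,q_t \text{ when answered by } a_1,\ldots,a_t]$ depends only on $(q_i,a_i)_{i\le t}$, not on $u$. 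Because $((q_1,\ldots,q_t),u)\notin\Bad$, the first assumption of the lemma forces $\Pr_V[\cdot]=\size{\cR}^{-t}$ (here is where the distinctness of the queries is used to invoke the hypothesis on $\cD^{\leq t}$), and the two probabilities coincide.

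From this agreement, $\Pr_{\mathsf{Real}}[\mathsf{Bad}] = \Pr_{\mathsf{Ideal}}[\mathsf{Bad}]$, so it suffices to bound $\Pr_{\mathsf{Ideal}}[\mathsf{Bad}]$. In $\mathsf{Ideal}$ the variable $U$ is independent of the whole interaction, so conditioning on the realized query sequence $\qb=(Q_1,\ldots,Q_t)$ and applying the second assumption of the lemma gives $\Pr_{\mathsf{Ideal}}[\mathsf{Bad}]\le\eps$. Since the two distributions coincide outside $\mathsf{Bad}$, the standard inequality $\SD(\mathsf{Real},\mathsf{Ideal})\le\Pr_{\mathsf{Ideal}}[\mathsf{Bad}]\le\eps$ follows, and passing to the marginal over the non-augmented transcript only decreases the statistical distance, so $\mD$'s distinguishing advantage is at most $\eps$.

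The role of left-monotonicity in this route is conceptual: it ensures that $\mathsf{Bad}$ is closed under extension of the query prefix, which is what makes the padding to $t$ distinct queries harmless and, more generally, permits viewing $\mathsf{Bad}$ as a ``once-bad-always-bad'' stopping event. The main book-keeping obstacle is expanding the joint distribution over $(R, U, V, A_1, \ldots, A_t)$ carefully enough to establish the factorization above despite the adaptive dependence of each $Q_i$ on past answers; once that expansion is written down, the two assumptions of the lemma plug in directly and the rest is mechanical.
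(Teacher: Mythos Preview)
Your argument is correct and follows the H-coefficient (good/bad transcript) method rather than the paper's explicit coupling. The paper runs $\mD$ twice: first against uniformly random answers to fix the query sequence $\qb$, then samples $u$ and re-runs $\mD$, reusing the same answers until the current query prefix first lands in $\Bad$, at which point it samples a $v\in\cV$ consistent with the answers given so far and continues with $f_{u,v}$. Establishing that this second run is distributed exactly as the real experiment takes a reverse induction on the query index, and left-monotonicity is used essentially, to guarantee that the two emulations agree all the way up to the moment the \emph{full} sequence enters $\Bad$. Your route sidesteps this machinery: by augmenting both worlds with a copy of $U$ and comparing transcript masses directly, equality on good transcripts follows from hypothesis~1 alone, and the bad-probability bound from hypothesis~2 via the independence of the auxiliary $U$ in the ideal world. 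Your remark that left-monotonicity is only ``conceptual'' in your approach is on point---in fact your proof never invokes it, since you evaluate $\Bad$ only on the full padded $t$-query sequence, whereas the paper's coupling genuinely needs the once-bad-always-bad property to pin down where the two emulations diverge. The trade-off is that the paper's coupling gives an operational picture of how the adaptive case reduces to the non-adaptive one, while your approach is shorter and more mechanical.
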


\begin{proof}
Let $\mD$ be a $t$-query distinguisher. We assume for simplicity that $\mD$ is deterministic (the reduction to the randomized case is standard) and makes exactly
$t$ valid (\ie inside $\cD$) distinct queries. To prove the lemma we consider a process (\cref{alg:GenCoupling}) that runs $\mD$ twice: one giving it completely
random answers and the second time, choosing $u \la \cU$ and continuing answering with the same answers as the first round until we hit a BAD event according to
the queries and the chosen $u$. We then choose a random $v$ that is consistent
with the answers given so far and continue answering with it.

Intuitively, the answers provided to $\mD$ in the first round are distributed like the answers $\mD$ expects to get from a {\em truly random function}, while the answers provided to
$\mD$ in the second round are distributed like the answers $\mD$ expects to get from a random function in $\cF$. But, since these answers are the same until a BAD event occurs, the distinguishing ability of $\mD$ is bounded by the probability of such an event to occur. Since, $u$ is chosen \emph{after} $\mD$ has already ``committed'' to the queries it is going to make, this probability is bounded by the non-adaptive property of $\cF$.

For a vector $\vb = (v_1, \ldots, v_t)$, let $\vb_{1,\ldots,i}$ be the first $i$ element in $\vb$ (\ie
$\vb_{1,\ldots,i} = (v_1, \ldots, v_i)$) and let $\vb_{1,\ldots,0} = \emptystring$, where $\emptystring$ is the empty vector. Consider the following random process:
\begin{algorithm}\label{alg:GenCoupling}~
\begin{enumerate}
 \item \label{Drandom} Emulate $\mD$, while answering the $i^{\text{th}}$ query $q_i$ with $a_i \la \cR$.

 Set $\qb = (q_1, \ldots, q_t)$ and $\ab = (a_1, \ldots, a_t)$.
 \item\label{chooseU} Choose $u \la \cU$ and set $v = \perp$.

 \item\label{step:trivialCase} If $(\emptystring,u)\in\Bad$, set $v\la\cV$.

 \item\label{chooseV} Emulate $\mD$ again, while answering the $i^{\text{th}}$ query $q_i'$ according to the following procedure:
\begin{enumerate}

\item If $(\qb'_{1,\ldots,i} = (q_1',\dots,q_i'),u)\notin\Bad$, answer with $a'_i=a_i$ (the same $a_i$ from Step \ref{Drandom}).

\item Otherwise ($(\qb'_{1,\ldots,i},u)\in\Bad$):
\begin{enumerate}
\item \label{step:Bad} If $v = \perp$, set $v\la \set{v'\in \cV\colon \forall j \in [i-1] \colon f_{u,v'}(q'_j) = a'_j}$.

\item Answer with $a'_i = f_{u,v}(q'_i)$.

\end{enumerate}
\end{enumerate}

\item \label{step:last} Set $\overline{q'}=(q_1', \ldots, q_t')$ and $\overline{a'}=(a_1', \ldots, a_t')$. In case $v = \perp$, set $v\la \set{v'\in \cV\colon \forall j \in [t] \colon f_{u,v'}(q'_j) = a'_j}$.
\end{enumerate}
\end{algorithm}

Let $\Ab$, $\Qb$, $\Atb$, $\Qtb$, $U$ and $V$ be the (jointly distributed)
random variables induced by the values of $\ab$, $\qb$, $\overline{a'}$, $\overline{q'}$, 
$u$ and $v$ respectively, in a random execution of Algorithm \ref{alg:GenCoupling}. By definition  $\Ab$ has the same distribution as the oracle answers in a random
execution of $\mD^\pi$ with $\pi \la \Pi$. In \cref{claim:DistSame} we show that $\Atb$ is distributed the same as the oracle answers in a random execution of
$\mD^{f_{u,v}}$ with $(u,v) \la \cU\times\cV$. Using it, we now conclude the proof by bounding the statistical distance between $\Ab$ and $\Atb$.

Since the queries and answers in both emulations of $\mD$ at Algorithm \ref{alg:GenCoupling} are the same until $(\Qb_{1,\dots,i},U) \in\Bad$ for some $i\in [t]$,
and since $\Bad$ is monotone, it holds that
\begin{align}
\Pr[\Ab \neq \Atb] \leq \Pr[(\Qb,U)\in\Bad]
\end{align}
In addition, since $U$ is chosen \emph{after} $\Qb$, the second condition of \cref{lemma:RandPiClose} yields that
\begin{align}
\Pr[(\Qb,U)\in\Bad]\leq \eps
\end{align}
It follows that $\Pr[\Ab \neq \Atb] \leq \eps$ and therefore $\SD(\Ab,\Atb)\leq \eps$.

We conclude that
\begin{align*}
\size{\Pr_{\substack{u\la\cU \\ v\la\cV}}[\mD^{f_{u,v}} = 1] - \Pr_{\pi \la \Pi}[\mD^{\pi}=1]} \leq \SD(\Ab,\Atb) \leq \eps.
\end{align*}
\end{proof}

\begin{claim}\label{claim:DistSame}
$\Atb$ has the same distribution as the oracle answers in a random execution of $\mD^{f_{u,v}}$ with $(u,v) \la \cU\times\cV$.
\end{claim}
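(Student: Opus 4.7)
The plan is to fix $u\in\cU$ and compare, conditional on $U=u$, the distribution of the answer sequence $\Atb$ produced by \cref{alg:GenCoupling} with the distribution of $(f_{u,V}(q_1),\dots,f_{u,V}(q_t))$ arising in a genuine execution of $\mD^{f_{u,V}}$ with $V\la\cV$. Since $\mD$ is deterministic, in both experiments the queries are completely determined by the prefix of answers, so it suffices to match the probability assigned to each candidate answer sequence $(a'_1,\dots,a'_t)\in\cR^t$; averaging over $u\la\cU$ will then finish the proof.

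First I would fix such a sequence, let $q'_1,\dots,q'_t$ denote the queries $\mD$ would produce when fed these answers, and define $i^*$ as the smallest index in $\{0,1,\dots,t\}$ with $(\qb'_{1,\dots,i^*},u)\in\Bad$ (under the convention $\qb'_{1,\dots,0}=\emptystring$), setting $i^*=t+1$ if no such index exists. Left-monotonicity of $\Bad$ guarantees that once we enter $\Bad$ we remain there, so the behavior of the algorithm is fully captured by $i^*$: for $i<i^*$ the answer $A'_i$ equals the fresh uniform $A_i$, and for $i\ge i^*$ it equals $f_{u,V}(q'_i)$ for the vector $V$ drawn (at Step~\ref{step:trivialCase} if $i^*=0$, or at Step~\ref{step:Bad} if $1\le i^*\le t$) uniformly from $S_{i^*-1}\eqdef\{v'\in\cV : f_{u,v'}(q'_j)=a'_j\ \forall j<i^*\}$. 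A direct calculation then gives that the algorithm assigns the sequence $(a'_1,\dots,a'_t)$ probability $|\cR|^{-(i^*-1)}\cdot|S_t|/|S_{i^*-1}|$, where $S_t=\{v'\in\cV:f_{u,v'}(q'_j)=a'_j\ \forall j\le t\}$, with the first factor interpreted as $1$ when $i^*=0$ and the ratio interpreted as $1$ when $i^*=t+1$. By contrast, the real experiment assigns it probability $|S_t|/|\cV|$.

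The main obstacle is verifying that these two expressions agree, which reduces to the identity $|S_{i^*-1}|=|\cV|/|\cR|^{i^*-1}$. This is precisely where the first hypothesis of \cref{lemma:GeneralRandPiClose} is used: by minimality of $i^*$ we have $(\qb'_{1,\dots,i^*-1},u)\notin\Bad$, so the vector $(f_{u,V}(q'_1),\dots,f_{u,V}(q'_{i^*-1}))$ with $V\la\cV$ is uniform over $\cR^{i^*-1}$, meaning every target tuple has exactly $|\cV|/|\cR|^{i^*-1}$ preimages in $\cV$. Substituting into the algorithm's probability collapses both expressions to $|S_t|/|\cV|$, establishing the per-$u$ match; summing over $u\la\cU$ yields the claim. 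The two subtle points that require care are (i) the $i^*=0$ and $i^*=t+1$ boundary cases, which are immediate once the empty-product conventions are set, and (ii) the fact that the lazy-sampling at Step~\ref{step:Bad} does yield the correct conditional distribution of $V$, which holds precisely because up to index $i^*-1$ the uniform random answers have identical joint distribution to the true $f_{u,V}$ values, so conditioning on their consistency produces the uniform distribution on $S_{i^*-1}$ in both worlds.
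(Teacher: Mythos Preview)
Your argument is correct. Both you and the paper fix $u$ and reduce everything to the identity $|S_j|=|\cV|/|\cR|^{j}$ for prefixes $\qb'_{1,\dots,j}$ not in $\Bad$, which follows from the first hypothesis of \cref{lemma:GeneralRandPiClose}. The difference lies in how the two arguments are organized.

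The paper proves the stronger statement that the pair $(U,V)$ output by the algorithm is uniform on $\cU\times\cV$, and then simply observes that $\Atb$ is the transcript of $\mD^{f_{U,V}}$. Uniformity of $V$ given $U=u$ is established by a reverse induction (\cref{claim:UniInNode}): conditioned on $\Atb_{1,\dots,i}=\wb$ with $(\qb_\wb,u)\notin\Bad$, the variable $V$ is uniform on $\cS_\wb$; the induction step peels off one answer at a time and uses the size identity above at each level. Your route avoids the induction entirely: you fix the full target sequence $(a'_1,\dots,a'_t)$, read off the deterministic switching index $i^\ast$, and compute the probability in one shot as $|\cR|^{-(i^\ast-1)}\cdot|S_t|/|S_{i^\ast-1}|$, invoking the size identity only once (at level $i^\ast-1$). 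This is more direct and makes the role of left-monotonicity very transparent, namely that the algorithm's behavior depends on a single threshold $i^\ast$ rather than on a complicated pattern of in/out of $\Bad$. The paper's route, on the other hand, yields the extra information that $V$ itself is uniform, which is slightly stronger than what the claim requires but not needed elsewhere.
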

\begin{proof}
It is easy to verify that $\Atb$ is the oracle answers in $D^{f_{U,V}}$. Hence, to obtain the claim we need to  show that $(U,V)$ is uniformly distributed over $\cU
\times \cV$. The definition of Algorithm \ref{alg:GenCoupling} assures that $U$ is uniformly distributed over $\cU$, so it is left to show that conditioned on any
fixing $u$ of $U$, the value of $V$ is uniformly distributed over $\cV$.

In the following we condition on $U=u \in \cU$. For an answers vector $\wb\in\cR^k$, let $\qb_\wb$ [\resp $\qb_\wb^+$] be the first $k$ [\resp $k+1$] queries asked
by $\mD$, assuming that it gets $\wb$ as the first $k$ answers (since $\mD$ is deterministic these values are well defined). Let $\cS_\wb = \set{v \in \cV \colon
f_{u,v}(\qb_\wb) = \wb}$ and let $W = \set{\wb \in \cR^\ast \colon \size{\wb} \leq t \wedge (\qb_\wb,u) \notin \Bad}$. If $\emptystring \notin W$, it follows that
$(\emptystring ,u)\in\Bad$, and thus Algorithm \ref{alg:GenCoupling} chooses $v$ at Step \ref{step:trivialCase}. Hence $V$ is uniformly distributed over $\cV$. In
case $\emptystring \in W$, we conclude the proof by applying the following claim (proven below) with $\wb = \emptystring$ (note that $\cS_{\emptystring} = \cV$).

\begin{claim}\label{claim:UniInNode}
Conditioned on $\Atb_{1,\dots,i} = \wb \in W$ for some $i\in \set{0,,\dots, t}$, the value of $V$ is uniformly distributed over $\cS_\wb$.
\end{claim}
\end{proof}

\begin{proof}[Proof of \cref{claim:UniInNode}]
We prove by reverse induction on $i = \size{\wb}$. For the base case $i=t$, we note that (by definition) Algorithm \ref{alg:GenCoupling} chooses $v$ at Step \ref{step:last}, and thus $V$ is
uniformly distributed over $\cS_\wb$. In the following we assume the hypothesis holds for $i+1$, and condition on $\Atb_{1,\ldots,i} = \wb \in W$. In case $(\qb^{+}_\wb,u) \in \Bad$, Algorithm \ref{alg:GenCoupling} chooses $v$ at Step
\ref{step:Bad}, and thus $V$ is uniformly distributed over $\cS_\wb$. So it is left to handle the case $(\qb^{+}_\wb,u) \notin \Bad$.

Fix $v'\in\cS_\wb$ and let $a\in\cR$ be such that $v'\in\cS_{\wb\concat a}$, where `$\concat$' denotes vector concatenation (\ie for $\wb = (w_1, \ldots, w_i)$,
$\wb\concat a = (w_1, \ldots, w_i, a)$). Conditioning on $A'_{i+1}=a$, we can apply the induction hypothesis on $\wb\concat a$ (since $\wb\concat a \in W$) to get
that $V$ is uniformly distributed over $\cS_{\wb\concat a}$. It follows that
\begin{align*}
\Pr[V=v' \mid \Atb_{1,\ldots,i} = \wb] &= \Pr[A'_{i+1}=a\mid \Atb_{1,\ldots,i} = \wb]\cdot\Pr[v=v'\mid \Atb_{1,\ldots,i+1} = \wb \concat a] \\
&= \frac{1}{\size{R}} \cdot \frac{1}{\size{\cS_{\wb\concat a}}}\\
 &= \frac{1}{\size{R}} \cdot \frac{\size{\cR}^{\size{\wb} + 1}}{\size{\cV}}\\
 &= \frac{\size{\cR}^{\size{\wb}}}{\size{\cV}} = \frac{1}{\size{\cS_\wb}},
\end{align*}
concluding the induction step. The second equality holds by the induction hypothesis, and for the third one we note that
\begin{align}\label{eq:SWSize}
\frac{\size{\cS_{\wb'}}}{\size{\cV}} = \Pr_{v\la\cV}[v \in \cS_{\wb'}] = \Pr_{v\la\cV}[f_{u,v}(\qb_{\wb'}) = {\wb'}] = \frac{1}{\size{\cR}^{\size{{\wb'}}}},
\end{align}
for every ${\wb'}\in W$, where the third equality of \cref{eq:SWSize} holds by the first property of $\RandFunc$ (as stated in \cref{lemma:RandPiClose}).
\end{proof}

\section{Hardness-Preserving PRG to PRF Reductions}\label{sec:PRGImprove}
Another application of our technique is a hardness-preserving construction of PRFs from pseudorandom generators (PRGs). For instance, constructing $2^{c'n}$-PRF for some $0<c'<c$, from a $2^{cn}$-PRG. The efficiency of such
constructions is measured by the number of calls made to the underlying
PRG as well as other parameters such as representation size.

The construction of \citet{GoldreichGoMi86} (\ie \GGM) is in fact hardness preserving according to the above criterion. Their construction, however, makes $n$ calls to the underlying PRG,
which might be too expensive in some settings. 
\begin{proposition}[\cite{GoldreichGoMi86}]\label{prop:GGM}
Let $G$ be a length-doubling $(t,\eps)$-PRG whose evaluation time is
$e_G$. For any  efficiently-computable integer functions
$m$ and $\ell$, there exists an efficient oracle-aided function family ensemble whose $n$'th function family, denoted
$\GGM_{m(n)\rightarrow\ell(n)}^G$, maps strings of length $m(n)$ to strings of length $\ell(n)$, makes
$m(n)$ calls to $G$ and is a $(q, t - m \cdot q  \cdot e_G(\ell), m\cdot q \cdot \eps(\ell))$-PRF for any integer function $q$.\footnote{$\GGM_{m(n)\to\ell(n)}^G$ is a variant of the standard $\GGM$ function family, that on input of length $m(n)$ uses seed of length $\ell(n)$ for the underlying generator, rather than seed of length $m(n)$. Formally, $\GGM_{m\to\ell}^G$ is the function family ensemble $\set{\GGM^G_{m(n)\to\ell(n)}}_{n\in \N}$, where $\GGM^G_{m(n)\to\ell(n)} = \set{f_r}_{r\in \zo^{\ell(n)}}$, and for $r\in \zo^{\ell(n)}$, the oracle-aided function $f_r\colon \zo^{m(n)} \mapsto \zo^{\ell(n)}$ is defined as follows: given oracle access to a length-doubling function $G$ and input $x\in \zo^{m(n)}$,  $f_r^G(x) = r_x$, where $r_x$ is recursively defined by $r_\eps =r$,  and, for a string $w$, $r_{w||0} || r_{w||1} = G(r_w)$. The original $\GGM$ construction was length-preserving, \ie $m(n)=\ell(n)=n$.}\end{proposition}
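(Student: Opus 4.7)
The plan is to prove this by the standard hybrid argument over the depth of the GGM binary tree, combined with a multi-instance PRG reduction at each level. Let $m = m(n)$ and $\ell = \ell(n)$, and fix a $(q, t - m\cdot q \cdot e_G(\ell))$-time adaptive distinguisher $\Dc$ for the claim. I will define hybrid oracles $H_0, \ldots, H_m$, where $H_i$ labels every node at depths $0, 1, \ldots, i$ of the tree independently and uniformly in $\zo^{\ell}$, and labels every node at depth $j > i$ as the appropriate left/right half of $G$ applied to its parent's label. Thus $H_0$ is (distributed as) $\GGM^G_{m\to\ell}$ invoked on a uniformly random key, while $H_m$ is a uniformly random function from $\zo^m$ to $\zo^\ell$.

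For each $i \in [m]$, I will bound $\size{\Pr[\Dc^{H_{i-1}}=1]-\Pr[\Dc^{H_i}=1]}$ by $q\cdot \eps(\ell)$ via a reduction to PRG security. The only difference between $H_{i-1}$ and $H_i$ is whether the labels of nodes at depth $i$ visited by $\Dc$ are pseudorandom (obtained by applying $G$ to their parent at depth $i-1$) or truly uniform. Since $\Dc$ asks at most $q$ queries, it visits at most $q$ distinct ancestors at depth $i-1$, so only at most $q$ fresh PRG outputs are relevant. I will build a $q$-instance PRG distinguisher $\mD'$ that, on input $z_1, \ldots, z_q \in \zo^{2\ell}$, simulates $\Dc$ by lazily assigning labels to the tree: whenever $\Dc$ queries a path, $\mD'$ labels ancestors at depths $0, \ldots, i-1$ uniformly on demand, uses the next unused $z_j$ as the pair of labels for the children (at depth $i$) of the newly visited ancestor at depth $i-1$, and for depths $i+1, \ldots, m$ extends downwards using $G$. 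Then $\mD'$ returns $\Dc$'s output. When each $z_j$ is $G(U_\ell)$ this simulates $H_{i-1}$; when each $z_j$ is uniform in $\zo^{2\ell}$ this simulates $H_i$. Converting this $q$-instance distinguisher into a single-instance PRG distinguisher by the standard hybrid over instances (exactly as in \cref{lemma:TwoOracle}) incurs a multiplicative loss of $q$, yielding distinguishing advantage at most $q\cdot\eps(\ell)$.

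The running time of the single-instance distinguisher obtained above is the time of $\Dc$ plus the simulation overhead: each of the $q$ queries of $\Dc$ traverses a path of length $m$ in the tree, each step performing at most one $G$-evaluation (for depths $> i$) plus $O(\ell)$ bookkeeping to look up or sample node labels. Hence the total simulation overhead is at most $m\cdot q\cdot e_G(\ell)$, so the overall running time fits within $t(\ell)$, and the PRG assumption gives the per-level bound $q\cdot\eps(\ell)$. Summing the $m$ hybrid gaps via the triangle inequality yields
\begin{align*}
\size{\Pr[\Dc^{H_0}=1]-\Pr[\Dc^{H_m}=1]} \le m\cdot q \cdot \eps(\ell),
\end{align*}
which is exactly the claimed bound.

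The main obstacle will be book-keeping the simulation carefully so that (i) the tree is represented lazily (never materializing its $2^m$ nodes, only the $O(m\cdot q)$ visited along queried paths), (ii) the $q$ PRG-challenge strings are consumed in a way that remains consistent with $\Dc$'s adaptive queries (i.e., each newly visited depth-$(i-1)$ ancestor gets the next challenge, and repeated visits reuse the same one), and (iii) the accounting of evaluation time and of the multi-to-single instance reduction matches the stated parameters. Once these are set up, the rest is a routine telescoping sum.
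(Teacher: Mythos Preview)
The paper does not prove this proposition; it merely states it as a known result attributed to \cite{GoldreichGoMi86} (with the construction spelled out in the footnote) and uses it as a black box in \cref{cor:PPPRGImprov}. Your proposal is the standard GGM hybrid argument and is correct; the only minor point to be careful about is that the time budget must also absorb the overhead of the multi-instance-to-single-instance PRG hybrid at each level, but since at level $i$ the tree simulation needs only $(m-i)\cdot q$ calls to $G$ while the instance hybrid adds at most $q$ further calls, the total stays within $m\cdot q\cdot e_G(\ell)$ as you claim.
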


As already mentioned in the introduction, in order to reduce the number of calls to the underlying PRG,
\citet{Levin87} suggested to first hash the input to a smaller domain, and only
then apply \GGM (this is known as ``Levin's trick''). The resulting
construction, however, is not hardness preserving due to the ``birthday attack''
described in \cref{sec:intro}.

While the \GGM construction seems optimal for the security it achieves (as shown in \cite{JainPT12}), in some settings the number of queries the distinguisher can
make is \emph{strictly less} than its running time. Consider a distinguisher of running time $2^{cn}$ that can only make $2^{\sqrt{n}} \ll 2^{cn}$
queries. In such settings the security of the \GGM construction seems like an overkill and raises the question of whether there exist more efficient reductions. \citet{JainPT12}
(who raised the above question) gave the following partial answer, by designing
a domain extension method tailored to PRG to PRF reductions for a specific range
of parameters.

\begin{theorem}[\cite{JainPT12}]\label{thm:JainInf}
  Let $G$ be a length-doubling $2^{cn}$-PRG.
Let $c>0$, $1/2\leq\alpha<1$ and $q(n)=2^{n^\alpha}$. There exists a length-preserving function family $\JPTCon^G$ that on input of length $n$ makes $O(\log(q(n)))=O(n^\alpha)$ calls to $G$ and is a $(q(n),2^{c'n},2^{c'n})$-PRF for every $0<c'<c$.
\end{theorem}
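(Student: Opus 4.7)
The plan is to combine the standard \GGM construction with the cuckoo-hashing-based domain extension of Theorem~\ref{thm:PPDomainExtension}: use \GGM only on a \emph{small} domain so that the number of calls to $G$ is roughly $\log q(n)$, then extend the domain to $\zn$ by $\PP(\cH,\cG,\cdot)$, whose extension step requires no additional calls to $G$. This trades a larger key (for the $k$-wise independent families) for a dramatic reduction in the number of $G$-calls, which is exactly the flavor hinted at in Section~\ref{sec:Intro:DomainExtension} (``improves some parameters of \citet{JainPT12}, but requires longer keys'').

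Concretely, I would set $m = m(n) := n^\alpha + 2$ so that $q(n) = 2^{n^\alpha} \le 2^{m(n)-2}$, and let $F = \GGM_{m\to n}^G$ be the small-domain \GGM family with input length $m$, key/output length $n$. By Proposition~\ref{prop:GGM} and the assumption that $G$ is a $2^{cn}$-PRG, $F$ is a $(q,\, 2^{cn} - m q\cdot e_G(n),\, m q \cdot 2^{-cn})$-PRF that makes $m(n) = O(n^\alpha) = O(\log q(n))$ calls to $G$. I would then define
\begin{align*}
\JPTCon^G \eqdef \PP(\cH, \cG, F),
\end{align*}
where $\cH = \set{\cH_n\colon \zn \to \zo^{m(n)}}_{n\in\N}$ and $\cG = \set{\cG_n\colon \zn \to \zn}_{n\in\N}$ are efficient $k(n)$-wise independent function family ensembles with $k(n) = \Theta(n)$ (using Fact~\ref{fact:k-indep}). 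Since each evaluation of $\JPTCon^G$ triggers exactly two evaluations of $F$, the total number of calls to $G$ per evaluation is $2m(n) = O(\log q(n)) = O(n^\alpha)$, as desired.

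Applying Theorem~\ref{thm:PPDomainExtension} to $F$, $\cH$, $\cG$ with the parameters above, $\JPTCon^G$ is a $(q,\, t - p q,\, 2\eps + q/2^{\Omega(k)})$-PRF, where $\eps = m q\cdot 2^{-cn}$, $t = 2^{cn} - m q\cdot e_G(n)$, and $p = \poly(n)$ is the combined evaluation time of $\cH$, $\cG$, and $F$. For any fixed $c' < c$, the ``cryptographic'' error is $2\eps \le 2 m q\cdot 2^{-cn} = 2\cdot (n^\alpha+2)\cdot 2^{n^\alpha - cn} \ll 2^{-c'n}$ since $\alpha<1$, while the ``combinatorial'' error is $q/2^{\Omega(k)} = 2^{n^\alpha - \Omega(n)} \ll 2^{-c'n}$ because $k(n) = \Theta(n)$; and the running-time budget is $t - pq \ge 2^{cn} - \poly(n)\cdot 2^{n^\alpha} \ge 2^{c'n}$ for large enough $n$. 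Combining these three bounds yields the claimed $(q(n),\, 2^{c'n},\, 2^{-c'n})$-PRF (correcting what appears to be a sign typo in the statement).

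The main obstacle is a parameter-balancing one: the independence $k$ must be high enough to push the $\PP$-error $q/2^{\Omega(k)}$ below $2^{-c'n}$ (forcing $k = \Theta(n)$, which is what drives the enlarged key), yet the hash evaluation must remain cheap enough (namely $\poly(n)$-time, by Fact~\ref{fact:k-indep}) to be dwarfed by the distinguisher's $2^{c'n}$ budget; simultaneously, $m(n)$ must be small enough to keep the number of $G$-calls at $O(\log q)$ but large enough that $q \le 2^{m-2}$ so the domain-extension hypothesis of Theorem~\ref{thm:PPDomainExtension} applies. The choices $m = n^\alpha + 2$ and $k = \Theta(n)$ thread this needle precisely because $\alpha < 1$ separates $n^\alpha$ from $c'n$ by a linear gap in $n$.
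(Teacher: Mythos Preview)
Your argument is correct as a proof of the \emph{existence} claim, but note two things about its relationship to the paper.

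First, the paper does not prove \cref{thm:JainInf} at all: it is quoted from \cite{JainPT12} as prior work, and $\JPTCon$ there denotes the specific construction of \citeauthor{JainPT12}, which is \emph{not} $\PP(\cH,\cG,\GGM^G_{m\to n})$. Their construction uses a different hashing step (pairwise-independent hashing plus a twisted-tree variant of \GGM), and it is precisely that hashing step that forces the restriction $\alpha\ge 1/2$. So while your write-up establishes the theorem's conclusion, it does so via a different family than the one the theorem is nominally about.

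Second, what you have actually written is, essentially verbatim, the paper's own \cref{cor:PPPRGImprov} and its proof: same choice $m(n)=\lceil n^\alpha\rceil+2$, same $k(n)=\Theta(n)$, same invocation of \cref{prop:GGM} followed by \cref{thm:PPDomainExtension}, and the same asymptotic bookkeeping to push both error terms below $2^{-c'n}$ and the time above $2^{c'n}$. The paper presents this as an \emph{improvement} over \cref{thm:JainInf}, because (as your own calculation shows) nothing in the argument uses $\alpha\ge 1/2$; it works for any $0<\alpha<1$. In other words, you have not reproduced the cited proof of \cref{thm:JainInf} but rather rediscovered the paper's stronger \cref{cor:PPPRGImprov}, at the price the paper itself flags: the $\Theta(n)$-wise independence makes the key size $\Theta(n^2)$, larger than the $\Theta(n\log q)$ of the original $\JPTCon$.
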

A restriction of \cref{thm:JainInf} is that it dictates that the resulting PRF
family makes \emph{at least} $\Omega(\sqrt{n})$ calls to the underlying PRG
(since $1/2\leq\alpha<1$). We note that the restriction that $\alpha>1/2$ (and hence $q(n)>
2^{\sqrt{n}}$) in the construction of \cite{JainPT12} is inherent due to their
hashing technique (and is not a mere side-effect of the parameters above). 

Using better hashing constructions (based on cuckoo hashing) yields a more versatile version of the above theorem, that  in particular allows $\alpha$ to be arbitrary. Specifically, combining \cref{prop:GGM} with \cref{thm:PPDomainExtension} yields the following result.
\begin{corollary}\label{cor:PPPRGImprov}
Let $G$ be a length-doubling $(t,\eps)$-PRG.
Let $\cH= \set{\cH_n \colon \zn \mapsto \zo^{m(n)}}_{n\in\N}$ and $\cG = \set{\cG_n \colon \zn \mapsto \zn}_{n\in\N}$ be efficient $k(n)$-wise independent function
family ensembles. Let $q(n)\leq
2^{m(n)-2}$. 

Then, the length-preserving function family ensemble $\set{\PP(\cH_n,\cG_n,\GGM^G_{m(n)\to n})}_{n\in\N}$, when invoked on input of length $n$, makes
$m(n)$ calls to $G$ and is a $(q, t - p \cdot m\cdot q ,2m\cdot q \cdot
\eps + q/2^{\Omega(k)})$-PRF, where $p(\cdot)$ is a polynomial determined by the evaluation and sampling time of $\cH$, $\cG$ and $G$.

In particular, for $c>0$, $0<\alpha<1$, $t(n)=2^{cn}$, $\eps(n)=1/t(n)$,
  $q(n) = 2^{n^\alpha}$, $m(n)=\Theta(\log (q(n)))$ and $k(n)=\Theta(n^\alpha + cn)$, the function family
  $\PP(\cH,\cG,\GGM^G_{m(n)\to n})$ makes $m(n)=O(n^\alpha)$ calls to $G$ and is a
  $(q(n),2^{c'n},2^{c'n})$-PRF, for every $0<c'<c$.
\end{corollary}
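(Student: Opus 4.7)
The plan is to derive the corollary by directly composing the classical \GGM construction (\cref{prop:GGM}) with the cuckoo-hashing based domain extension (\cref{thm:PPDomainExtension}). First I would invoke \cref{prop:GGM} on the given length-doubling $(t,\eps)$-PRG $G$ at input length $m(n)$ and seed length $n$, obtaining that $\GGM^G_{m(n)\to n}$ is a $\paren{q,\, t - m\cdot q\cdot e_G,\, m\cdot q\cdot \eps}$-PRF from $\zo^{m(n)}$ to $\zn$ that makes $m(n)$ calls to $G$ and has evaluation time $O(m\cdot e_G)$.

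Next I would feed this family into the $\PP$ construction as the underlying PRF, using $\cH\colon \zn\to\zo^{m(n)}$ and $\cG\colon \zn\to\zn$ as the hash families, so that $\PP(\cH,\cG,\GGM^G_{m(n)\to n})$ maps $\zn$ to $\zn$ and is therefore length preserving. All hypotheses of \cref{thm:PPDomainExtension} are satisfied: $\cH$ and $\cG$ are $k(n)$-wise independent by assumption, and the bound $q(n)\leq 2^{m(n)-2}$ is exactly the precondition of the theorem. Applying the theorem with these parameters yields that $\PP(\cH,\cG,\GGM^G_{m(n)\to n})$ is a $\paren{q,\, t - m\cdot q \cdot e_G - p'\cdot q,\, 2m\cdot q\cdot \eps + q/2^{\Omega(k)}}$-PRF, where $p'$ collects the evaluation and sampling times of $\cH$, $\cG$, and $\GGM^G_{m(n)\to n}$. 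Since the $\GGM$ evaluation time is itself $O(m\cdot e_G)$, both time subtractions are of the form $\poly(n)\cdot m\cdot q$, and can be absorbed into a single polynomial $p$ depending on $\cH,\cG,G$ to give the claimed time bound $t - p\cdot m\cdot q$. The call count is as stated (up to the constant factor $2$ inherent in $\PP$): two invocations of the underlying PRF, each making $m(n)$ calls to $G$.

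For the ``in particular'' clause, I would plug in $t(n)=2^{cn}$, $\eps(n)=2^{-cn}$, $q(n)=2^{n^\alpha}$, $m(n)=\Theta(\log q(n))=\Theta(n^\alpha)$, and $k(n)=\Theta(n^\alpha+cn)$, and verify that each piece of the error is dominated by $2^{c'n}$ for the fixed $0<c'<c$ and large enough $n$. The information-theoretic term is $q/2^{\Omega(k)} \leq 2^{n^\alpha}/2^{\Omega(n^\alpha + cn)}$, which is comfortably below $2^{c'n}$ once the constant hidden in $k$ is taken large enough. The computational term is $2m\cdot q\cdot \eps = O(n^\alpha)\cdot 2^{n^\alpha - cn}$, also at most $2^{c'n}$ since $\alpha<1$ and $c'<c$. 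Finally, the running-time slack $t - p\cdot m\cdot q = 2^{cn} - \poly(n)\cdot 2^{n^\alpha}$ dominates $2^{c'n}$. I do not anticipate any real obstacle: both ingredients are plug-in results, and the only point requiring care is choosing the constant in $k(n)=\Theta(n^\alpha+cn)$ so that $q/2^{\Omega(k)}$ beats $2^{c'n}$, which is immediate. Crucially, because our hashing construction is not constrained by the birthday barrier, $\alpha$ may be taken arbitrarily small, in contrast to the $\alpha\ge 1/2$ restriction of \cref{thm:JainInf}.
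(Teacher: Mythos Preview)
Your proposal is correct and follows exactly the paper's approach: compose \cref{prop:GGM} with \cref{thm:PPDomainExtension} for the general statement, then plug in the specific parameters and verify each term for the ``in particular'' clause (the paper in fact only writes out the latter, taking the composition for granted). One cosmetic point: the error bound $2^{c'n}$ in the statement is a typo for $2^{-c'n}$ (the paper's own proof explicitly shows $\eps'(n)<2^{-c'n}$), so your verification that each error term is ``below $2^{c'n}$'' is vacuous as written---but your actual computations already establish the intended $2^{-c'n}$ bound.
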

\begin{proof}
We prove the ``In particular'' part of the corollary. Set $q(n)=2^{n^\alpha}$, $m(n)=\lceil n^\alpha\rceil + 2$, and $\cH$ and $\cG$ to be $k(n)$-wise independent for $k(n)=\Theta(n^\alpha+cn)$, with an appropriate constant, such that $\frac{q(n)}{2^{\Omega(k(n))}} < 2^{-cn}$.
Let $t' = t - p \cdot m\cdot q$ and $\eps' = 2m\cdot q
\cdot \eps + q/2^{\Omega(k)}$. By the first part of the corollary we get that
$\PP(\cH,\cG,\GGM^G_{m(n)\to n})$ makes $m(n)$ calls to $G$ and is a
$(q,t',\eps')$-PRF. Next, we show that $t(n) > 2^{c'n}$ and $\eps(n)<2^{-c'n}$ for large enough $n$.

Let $c''\in\N$ such that $n^{c''} > p(n)$ for large enough $n$. It follows that $t(n) > 2^{cn} - n^{c''}2^{n^\alpha}(n^\alpha+2)$ and $\eps(n) < 2^{1+\log(n^\alpha+2)+n^\alpha-cn} + 2^{-cn}$. Hence, for every $c'<c$, we have $\eps(n)<2^{-c'n}$ and $t(n)>2^{c'n}$ for large enough $n$, as required.
\end{proof}

\paragraph{Comparison with the \citeauthor{JainPT12} reduction}
The advantage of \cref{cor:PPPRGImprov} is that when the adversaries are allowed to make less than $2^{\sqrt{n}}$ queries, the number of calls to the PRG is reduced accordingly, and below $O(\sqrt{n})$ calls. This improves upon the function family $\JPTCon$, that for such adversaries must make at least $O(\sqrt{n})$ calls to the PRG.

The function family \JPTCon, however, might have shorter description
(key) and evaluation time. Specifically, let $q$ denote the number of queries the adversaries are
allowed to make (\ie $q=2^{n^\alpha}$). According to \cref{cor:PPPRGImprov},
the parameter  $k$ (the independence required) needs to be set to
$\Theta(n)$. Hence, by \cref{fact:k-indep} it takes $\Theta(n^2)$ bits  to
describe a function in $\PP(\cH,\cG,\GGM^G_m)$. The evaluation time of a single
call to $\PP(\cH,\cG,\GGM^G_m)$ is $\Theta(\log q\cdot e_G + e_{(n)})$, where
$e_G$ is the evaluation time of $G$ and $e_{(k)}$ is the evaluation time of a $k$-wise independent function. In
comparison, it takes $\Theta(\log q\cdot n)$ bits to describe a member in
\JPTCon, and its evaluation time is $\Theta(\log q\cdot e_G
 + e_{(\log q)} )$. This is summarize in \cref{table2}.


\begin{table}
    \centering
    \begin{tabular}{ | l | l | l | l |}
      \hline
      Family & \#queries limitation & description (key) size & evaluation time \\ \hline
      $\JPTCon$ \cite{JainPT12} & $2^{n^{1/2}} < q < 2^n$ & $\Theta(\log q\cdot
      n)$ & $\Theta(\log q\cdot e_G + e_{(\log q)} )$ \\ \hline
      $\PP(\cH,\cG,\GGM^G_m)$ &  $0 < q < 2^n$ & $\Theta(n^2)$ & $\Theta(\log q\cdot e_G + e_{(n)})$ \\ \hline
    \end{tabular}
    \caption{Comparison between the family $\JPTCon$ in \cref{thm:JainInf}
      to $\PP(\cH,\cG,\GGM^G_m)$ in \cref{cor:PPPRGImprov}. The notation
      $e_{(k)}$ in the table refers to the evaluation time of a $k$-wise independent function.}
    \label{table2}
\end{table}

\paragraph{Independent work.} Independently and concurrently with this work, \citet{ChandranG14} showed that a
variant of the construction of \cite{JainPT12} achieves similar security
parameters to \cite{JainPT12} and also works for $2^{n^\alpha}$ queries for any
$0<\alpha<1/2$. The construction of \cite{ChandranG14}, however, outputs
only $n^{2\alpha}$ bits, as opposed to $n$ bits in the construction of
\cite{JainPT12} and in our construction.

\end{document}